\newtheorem{theorem}{Theorem}[section]
\newtheorem{proposition}[theorem]{Proposition}
\newtheorem{corollary}[theorem]{Corollary}
\newtheorem{remark}[theorem]{Remark}
\newtheorem{lemma}[theorem]{Lemma}
\newtheorem{example}[theorem]{Example}
\newtheorem{definition}[theorem]{Definition}
\numberwithin{equation}{section}
\begin{document}
	\title{$n$-Lie conformal algebras and its associated infinite-dimensional $n$-Lie algebras}
	\author{Mengjun Wang$^{1}$, Lipeng Luo$^{2}$ and Zhixiang Wu$^{3}$}
	\address{\textsuperscript{1}School of  Mathematical  Sciences, Zhejiang University, Hangzhou, 310027, P.R.China.} 
	\address{\textsuperscript{2}School of Mathematical Sciences, Tongji University, Shanghai, 200092, P.R.China.} 
	\address{\textsuperscript{3}School of  Mathematical  Sciences, Zhejiang University, Hangzhou, 310027, P.R.China.} 
	\email{\textsuperscript{1}wangmengjun@zju.edu.cn}
	\email{\textsuperscript{2}luolipeng@tongji.edu.cn}
	\email{\textsuperscript{3}wzx@zju.edu.cn}
	\thanks{This work was supported by the National Natural Science Foundation of China (No. 11871421, 12171129) and the Zhejiang Provincial Natural Science Foundation of China (No. LY20A010022) and the Fundamental Research Funds for the Central Universities (No. 22120210554).}
	\subjclass[2010]{17B10; 17B56; 81R50; 81T40}

\keywords{$n$-Lie conformal algebras, Lie conformal algebras, Filippov algebras, cohomology}
\footnote{The second author is the corresponding author: Lipeng Luo (luolipeng@tongji.edu.cn)}
\begin{abstract}
In this paper, we introduce a $\{\lambda_{1\to n-1}\}$-bracket and a distribution notion of an $n$-Lie conformal algebra. For any $n$-Lie conformal algebra $R$, there exists a series of associated infinite-dimensional linearly compact $n$-Lie algebras $\{(\mathscr{L}ie_p\mbox{ }R)_\_\}_{(p\ge1)}$. We show that torsionless finite $n$-Lie conformal algebras $R$ and $S$ are isomorphic if and only if $(\mathscr{L}ie_p\mbox{ }R)_\_\simeq (\mathscr{L}ie_p\mbox{ }S)_\_$ as linearly compact $n$-Lie algebras with $\partial_{t_i}$-action for any $p\ge1$. Moreover, the representation and cohomology theory of $n$-Lie conformal algebras are established. In particular, the complex of $R$ is isomorphic to a subcomplex of $n$-Lie algebra $(\mathscr{L}ie_p\mbox{ }R)_\_$.
	\end{abstract}
\maketitle
\baselineskip=16pt
	\section{Introduction}
	For an element $x$ of a Lie algebra $(\mathfrak{g},[\cdot,\cdot])$, denote $ad_x$ as the induced map $ad_x(y)=[x,y],y\in \mathfrak{g}$. Then the Jacobi identity condition is equivalent to the condition that $ad_x$ are derivations of $\mathfrak{g}$. As a generalization of Lie algebra, an $n$-Lie algebra is a space $\mathfrak{g}$ endowed with an $n$-linear map $[\cdot,\cdot,\cdots,\cdot]:\wedge^n \mathfrak{g}\to\mathfrak{g}$ satisfying every $ad_X$ is a derivation of $\mathfrak{g}$ for $X=x_1\wedge\cdots\wedge x_{n-1}\in \wedge^{n-1} \mathfrak{g}$, where $ad_X(y)=[x_1,\cdots,x_{n-1},y],y\in \mathfrak{g}$.	This notion is derived from ternary Lie algebras introduced by Nambu \cite{y} to generalize the classical Hamiltonian mechanics. In \cite{f}, Filippov introduced the theory of $n$-Lie algebras, which is the algebraic structure corresponding to Nambu mechanics. Hence $n$-Lie algebras are also called Filippov algebras. So far, the theory of $n$-Lie algebras has been developed by \cite{ai2,ct,t}, and widely used in many areas of both mathematics and physics \cite{bgs,h,st,v}. A more detailed review can be seen in \cite{ai1}.

On the other hand, from a Lie algebra $\mathfrak{g}$ one can obtain a Lie conformal algebra $Cur\mathfrak{g}$, and from a Lie conformal algebra $R$ one can get an associated infinite-dimensional Lie algebra $(Lie\ R)_{\_}$, which is called the annihilation algebra of $R$ \cite{ak}. Conformal algebras were introduced by Kac \cite{k} as an algebraic tool for studying properties of the operator product expansion (OPE) in two-dimensional conformal field theory. Kac et. al. further developed the structure theory \cite{ak} and cohomology theory \cite{bkv} of conformal algebras, and generalized them to pseudoalgebras \cite{bdk}. Based on Kac's fundamental works, a lot of related researchs have emerged \cite{bk,lhw,s}. In this article, we establish the structure and cohomology theory of $n$-Lie conformal algebras which are an amalgamation of $n$-Lie algebras and conformal algebras, and reveal the relation of $n$-Lie conformal algebras and their associated $n$-Lie algebras.
	
	The paper is organized as follows. In section \ref{s2}, we introduce the $\{\lambda_{1\to n-1}\}$-brackets and distribution notion of an $n$-Lie conformal algebra.  Some concrete examples are given. We find that there has no non-commutative $n$-Lie conformal algebras of rank one when $n\ge3$. But there are some non-commutative  $n$-Lie conformal algebras of rank two. In section \ref{s3}, we construct the annihilation algebras $(\mathscr{L}ie_p\mbox{ }R)_\_$ of an $n$-Lie conformal algebra $R$, and study stucture and properties of these $n$-Lie algebras. We show that an isomorphism between annihilation algebras determines an isomorphism between torsionless finite $n$-Lie conformal algebras. The representation and basic cohomology of $n$-Lie conformal algebras are given in section \ref{s4}. And we show not only every $R$-module carries a $(\mathscr{L}ie_p\mbox{ }R)_\_$-module structure, but also the complex of $R$ is isomorphic to a subcomplex of the $n$-Lie complex of $(\mathscr{L}ie_p\mbox{ }R)_\_$. 
	
	Throughout this paper, we work over the complex field $\mathbb{C}$. For the sake of brevity and readability, we sometimes use symbols with index $1\to m$ to mean enumerating the objects from $1$ to $m$. For example, $\lambda_{1\to n-1}$ means $\lambda_1,\lambda_2,\cdots,\lambda_{n-1}$ and $k^{1\to p}_{1\to q}$ means $k^1_1,k^1_2,\cdots,k^1_q,\cdots,k^p_1,\cdots,k^p_q$. If there is an arithmetic sign before the symbols with index $1\to m$, it means the operation is carried out on every enumerated item. $\lambda+\lambda_{1\to n-1}$, for instance, means $\lambda+\lambda_1+\cdots+\lambda_{n-1}$. 
	
	\section{$n$-Lie conformal algebras}\label{s2}
	In this section, we introduce the two equivalent definitions of $n$-Lie conformal algebras. The distribution notion has advantages in attaching to $n$-Lie algebras, while it is more convenient to compute with $\{\lambda_{1\to n-1}\}$-bracket. Although non-abelian Lie conformal algebras of rank one, i.e. Virasoro conformal algebras \cite{ak}, are of significance to the study of Lie conformal algebras, we find that there has no non-commutative $n$-Lie conformal algebras of rank one when $n\ge3$. Then the question arises in mind whether all $n$-Lie conformal algebras with rank lower than $n-1$ are commutative. We negate it by classifying a special class of $n$-Lie conformal algebras of rank two.

	\begin{definition}
		\begin{em}
			An \textit{$n$-Lie conformal algebra} is a $\mathbb{C}[\partial]$-module R equipped with a $\{\lambda_{1\to n-1}\}$-bracket satisfying axioms (C1)-(C4).
			
			(C1) $[{a^1} _{\lambda_1}{a^2}_{\lambda_2}{\cdots}_{\lambda_{n-1}}a^n]\in \mathbb{C}[\partial,\lambda_1,\lambda_2,\cdots,\lambda_{n-1}]\otimes_{\mathbb{C}[\partial]}R$,
			
			(C2) $[{a^1} _{\lambda_1}{\cdots}_{\lambda_{i-1}}{\partial a^i}_{\lambda_i}{\cdots}_{\lambda_{n-1}}a^n]=-\lambda_i[{a^1} _{\lambda_1}{a^2}_{\lambda_2}{\cdots}_{\lambda_{n-1}}a^n]$ for $i=1,2,\cdots,n-1$,
			
			\qquad\mbox{ }$[{a^1} _{\lambda_1}{\cdots}_{\lambda_{n-2}}{a^{n-1}}_{\lambda_{n-1}}\partial a^n]=(\partial+\lambda_1+\lambda_2+\cdots+\lambda_{n-1})[{a^1} _{\lambda_1}{a^2}_{\lambda_2}{\cdots}_{\lambda_{n-1}}a^n]$,
			
			(C3) $[{a^1} _{\lambda_1}\cdots{ a^i}_{\lambda_i}{ a^{i+1}}_{\lambda_{i+1}}{\cdots}_{\lambda_{n-1}}a^n]=-[{a^1} _{\lambda_1}\cdots{ a^{i+1}}_{\lambda_{i+1}}{ a^i}_{\lambda_i}{\cdots}_{\lambda_{n-1}}a^n]$ for $i=1,2,\cdots,n-2$,
			
			\qquad\mbox{ }$[{a^1} _{\lambda_1}\cdots{ a^{n-1}}_{\lambda_{n-1}}a^n]=-[{a^1} _{\lambda_1}\cdots{ a^n}_{-\partial-\lambda_1-\cdots-\lambda_{n-1}}a^{n-1}]$,
			
			(C4) $\sum\limits_{i=1}^{n}(-1)^{n-i}[{a^1} _{\lambda_1}\cdots{ a^{i-1}}_{\lambda_{i-1}}{ a^{i+1}}_{\lambda_{i+1}}{\cdots}_{\lambda_{n-1}}{a^n}_{\lambda_n}[{ a^i}_{\lambda_i}{b^2}_{\lambda_{n+1}}{\cdots}_{\lambda_{2n-2}}b^n]]$
			
			\qquad$=[[{a^1} _{\lambda_1}{a^2}_{\lambda_2}{\cdots}_{\lambda_{n-1}}a^n]_{\lambda_1+\cdots+\lambda_n}{b^2}_{\lambda_{n+1}}{\cdots}_{\lambda_{2n-2}}b^n]$,\\
			for all $a^1,a^2,\cdots,a^n,b^2,\cdots,b^n \in R$. An $n$-Lie conformal algebra $R$ is said to be \textit{finite} if $R$ is a finitely generated $\mathbb{C}[\partial]$-module. The \textit{rank} of $n$-Lie conformal algebra $R$ is its rank as $\mathbb{C}[\partial]$-module.
		\end{em}	
	\end{definition}
	
A $2$-Lie conformal algebra is simply called Lie conformal algebra.
\begin{remark}
	Let $\{e_i\}$ be a set of generators of the $\mathbb{C}[\partial]$-module $ R $. Then the $\{\lambda_{1\to n-1}\}$-bracket of $R$ can be expressed as $$[{e_{i_1}} _{\lambda_1}{e_{i_2}}_{\lambda_2}{\cdots}_{\lambda_{n-1}}e_{i_n}]=\sum_{k}Q^{i_1,i_2,\cdots,i_n}_k(\lambda_1,\lambda_2,\cdots,\lambda_{n-1},\partial)e_k.$$ The $n$-Lie pseudoalgebra bracket on $R$ is defined by $$[e_{i_1}*e_{i_2}*\cdots*e_{i_n}]=\sum_{k}P^{i_1,i_2,\cdots,i_n}_k(\partial\otimes1\otimes\cdots\otimes1,1\otimes\partial\otimes1\otimes\cdots\otimes1,\cdots,1\otimes\cdots\otimes1\otimes\partial)e_k,$$ where $P^{i_1,i_2,\cdots,i_n}_k(x_1,x_2,\cdots,x_n)=Q^{i_1,i_2,\cdots,i_n}_k(-x_1,-x_2,\cdots,x_1+x_2+\cdots+x_n)$.
	Replacing the $\sigma_i$ of Filippov identity defined in \cite{sw} by $\sigma_i^{-1}$, we obtain the corresponding $n$-Lie $H$-pseudoalgebra structure on $R$.
\end{remark}
 Obviously, we can define subalgebras, ideals, quotients, simple algebras and homomorphisms of $n$-Lie conformal algebras.

For $a^1,a^2,\cdots,a^n\in R$, set
	\begin{eqnarray}
	[{a^1} _{\lambda_1}{a^2}_{\lambda_2}{\cdots}_{\lambda_{n-1}}a^n]=\sum_{k_1,\cdots,k_{n-1}\in\mathbb{N}_+}{\lambda_1}^{(k_1)}\cdots{\lambda_{n-1}}^{(k_{n-1})}{a^1}_{(k_1)}{a^2}_{(k_2)}{\cdots}_{(k_{n-1})}a^n,
	\end{eqnarray} where ${\lambda_i}^{(k_i)}:=\frac{{\lambda_i}^{k_i}}{{k_i}!}$. Then we can get a family of $n$-linear products $\{{\bullet}_{(k_1)}{\bullet}_{(k_2)}{\cdots}_{(k_{n-1})}{\bullet}\}$ on $R$ and rephrase the above $n$-Lie conformal equivalent axioms.
	
	(C$1^\prime$) $ {a^1}_{(k_1)}{a^2}_{(k_2)}{\cdots}_{(k_{n-1})}a^n=0 $ for $k_1+k_2+\cdots+k_{n-1}\gg0$,
	
	(C2$^\prime$) ${a^1}_{(k_1)}{\cdots}_{(k_{i-1})}{\partial a^i}_{(k_i)}{\cdots}_{(k_{n-1})}a^n=-k_i{a^1}_{(k_1)}{\cdots}_{(k_{i-1})}{ a^i}_{(k_i-1)}{\cdots}_{(k_{n-1})}a^n$ for $i=1,\cdots,n-1$,
	
	\qquad\mbox{ }${a^1}_{(k_1)}{\cdots}_{(k_{n-1})}{\partial a^n}=\partial({a^1}_{(k_1)}{a^2}_{(k_2)}{\cdots}_{(k_{n-1})}a^n)+\sum_{i=1}^{n-1}k_i{a^1}_{(k_1)}{\cdots}_{(k_{i-1})}{ a^i}_{(k_i-1)}{\cdots}_{(k_{n-1})}a^n$,
	
	(C$3^\prime$) ${a^1}_{(k_1)}{a^2}_{(k_2)}{\cdots}_{(k_{n-1})}a^n=-{a^1}_{(k_1)}\cdots{a^{i+1}}_{(k_{i+1})}{a^i}_{(k_i)}{\cdots}_{(k_{n-1})}a^n$ for $i=1,\cdots,n-1$,
	
	\qquad$=-\sum\limits_{j,j_{1\to n-2}}(-1)^{j+ k_{1\to n-1}-j_{1\to n-2}}\tbinom{k_1}{j_1}\cdots\tbinom{k_{n-2}}{j_{n-2}} \partial^{(j)}{a^1}_{(j_1)}{a^2}_{(j_2)}\cdots{a^{n}}_{(j+ k_{1\to n-1}-j_{1\to n-2})}a^{n-1} $,
	
	(C$4^\prime$) $\sum\limits_{i=1}^{n}(-1)^{n-i}{a^1}_{(k_1)}\cdots{a^{i-1}}_{(k_{i-1})}{a^{i+1}}_{(k_{i+1})}{\cdots}_{(k_{n-1})}{a^n}_{(k_n)}({a^i}_{(k_i)}{b^2}_{(k_{n+1})}{\cdots}_{(k_{2n-2})}b^n)$
	
	\qquad$=\sum\limits_{j_{1\to n-1}}\tbinom{k_1}{j_1}\cdots\tbinom{k_{n-1}}{j_{n-1}}({a^1}_{(j_1)}{\cdots}_{(j_{n-1})}a^n)_{(+k_{1\to n}-j_{1\to n-1})}{b^2}_{(k_{n+1})}{\cdots}_{(k_{2n-2})}b^n$.
	
	\begin{example}
		Let $(\mathfrak{g},[\cdot,\cdots,\cdot])$ be an $n$-Lie algebra. Then $Cur\ \mathfrak{g}=\mathbb{C}[\partial]\otimes\mathfrak{g}$ carries an $n$-Lie conformal algebra structure given by $${g^1}_{(0)}{\cdots}_{(0)}g^n=[g^1,\cdots,g^n],\ {g^1}_{(k_1)}{\cdots}_{(k_{n-1})}g^n=0\mbox{ if }k_1+\cdots+k_{n-1}>0,$$ for $g^1,g^2,\cdots,g^n\in \mathfrak{g}$. The corresponding $\{\lambda_{1\to n-1}\}$-bracket is $$[{g^1}_{\lambda_1}{\cdots}_{\lambda_{n-1}}g^n]=[g^1,\cdots,g^n].$$ $Cur\ \mathfrak{g}$ is called the \textbf{current $n$-Lie conformal algebra} associated to $\mathfrak{g}$. It is not difficult to check that $Cur\ \mathfrak{g}$ is simple if and only if $\mathfrak{g}$ is simple and the isomorphism of $n$-Lie algebras corresponds to the isomorphism of associated current $n$-Lie conformal algebras. Since there exists only one finite-dimensional  simple $ n $-Lie algebra up to isomorphism \cite{Li}, finite simple current $n$-Lie conformal algebra is unique up to isomorphism.
	\end{example}

There is an $ n $-Lie algebra structure contained in the distribution notion of an $n$-Lie conformal algebra.
	\begin{lemma}\label{l1}
		Let $(R,\{{\bullet}_{(k_1)}{\bullet}_{(k_2)}{\cdots}_{(k_{n-1})}{\bullet}\})$ be an $n$-Lie conformal algebra. Then $ {\bullet}_{(0)}{\bullet}_{(0)}{\cdots}_{(0)}{\bullet} $ is an $n$-Lie bracket of $R$, and with respect to the $0$-th product $\partial R$ is an ideal of $R$ so that $R/\partial R$ is an $n$-Lie algebra.
	\end{lemma}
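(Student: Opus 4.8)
The plan is to read the whole statement off the distribution-form axioms (C1$'$)--(C4$'$) by specializing every lower index $k_j$ to $0$. Write $[a^1,\dots,a^n]_0:={a^1}_{(0)}{a^2}_{(0)}{\cdots}_{(0)}a^n$ for the candidate bracket; it is $\mathbb{C}$-multilinear because the $\{\lambda_{1\to n-1}\}$-bracket is. Putting all $k_j=0$ in the adjacent-transposition part of (C3$'$) shows that $[\cdot,\dots,\cdot]_0$ changes sign under each of the $n-2$ adjacent swaps of its first $n-1$ arguments, hence (these generate $S_{n-1}$) is totally antisymmetric in those arguments. Putting all $k_j=0$ in (C4$'$) collapses every coefficient $\tbinom{0}{j_\ell}$ to $\delta_{j_\ell,0}$, so both sides become single terms and (C4$'$) reduces to the nested identity
\begin{equation*}
\sum_{i=1}^{n}(-1)^{n-i}\,\big[a^1,\dots,\widehat{a^i},\dots,a^n,[a^i,b^2,\dots,b^n]_0\big]_0=\big[[a^1,\dots,a^n]_0,b^2,\dots,b^n\big]_0 .
\end{equation*}

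Next I would pin down $\partial R$. Putting all $k_j=0$ in (C2$'$) gives $[a^1,\dots,\partial a^i,\dots,a^n]_0=0$ for $i\le n-1$ and $[a^1,\dots,a^{n-1},\partial a^n]_0=\partial[a^1,\dots,a^n]_0$; in either case the value lies in $\partial R$, so $[\cdot,\dots,\cdot]_0$ sends any tuple with an entry in $\partial R$ into $\partial R$. Hence $\partial R$ is an ideal for the $0$-product, and $[\cdot,\dots,\cdot]_0$ descends to a well-defined $\mathbb{C}$-multilinear bracket on $R/\partial R$; it remains to verify the $n$-Lie axioms there.

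On $R/\partial R$ the remaining axioms come out by reduction. The last relation of (C3$'$) — the distribution form of the last line of (C3), in which $a^{n-1}$ is moved to the final slot — specializes at $k_j=0$ to
\begin{equation*}
[a^1,\dots,a^{n-1},a^n]_0=-\sum_{j\ge0}(-1)^{j}\,\partial^{(j)}\big({a^1}_{(0)}\cdots{a^{n-2}}_{(0)}{a^n}_{(j)}a^{n-1}\big),
\end{equation*}
whose $j=0$ term is $-[a^1,\dots,a^n,a^{n-1}]_0$ and whose $j\ge1$ terms lie in $\partial R$; so modulo $\partial R$ one gets antisymmetry under the exchange of the last two arguments, and the induced bracket on $R/\partial R$ is totally antisymmetric. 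Reducing the nested identity from (C4$'$) modulo $\partial R$ and using this full antisymmetry — the sign $(-1)^{n-i}$ accounting for the $n-i$ transpositions that carry the inner bracket from the final slot back to slot $i$ — turns it into the Filippov identity, i.e.\ that $\mathrm{ad}_{(b^2,\dots,b^n)}$ is a derivation of $R/\partial R$. Therefore $R/\partial R$ is an $n$-Lie algebra; on $R$ itself $[\cdot,\dots,\cdot]_0$ is already antisymmetric in the first $n-1$ arguments and satisfies the nested identity, the antisymmetry in the last argument holding only up to $\partial R$.

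The one point needing real care is exactly this last-argument antisymmetry: in the distribution picture it is not an honest sign flip on $R$, but carries the $\partial R$-valued tail $\sum_{j\ge1}(-1)^{j+1}\partial^{(j)}\big({a^1}_{(0)}\cdots{a^{n-2}}_{(0)}{a^n}_{(j)}a^{n-1}\big)$ supplied by (C3$'$) (equivalently by the $-\partial$ in (C3)). That is why the clean statement is one about $R/\partial R$, and checking that this tail — together with the $\partial R$-discrepancies that appear when the nested identity from (C4$'$) is rewritten in standard Filippov form — genuinely lies in $\partial R$ is the step to carry out carefully; everything else is a mechanical specialization of (C1$'$)--(C4$'$).
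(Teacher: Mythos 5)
Your proof is correct and follows essentially the same route as the paper: specialize (C2$'$)--(C4$'$) at $k_j=0$ to obtain the bracket, the ideal property of $\partial R$, and the skew-symmetry and Filippov identity. You are in fact more careful than the paper's one-line argument, since you correctly note that antisymmetry in the last slot holds only modulo $\partial R$ (because of the $j\ge 1$ tail in (C3$'$)), so the honest $n$-Lie structure lives on $R/\partial R$.
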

	\begin{proof}
		Denote $[a^1,a^2,\cdots,a^n]:={a^1}_{(0)}{a^2}_{(0)}{\cdots}_{(0)}a^n$ for all $a^1,a^2,\cdots,a^n\in R$. The skew-symmetry and Filippov identity can be easily obtained from (C$3^\prime$) and (C$4^\prime$). (C$2^\prime$) gives $$[a^1,\cdots,\partial a^i,\cdots,a^n]=0\mbox{ }if\mbox{ } i=1,2,\cdots,n-1,\mbox{ }and\mbox{ } [a^1,a^2,\cdots,\partial a^n]\in \partial R.$$ Therefore, $\partial R$ is an ideal of $R$ and $R/\partial R$ is an $n$-Lie algebra.
	\end{proof}

Kac et. al. classified Lie conformal algebras of rank one in \cite{ak} and showed Virasoro algebra is the only one non-commutative Lie conformal algebras of rank one. Interestingly, we find a different conclusion in $n$-Lie conformal algebra when $n\ge3$.
	\begin{proposition}\label{p2}
Let $R:=\mathbb{C}[\partial]e$ be an $n$-Lie conformal algebra of rank one for $n\geq 3$.  Then  $R$ is commutative.
	\end{proposition}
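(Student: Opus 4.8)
The plan is to reduce the whole structure to a single polynomial and then play the skew--symmetry axiom (C3) against the Filippov axiom (C4). By (C1) we may write $[e_{\lambda_1}e_{\lambda_2}{\cdots}_{\lambda_{n-1}}e]=f(\partial,\lambda_1,\dots,\lambda_{n-1})e$ for a unique $f\in\mathbb{C}[\partial,\lambda_1,\dots,\lambda_{n-1}]$; by (C2) the whole $\{\lambda_{1\to n-1}\}$-bracket is then determined by $f$, so $R$ is commutative if and only if $f=0$. It is convenient to pass to the momentum--conserving form $\tilde f(y_1,\dots,y_n):=f(-y_1-\cdots-y_n,\,y_1,\dots,y_{n-1})$, so that $f(\partial,\lambda_1,\dots,\lambda_{n-1})=\tilde f(\lambda_1,\dots,\lambda_{n-1},-\partial-\lambda_1-\cdots-\lambda_{n-1})$. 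The first step is to read off (C3): with $a^1=\cdots=a^n=e$ the adjacent--transposition relations say $f$ is antisymmetric in $\lambda_1,\dots,\lambda_{n-1}$, while the last relation of (C3)---after using (C2) to perform the substitution $\lambda_{n-1}\mapsto-\partial-\lambda_1-\cdots-\lambda_{n-1}$, which on $\mathbb{C}[\partial]e$ is a plain polynomial substitution---turns into antisymmetry of $\tilde f$ under $y_{n-1}\leftrightarrow y_n$. Hence $\tilde f$ is \emph{totally} antisymmetric in $y_1,\dots,y_n$, so $\tilde f=V\cdot g$ with $V=\prod_{i<j}(y_i-y_j)$ the Vandermonde and $g$ symmetric; in particular, if $f\neq0$ then $\deg f=\deg\tilde f\ge\binom n2\ge3$. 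This is the first point where $n\ge3$ is used.

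Next I would substitute $a^1=\cdots=a^n=b^2=\cdots=b^n=e$ into (C4) and use sesquilinearity (C2) to push the polynomial coefficients through the outer brackets---concretely $[e_{\mu_1}\cdots e_{\mu_{n-1}}(p(\partial)e)]=p(\partial+\mu_1+\cdots+\mu_{n-1})f(\partial,\mu_1,\dots,\mu_{n-1})e$ and $[(p(\partial)e)_{\mu_1}e_{\mu_2}\cdots e]=p(-\mu_1)f(\partial,\mu_1,\dots,\mu_{n-1})e$. Every nested bracket then collapses to a product of two copies of $f$, and (C4) becomes a single polynomial identity in $\mathbb{C}[\partial,\lambda_1,\dots,\lambda_{2n-2}]$; rewritten via $\tilde f$ and its total antisymmetry, the signs $(-1)^{n-i}$ of (C4) cancel and the identity takes the schematic form $\sum_{i=1}^{n}\tilde f(\lambda_i,\vec\lambda_b,w)\,\tilde f(\dots,\lambda_i+\Sigma+w,\dots)=\tilde f(\lambda_1,\dots,\lambda_n)\,\tilde f(\Lambda,\vec\lambda_b,w)$, where $\vec\lambda_b=(\lambda_{n+1},\dots,\lambda_{2n-2})$, $\Lambda=\lambda_1+\cdots+\lambda_n$, $\Sigma=\lambda_{n+1}+\cdots+\lambda_{2n-2}$, and $w$ is a free variable replacing $\partial$.

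Now I specialize to $w=-\Sigma$ (equivalently $\partial=-\Lambda$): the shifts $\lambda_i+\Sigma+w$ collapse, and the identity degenerates to $\tilde f(\lambda_1,\dots,\lambda_n)\bigl[\sum_{i=1}^{n}h(\lambda_i)-h(\Lambda)\bigr]=0$, where $h(x):=\tilde f(x,\lambda_{n+1},\dots,\lambda_{2n-2},-\Sigma)$ is the restriction of $\tilde f$ to the hyperplane transverse to the diagonal cut out by the $b$--variables. Since the polynomial ring is a domain, either $\tilde f=0$ (and we are done) or $\sum_i h(\lambda_i)=h(\lambda_1+\cdots+\lambda_n)$; setting all but one, then all but two, of the $\lambda_i$ to $0$ gives $h(0)=0$ and $h$ additive, hence $h(x)=\alpha x$ is linear. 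But total antisymmetry of $\tilde f$ forces $h$ to vanish at the $n-1$ values $\lambda_{n+1},\dots,\lambda_{2n-2},-\Sigma$, so $h$ is divisible by a polynomial of degree $n-1\ge2$ in $x$; comparing degrees gives $h\equiv0$. Thus $\tilde f$ vanishes on that hyperplane, hence is divisible by $\sum_{j\neq1}y_j$, and by total antisymmetry by every $\sum_{j\neq i}y_j$. Finally I bootstrap: that hyperplane is the zero locus of the parameter $\partial+\Lambda$, which enters linearly, so differentiating the unspecialized identity in $\partial$ and feeding back the divisibility already obtained, the same dichotomy (the functional equation makes the relevant restricted derivative of $\tilde f$ linear, antisymmetry makes it of degree $\ge n-1$) yields one more factor of $\sum_{j\neq i}y_j$ at each order. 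Iterating, $\tilde f$ would be divisible by arbitrarily high powers, which is impossible unless $\tilde f=0$; hence $f=0$ and $R$ is commutative.

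The main obstacle I anticipate is the second step: carrying out the sesquilinear reductions inside (C4) cleanly and, above all, keeping straight which of the $2n-2$ spectral variables goes into the inner bracket and which into the outer one, so that the identity really does collapse to the ``$h$ additive but of degree $\ge n-1$'' dichotomy. Once that identity is in hand, the degree count and the transverse--derivative bootstrap are routine. (It is instructive to see where $n=2$ escapes: there the obstruction ``degree $n-1\ge2$'' becomes ``degree $1$'', which is exactly compatible with $h$ linear, and this is precisely the slot in which the Virasoro conformal algebra sits.)
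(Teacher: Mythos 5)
Your argument is correct and reaches the conclusion by a genuinely different route from the paper's. The paper works directly with $f(\partial,\lambda_1,\dots,\lambda_{n-1})$: after deriving the same quadratic consequence of (C4), it specializes $\lambda_{n+j-1}=\lambda_j$ so that antisymmetry kills the intermediate terms of the Filippov sum, then sets $\lambda_1+\cdots+\lambda_n=0$ and invokes the vanishing of $f$ when some spectral variable equals $0$ to reduce everything to a product of two specializations of $f$ being zero. You instead package all of (C3) into total antisymmetry of $\tilde f(y_1,\dots,y_n)=f(-\textstyle\sum y_i,y_1,\dots,y_{n-1})$ and Taylor-expand the quadratic identity in the single conserved direction $T=\partial+\lambda_1+\cdots+\lambda_n$; at each order the identity degenerates to $\tilde f\cdot\bigl[\sum_ih_M(\lambda_i)-h_M(\Lambda)\bigr]=0$, and the tension between additivity (forcing $h_M$ linear) and the roots forced by antisymmetry kills $h_M$. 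This is longer, but it buys robustness: it does not rely on the auxiliary claim that $f$ vanishes when some $\lambda_j=0$ (which does not follow from antisymmetry of $\tilde f$ alone), and it keeps all $n$ terms of the Filippov sum in play rather than retaining only the $i=1$ term after specialization. Your aside about where $n=2$ escapes is also exactly right.

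One small repair is needed in the bootstrap. For $M\ge1$ the restricted coefficient $h_M(x)=\frac{1}{M!}\,\partial_w^M\,\tilde f(x,\lambda_{n+1},\dots,\lambda_{2n-2},w)\big|_{w=-\Sigma}$ (with $\Sigma=\lambda_{n+1}+\cdots+\lambda_{2n-2}$) is forced by antisymmetry to vanish only at the $n-2$ points $\lambda_{n+1},\dots,\lambda_{2n-2}$: the extra root at $x=-\Sigma$, which at order $M=0$ comes from equality of the first and last arguments of $\tilde f$, is lost once you differentiate in $w$. So the count ``degree $\ge n-1$'' fails for the derivatives when $n=3$. This does not harm the proof: under the inductive hypothesis $h_0=\cdots=h_{M-1}=0$, the coefficient of $T^M$ still gives $\sum_ih_M(\lambda_i)=h_M(\Lambda)$, hence $h_M(x)=\alpha_Mx$ with $\alpha_M\in\mathbb{C}[\lambda_{n+1},\dots,\lambda_{2n-2}]$, and $h_M(\lambda_{n+1})=\alpha_M\lambda_{n+1}=0$ in a domain already forces $\alpha_M=0$; a single nonzero forced root suffices, and $n\ge3$ guarantees at least one $b$-variable to supply it. With that adjustment the induction shows every $T$-coefficient of $\tilde f(x,\lambda_{n+1},\dots,\lambda_{2n-2},w)$ vanishes, i.e.\ $\tilde f=0$, as you intend.
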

\begin{proof}
	Set $[e_{\lambda_1}e_{\lambda_2}{\cdots}_{\lambda_{n-1}}e]=f(\partial,\lambda_1,\cdots,\lambda_{n-1})e$. From (C2) and (C4), we have 
	\begin{align}\label{1}
	&f(-\lambda_1-\cdots-\lambda_n,\lambda_1,\cdots,\lambda_{n-1})f(\partial,\lambda_1+\cdots+\lambda_n,\lambda_{n+1},\cdots,\lambda_{2n-2})\notag\\
	=&\sum_{i=1}^{n}(-1)^{n-i}f(\partial+\sum_{\substack{j=1\\j\ne i}}^{n}\lambda_j,\lambda_i,\lambda_{n+1},\cdots,\lambda_{2n-2})f(\partial,\lambda_1,\cdots,\hat{\lambda_i},\cdots,\lambda_n).
	\end{align}
	Besides, (C3) implies
	$$f(\partial,\lambda_1,\cdots,\lambda_{n-1})=-f(\partial,\lambda_1,\cdots,\lambda_{j+1},\lambda_j,\cdots,\lambda_{n-1}),$$
	for $j=1,2,\cdots,n-2$.
	Hence if $\lambda_j=\lambda_k$ for some $j\ne k$ or $\lambda_j=0$ for some $j$, then we have $f(\partial,\lambda_1,\cdots,\lambda_{n-1})=0$.
	
	Plugging $\lambda_2=\lambda_{n+1},\lambda_3=\lambda_{n+2},\cdots,\lambda_{n-1}=\lambda_{2n-2}$ into (\ref{1}), we can obtain that 
	\begin{align}\label{2}
	&f(-\lambda_1-\cdots-\lambda_n,\lambda_1,\cdots,\lambda_{n-1})f(\partial,\lambda_1+\cdots+\lambda_n,\lambda_{n+1},\cdots,\lambda_{2n-2})\notag\\
	=&(-1)^{n-1}f(\partial+\sum_{j=2}^{n}\lambda_j,\lambda_1,\lambda_2,\cdots,\lambda_{n-1})f(\partial,\lambda_2,\cdots,\lambda_n).
	\end{align}
	 Substituting $\lambda_1+\cdots+\lambda_n=0$ into (\ref{2}), we obtain$$f(\partial+\sum_{j=2}^{n}\lambda_j,-\sum_{j=2}^{n}\lambda_j,\lambda_2,\cdots,\lambda_{n-1})f(\partial,\lambda_2,\cdots,\lambda_n)=0,$$ which implies either $f(\partial+\sum_{j=2}^{n}\lambda_j,-\sum_{j=2}^{n}\lambda_j,\lambda_2,\cdots,\lambda_{n-1})=0$ or $f(\partial,\lambda_2,\cdots,\lambda_n)=0$. In both cases, it can be inferred that $f(x_1,\cdots,x_n)=0$.
	 
	 Therefore, we have $[e_{\lambda_1}e_{\lambda_2}{\cdots}_{\lambda_{n-1}}e]=0$.
	 \end{proof}

However, one can easily find a non-commutative $n$-Lie conformal algebra free of rank two.
	\begin{proposition}
Let $R:=\mathbb{C}[\partial]e_1\oplus\mathbb{C}[\partial]e_2$ be an $n$-Lie conformal algebra free of rank two. If $I:=\mathbb{C}[\partial]e_2$ is a conformal ideal of $R$ and $n\ge3$, $R$ is determined by  $[{e_1}_{\lambda_1}{\cdots}_{\lambda_{n-2}}{e_1}_{\lambda_{n-1}}e_2]$ and $[{e_1}_{\lambda_1}{\cdots}_{\lambda_{n-2}}e_1]$.
		Furthermore, the brackets of $R$ are given by 
		
		(i) either \begin{align}
		[{e_{i_1}} _{\lambda_1}{e_{i_2}}_{\lambda_2}{\cdots}_{\lambda_{n-1}}e_{i_n}]=\begin{cases}
		0,&\mbox{if }i_1\cdots i_n\ge 2,\\
		\prod_{1\le i<j\le n-1}(\lambda_i-\lambda_j)g,&\mbox{if }i_1\cdots i_n=1,
		\end{cases}
		\end{align}
		where $i_1,\cdots,i_n\in\{1,2\}$, $g$ is a polynomial in $\mathcal{SP}_{n-1}[\partial]$ and $\mathcal{SP}_{n-1}$ is the set of all symmetric polynomials with $n-1$ variables.
		
		(ii) or \begin{align}
		[{e_{i_1}} _{\lambda_1}{e_{i_2}}_{\lambda_2}{\cdots}_{\lambda_{n-1}}e_{i_n}]=\begin{cases}
		0,&\mbox{if }i_1\cdots i_n\ge 4,\\
		h,&\mbox{if }i_1\cdots i_n=2,\\
		0,&\mbox{if }i_1\cdots i_n= 1,
		\end{cases}
		\end{align}
		where $i_1\le i_2\le\cdots\le i_n\in\{1,2\}$, and $h \in\mathcal{SP}_{n-1}$ satisfying
		\begin{align}\label{19}
		\sum_{i=1}^{n}(-1)^{n-i}h(\lambda_i,\lambda_{n+1},\cdots,\lambda_{2n-2})h(\lambda_1,\cdots,\hat{\lambda_i},\cdots,\lambda_n)=0. 
		\end{align}.
	\end{proposition}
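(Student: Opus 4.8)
The plan is to bootstrap from Proposition \ref{p2}, applied to two rank-one subquotients of $R$, and then to squeeze out the remaining data with the Filippov identity (C4). \emph{Reductions.} Since $I=\mathbb{C}[\partial]e_2$ is a subalgebra of $R$, free of rank one, Proposition \ref{p2} gives $[{e_2}_{\lambda_1}\cdots{}_{\lambda_{n-1}}e_2]=0$; likewise $R/I\cong\mathbb{C}[\partial]\bar{e}_1$ is free of rank one, so $[{e_1}_{\lambda_1}\cdots{}_{\lambda_{n-1}}e_1]\in I$. Because $I$ is an ideal, every bracket of generators with at least one entry equal to $e_2$ lands in $I$, and by the skew-symmetry axioms (C3) such a bracket equals, up to a sign and a relabelling of the $\lambda$'s (with the substitution $\lambda_n\mapsto-\partial-\lambda_1-\cdots-\lambda_{n-1}$ whenever the $n$-th slot gets permuted), the one with all of its copies of $e_2$ in the last slots. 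Write that normal form as $\varphi_k(\partial,\lambda_{1\to n-1})\,e_2$, where $k$ is the number of copies of $e_2$; then $R$ is completely determined by $\varphi_0,\dots,\varphi_n$, with $\varphi_n=0$ already known, and the first assertion of the Proposition reduces to showing $\varphi_k=0$ for all $k\ge2$, which leaves only $\varphi_0$ (the all-$e_1$ bracket) and $\varphi_1$ (the one-$e_2$ bracket) as the data.

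\emph{Shape of the two surviving brackets.} Interchanging two adjacent copies of $e_1$ shows $\varphi_k$ is alternating in the $\lambda$'s attached to its $e_1$-slots, hence divisible by the corresponding Vandermonde determinant, while interchanging the last two slots imposes the reflection relation from the second line of (C3). For $\varphi_0$ this forces alternation in all $n$ momenta $\lambda_1,\dots,\lambda_{n-1},-\partial-\lambda_1-\cdots-\lambda_{n-1}$; extracting the Vandermonde leaves a symmetric remainder, and reorganizing it produces exactly the shape $\prod_{1\le i<j\le n-1}(\lambda_i-\lambda_j)\,g$ with $g\in\mathcal{SP}_{n-1}[\partial]$ of case (i). The analogous computation pins down the shape of $\varphi_1$ appearing in case (ii).

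\emph{The Filippov identity.} I would feed (C4) all configurations $a^1,\dots,a^n,b^2,\dots,b^n$ in which every entry is $e_1$ or $e_2$. Using the ideal property and $\varphi_n=0$, each instance becomes a polynomial identity in $\mathbb{C}[\partial,\lambda_*]$ equating two signed sums of products $\varphi_{k_1}\varphi_{k_2}$, with $k_1+k_2$ equal to one plus the total number of copies of $e_2$ present. Starting from $\varphi_n=0$ and running these identities downward --- specializing the $\lambda$'s so that all but one summand vanishes (possible since the $\varphi_k$ carry Vandermonde factors) and invoking that $\mathbb{C}[\partial,\lambda_*]$ is an integral domain --- I expect to obtain $\varphi_k=0$ for every $k\ge2$. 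Once those vanish, only two families of configurations remain nontrivial: the one with no copies of $e_2$ among the $b$'s and at most one among the $a$'s yields a relation coupling $\varphi_0$ and $\varphi_1$ that forces $\varphi_1=0$ or $\varphi_0=0$, and, when $\varphi_0=0$, the configuration with a single $e_2$ on the $b$-side collapses precisely to the quadratic constraint \eqref{19} on $\varphi_1=h$. These are cases (i) and (ii). The converse --- that the brackets in (i) and (ii) satisfy (C1)-(C4) --- is then a direct check: (C1)-(C3) are built into the Vandermonde/symmetry shapes, and in (C4) every term either vanishes (all $e_2$-brackets being $0$ in (i), all brackets with at least two copies of $e_2$ being $0$ in (ii)) or is exactly \eqref{19}.

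The step I expect to be the main obstacle is the bookkeeping inside (C4): tracking the $(-1)^{n-i}$ signs and the many $\lambda$-variables, and especially the substitutions $\lambda_n\mapsto-\partial-\lambda_1-\cdots-\lambda_{n-1}$ that skew-symmetry forces whenever an argument is moved into or out of the distinguished last slot; and then choosing specializations of the $\lambda$'s sharp enough that each Filippov identity delivers an honest vanishing $\varphi_k=0$ rather than merely a relation among products of the $\varphi_k$.
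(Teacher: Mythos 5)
Your strategy coincides with the paper's: apply Proposition \ref{p2} to $I$ and to $R/I$ to get $[{e_2}_{\lambda_1}\cdots{}_{\lambda_{n-1}}e_2]=0$ and $[{e_1}_{\lambda_1}\cdots{}_{\lambda_{n-1}}e_1]\in I[\lambda_1,\dots,\lambda_{n-1}]$; reduce every mixed bracket via (C3) to a normal form $\varphi_k$ graded by the number $k$ of copies of $e_2$ (your $\varphi_k$ is the paper's $f_{n-k}$); kill the $\varphi_k$ with $k\ge 2$ by specializing each Filippov identity so that the alternating structure collapses the signed sum to a single product, which must vanish in the integral domain $\mathbb{C}[\partial,\lambda_*]$; and extract the dichotomy $\varphi_0=0$ or $\varphi_1=0$ from the all-$e_1$ instance of (C4). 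The specializations you sketch ($\lambda$'s identified pairwise, plus one linear relation) are exactly the ones the paper uses, so up to that point the proposal is sound and matches the paper step for step.

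There is, however, one concrete omission in your case (ii). The Filippov configuration with a single $e_2$ in the last slot does not ``collapse precisely to the quadratic constraint \eqref{19}''; it yields the paper's identity \eqref{5},
$$\sum_{i=1}^{n}(-1)^{n-i}f_{n-1}\bigl(\partial+\textstyle\sum_{j\ne i}\lambda_j,\lambda_i,\lambda_{n+1},\dots,\lambda_{2n-2}\bigr)\,f_{n-1}(\partial,\lambda_1,\dots,\hat{\lambda_i},\dots,\lambda_n)=0,$$
in which the first factor is evaluated at a shifted value of $\partial$. The $\partial$-free constraint \eqref{19} only follows after one proves that $f_{n-1}=h$ does not depend on $\partial$ at all, and this is also what justifies the claim $h\in\mathcal{SP}_{n-1}$ (a polynomial in the $\lambda$'s alone) in the statement. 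The paper supplies this with a separate degree argument: write $f_{n-1}=\sum_{i=0}^{m}g_i(\lambda_1,\dots,\lambda_{n-1})\partial^i$ with $g_m\ne 0$, extract the coefficient of $\partial^{2m-1}$ in \eqref{5}, and specialize $\lambda_2=\lambda_{n+1},\dots,\lambda_{n-1}=\lambda_{2n-2}$ to force $m(\lambda_n-\lambda_1)g_m(\lambda_1,\dots,\lambda_{n-1})g_m(\lambda_2,\dots,\lambda_n)=0$, whence $m=0$. You need to insert this (or an equivalent) step before the constraint can take the form \eqref{19}; as written, your outline would terminate with a $\partial$-dependent relation and would not pin down the shape of $h$ claimed in the Proposition.
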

\begin{proof}
	From Proposition \ref{p2}, $I$ and $R/I$ are commutative $n$-Lie conformal algebras, which implies 
	\begin{eqnarray*}
	&[{e_2} _{\lambda_1}{e_2}_{\lambda_2}{\cdots}_{\lambda_{n-1}}e_2]=0,\\
	&[{e_{i_1}} _{\lambda_1}{e_{i_2}}_{\lambda_2}{\cdots}_{\lambda_{n-1}}e_{i_n}]\in I[\lambda_1,\cdots,\lambda_{n-1}],\mbox{ for }i_1,\cdots,i_n\in\{1,2\}.
	\end{eqnarray*} 
	
	\textbf{Step 1:} Denote
	\begin{eqnarray*}
	&[{e_2}_{\lambda_1}{\cdots}_{\lambda_{n-2}}{e_2}_{\lambda_{n-1}}e_1]=f_1(\partial,\lambda_1,\cdots,\lambda_{n-1})e_2,\\
	 &[{e_2}_{\lambda_1}{\cdots}_{\lambda_{n-2}}{e_2}_{\lambda_{n-2}}{e_1}_{\lambda_{n-1}}e_1]=f_2(\partial,\lambda_1,\cdots,\lambda_{n-1})e_2,\\
	 &\vdots\\
	 &[{e_2}_{\lambda_1}{e_2}_{\lambda_2}{e_1}_{\lambda_3}{\cdots}_{\lambda_{n-1}}e_1]=f_{n-2}(\partial,\lambda_1,\cdots,\lambda_{n-1})e_2.
	\end{eqnarray*} We claim $f_1=f_2=\cdots=f_{n-2}=0$.
	
	 We first show $f_1=0$. It is easy to see that $f_1(\partial,\lambda_1,\cdots,\lambda_{n-1})$ is skew-symmetric with respect to $\lambda_i$ by (C3). From (C4),
	\begin{align*}
	&[[{e_2}_{\lambda_1}{\cdots}_{\lambda_{n-2}}{e_2}_{\lambda_{n-1}}e_1]_{\lambda_1+\cdots+\lambda_n}{e_2}_{\lambda_{n+1}}{\cdots}_{\lambda_{2n-3}}{e_2}_{\lambda_{2n-2}}e_1]\notag\\
	=&\sum_{i=1}^{n-1}(-1)^{n-i}[{e_2}_{\lambda_1}\cdots\hat{e_2}_{\lambda_i}\cdots{e_2}_{\lambda_{n-1}}{e_1}_{\lambda_n}[{e_2}_{\lambda_i}{e_2}_{\lambda_{n+1}}{\cdots}_{\lambda_{2n-3}}{e_2}_{\lambda_{2n-2}}e_1]],
	\end{align*}
	which is equivalent to\begin{align}\label{4}
	&f_1(-\lambda_1-\cdots-\lambda_n,\lambda_1,\cdots,\lambda_{n-1})f_1(\partial,\lambda_1+\cdots+\lambda_n,\lambda_{n+1},\cdots,\lambda_{2n-2})\notag\\
	=&\sum_{i=1}^{n-1}(-1)^{n+1-i}f_1(\partial+\sum_{\substack{j=1\\j\ne i}}^n\lambda_j,\lambda_i,\lambda_{n+1},\cdots,\lambda_{2n-2})f_1(\partial,\lambda_1,\cdots,\hat{\lambda_i},\cdots,\lambda_{n-1},-\partial-\sum_{\substack{j=1\\j\ne i}}^n\lambda_j).
	\end{align}
	Similar to Proposition \ref{p2}, we set$\lambda_2=\lambda_{n+1},\lambda_3=\lambda_{n+2},\cdots,\lambda_{n-1}=\lambda_{2n-2},\lambda_1+\cdots+\lambda_n=0$, then we can deduce that $f_1=0$ from the simplified formula of (\ref{4})  $$f_1(\partial+\sum_{j=2}^n\lambda_j,-(\sum_{j=2}^n\lambda_j),\lambda_2,\cdots,\lambda_{n-1})f_1(\partial,\lambda_2,\cdots,\lambda_{n-1},-\partial-(\sum_{\substack{j=1\\j\ne i}}^n\lambda_j))=0.$$ 
We further consider \begin{align*}
	&[[{e_2}_{\lambda_1}{\cdots}_{\lambda_{n-3}}{e_2}_{\lambda_{n-2}}{e_1}_{\lambda_{n-1}}e_1]_{\lambda_1+\cdots+\lambda_n}{e_2}_{\lambda_{n+1}}{\cdots}_{\lambda_{2n-4}}{e_2}_{\lambda_{2n-3}}{e_1}_{\lambda_{2n-2}}e_1]\notag\\
	=&\sum_{i=1}^{n-2}(-1)^{n-i}[{e_2}_{\lambda_1}\cdots\hat{e_2}_{\lambda_i}\cdots{e_2}_{\lambda_{n-2}}{e_1}_{\lambda_{n-1}}{e_1}_{\lambda_n}[{e_2}_{\lambda_i}{e_2}_{\lambda_{n+1}}{\cdots}_{\lambda_{2n-4}}{e_2}_{\lambda_{2n-3}}{e_1}_{\lambda_{2n-2}}e_1]],
	\end{align*}which implies \begin{align}
	&f_2(-\lambda_1-\cdots-\lambda_n,\lambda_1,\cdots,\lambda_{n-1})f_2(\partial,\lambda_1+\cdots+\lambda_n,\lambda_{n+1},\cdots,\lambda_{2n-2})\notag\\
	=&\sum_{i=1}^{n-2}(-1)^{n-i}f_2(\partial+\sum_{\substack{j=1\\j\ne i}}^n\lambda_j,\lambda_i,\lambda_{n+1},\cdots,\lambda_{2n-2})f_2(\partial,\lambda_1,\cdots,\hat{\lambda_i},\cdots,\lambda_{n-2},-\partial-\sum_{\substack{j=1\\j\ne i}}^n\lambda_j,\lambda_{n-1}).
	\end{align}
	Based on the skew-symmetry of $f_2$, substituting in $\lambda_2=\lambda_{n+1},\lambda_3=\lambda_{n+2},\cdots,\lambda_{n-2}=\lambda_{2n-3},\lambda_1+\cdots+\lambda_n=0$ yields $$f_2(\partial+\sum_{j=2}^n\lambda_j,-\sum_{j=2}^n\lambda_j,\lambda_2,\cdots,\lambda_{n-2},\lambda_{2n-2})f_2(\partial,\lambda_2,\cdots,\lambda_{n-2},-\partial-\sum_{j=2}^n\lambda_j,\lambda_{n-1})=0.$$ Hence $f_2=0$.
Similarly, we can obtain $f_1=\cdots=f_{n-2}=0$ immediately.
	
	\textbf{Step 2:} Let  $$[{e_1}_{\lambda_1}{\cdots}_{\lambda_{n-2}}{e_1}_{\lambda_{n-1}}e_2]:=f_{n-1}(\partial,\lambda_1,\cdots,\lambda_{n-1})e_2,$$ $$[{e_1}_{\lambda_1}{\cdots}_{\lambda_{n-2}}e_1]=f_n(\partial,\lambda_1,\cdots,\lambda_{n-1})e_2.$$ 
Plugging $\lambda_1=\lambda_n$ into 
	\begin{align*}
	&[[{e_1}_{\lambda_1}{\cdots}_{\lambda_{n-1}}e_1]_{\lambda_1+\cdots+\lambda_n}{e_1}_{\lambda_{n+1}}{\cdots}_{\lambda_{2n-2}}e_1]\\
	=&\sum_{i=1}^{n}(-1)^{n-i}[{e_1} _{\lambda_1}\cdots\hat{ e_1}_{\lambda_i}{\cdots}_{\lambda_{n-1}}{e_1}_{\lambda_n}[{ e_1}_{\lambda_i}{e_1}_{\lambda_{n+1}}{\cdots}_{\lambda_{2n-2}}e_1]],
	\end{align*}
	we have $$f_n(-\sum_{j=1}^{n-1}\lambda_j-2\lambda_1,\lambda_1,\cdots,\lambda_{n-1})f_{n-1}(\partial,\lambda_{n+1},\cdots,\lambda_{2n-2},-\partial-\lambda_i-\sum_{j=1}^{n-2}\lambda_{n+j})=0.$$ Thus either $f_n=0$ or $f_{n-1}=0$.
	
	 (i) If $f_{n-1}=0$, no matter what $f_n$ is, the Filippov identity always holds. The conformal skew-symmetry of $R$ yields the skew-symmetry of $f_n$ with respect to $\lambda_i$. So there is a one to one correspondence between $f_n$ and a polynomial $g$ in $\mathcal{SP}_{n-1}[\partial]$ by $$f_n=\prod_{1\le i<j\le n-1}(\lambda_i-\lambda_j)g.$$
	 
	 (ii) If $f_n=0$. It can be deduced from (C4) that 
	 \begin{align*}
	 &[[{e_1}_{\lambda_1}{\cdots}_{\lambda_{n-1}}e_1]_{\lambda_1+\cdots+\lambda_n}{e_1}_{\lambda_{n+1}}{\cdots}_{\lambda_{2n-3}}{e_1}_{\lambda_{2n-2}}e_2]\\
	 =&\sum_{i=1}^{n}(-1)^{n-i}[{e_1} _{\lambda_1}\cdots\hat{ e_1}_{\lambda_i}{\cdots}_{\lambda_{n-1}}{e_1}_{\lambda_n}[{ e_1}_{\lambda_i}{e_1}_{\lambda_{n+1}}{\cdots}_{\lambda_{2n-3}}{e_1}_{\lambda_{2n-2}}e_2]].
	 \end{align*}
	 Substituting in $f_{n-1}$, we obtain \begin{align}\label{5}
	 \sum_{i=1}^{n}(-1)^{n-i}f_{n-1}(\partial+\sum_{\substack{j=1\\j\ne i}}^n\lambda_j,\lambda_i,\lambda_{n+1},\cdots,\lambda_{2n-2})f_{n-1}(\partial,\lambda_1,\cdots,\hat{\lambda_i},\cdots,\lambda_n)=0.
	 \end{align}
	 Set $f_{n-1}(\partial,\lambda_1,\cdots,\lambda_{n-1})=\sum_{i=0}^{m}\limits g_i(\lambda_1,\cdots,\lambda_{n-1})\partial^i$ with $g_m\ne0$ and assume $m\ge1$. Computing the coefficients of $\partial^{2m-1}$, we have
	 \begin{align*}
	 &\sum_{i=1}^{n}(-1)^{n-i}g_m(\lambda_i,\lambda_{n+1},\cdots,\lambda_{2n-2})g_{m-1}(\lambda_1,\cdots,\hat{\lambda_i},\cdots,\lambda_n)\\
	 &+\sum_{i=1}^{n}(-1)^{n-i}g_{m-1}(\lambda_i,\lambda_{n+1},\cdots,\lambda_{2n-2})g_m(\lambda_1,\cdots,\hat{\lambda_i},\cdots,\lambda_n)\\
	 &+\sum_{i=1}^{n}(-1)^{n-i}m(\sum_{\substack{j=1\\j\ne i}}^n\lambda_j)g_m(\lambda_i,\lambda_{n+1},\cdots,\lambda_{2n-2})g_m(\lambda_1,\cdots,\hat{\lambda_i},\cdots,\lambda_n)=0.
	 \end{align*}
	 Since $g_i$ is skew-symmetric, putting in $\lambda_2=\lambda_{n+1},\cdots,\lambda_{n-1}=\lambda_{2n-2}$ yields 
	 $$m(\lambda_n-\lambda_1)g_m(\lambda_1,\cdots,\lambda_{n-1})g_m(\lambda_2,\cdots,\lambda_n)=0$$ so that $g_m=0$. This is a contradiction. Hence $f_{n-1}$ is a constant with respect to $\partial$. Then (\ref{5}) turns into
	 \begin{align}\label{6}
	 \sum_{i=1}^{n}(-1)^{n-i}f_{n-1}(\lambda_i,\lambda_{n+1},\cdots,\lambda_{2n-2})f_{n-1}(\lambda_1,\cdots,\hat{\lambda_i},\cdots,\lambda_n)=0. 
	 \end{align}
	 As mentioned above, it can be easily proved that $R$ is an $n$-Lie conformal algebra if and only if $f_{n-1}$ is a skew-symmetric polynomial with $n-1$ variables satisfying (\ref{6}).
\end{proof}
\begin{remark}
	Suppose that $n=3$ and $h(x,y)=\sum\limits_{i,j}a_{ij}x^iy^j\in\mathcal{SP}_2$, then $h(x,y)$ is a solution of (\ref{19}) if and only if the $m$ order antisymmetric matrix $\{a_{ij}\}$ satisfies $a_{ij}a_{kl}-a_{ik}a_{jl}+a_{il}a_{jk}=0$ for all $i,j,k,l=1,2,\cdots,m$. It can be easily checked that every 3 order antisymmetric matrix is a solution, but not all 4 order antisymmetric matrices meet the condition.
\end{remark}
	\section{The associated infinite-dimensional $n$-Lie algebras}\label{s3}
	In this section, we introduce the notion of associated infinite-dimensional $n$-Lie algebras $\mathscr{L}ie_p\mbox{ }R$ of an $n$-Lie conformal algebra $R$. Based on the structure of $\mathscr{L}ie_p\mbox{ }R$, we can define a linearly compact topology on the subalgebra $(\mathscr{L}ie_p\mbox{ }R)_\_$. We show torsionless finite $n$-Lie conformal algebra $R$ is determined by the linearly compact $n$-Lie algebra $(\mathscr{L}ie_p\mbox{ }R)_\_$, which is useful for further discussion on simple finite $n$-Lie conformal algebras.
	
	$\mathscr{L}ie_p\mbox{ }R$ are defined in the following way.
\begin{lemma}\label{p1}
	Suppose $R$ is an $n$-Lie conformal algebra, $\tilde{R}=R[t^{\pm1}]$. Then $\mathbb{C}[\tilde{\partial}]$-module $\tilde{R}$ is an $n$-Lie conformal algebra with the conformal products defined by
	\begin{align}\label{3}
	(a^1\otimes f_1)_{(k_1)}&{\cdots}_{(k_{n-1})}(a^n\otimes f_n)\notag\\
	=&\sum_{j_{1\to n-1}\in\mathbb{N}_+}({a^1}_{(k_1+j_1)}{\cdots}_{(k_{n-1}+j_{n-1})}a^n)\otimes((\partial_t^{(j_1)}f_1)\cdots(\partial_t^{(j_{n-1})}f_{n-1})f_n),
	\end{align} for all $a^1,\cdots,a^n\in R,f_1,\cdots,f_n\in \mathbb{C}[t^{\pm1}],k_1,\cdots,k_{n-1}\in\mathbb{N}_+$.
\end{lemma}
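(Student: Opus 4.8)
The plan is a straightforward verification of the axioms (C1)--(C4) (equivalently (C1$'$)--(C4$'$)) for the $\mathbb{C}[\tilde\partial]$-module $\tilde R=R[t^{\pm1}]=R\otimes_{\mathbb C}\mathbb C[t^{\pm1}]$, where $\tilde\partial:=\partial\otimes 1+1\otimes\partial_t$ is the derivation extending $\partial$ on $R$ and $\partial_t=\mathrm d/\mathrm dt$ on $\mathbb C[t^{\pm1}]$; this is the $n$-ary analogue of the base-change construction underlying the annihilation algebras of \cite{ak}. The first step is to rewrite \eqref{3} in $\{\lambda_{1\to n-1}\}$-language: setting $m_i=k_i+j_i$ in \eqref{3} and using the operator identity $\sum_{j=0}^{m}\lambda^{(m-j)}\partial_t^{(j)}=\tfrac1{m!}(\lambda+\partial_t)^{m}$ on $\mathbb C[t^{\pm1}]$, one gets
\begin{align*}
[(a^1\otimes f_1)_{\lambda_1}{\cdots}_{\lambda_{n-1}}(a^n\otimes f_n)]=\sum_{m_1,\dots,m_{n-1}\ge 0}\bigl({a^1}_{(m_1)}{\cdots}_{(m_{n-1})}a^n\bigr)\otimes\Bigl(\prod_{i=1}^{n-1}\tfrac1{m_i!}(\lambda_i+\partial_t)^{m_i}f_i\Bigr)f_n,
\end{align*}
so the bracket of $\tilde R$ is obtained from that of $R$ by the formal substitution $\lambda_i\rightsquigarrow\lambda_i+\partial_t$, the displaced $\partial_t$ acting only on $f_i$. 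From this form, well-definedness (the sum is finite) and (C1) are immediate from (C1$'$) for $R$: only finitely many of the $({a^1}_{(m_1)}{\cdots}a^n)$ are nonzero, and each term is polynomial in the $\lambda_i$ with coefficients in $\tilde R$.

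Next I would check (C2). For $1\le i\le n-1$, substituting $\tilde\partial(a^i\otimes f_i)=\partial a^i\otimes f_i+a^i\otimes\partial_t f_i$, invoking the $i$-th relation of (C2$'$) for $R$, and reindexing, the factor $\tfrac1{m_i!}(\lambda_i+\partial_t)^{m_i}f_i$ gets replaced by $\bigl(-(\lambda_i+\partial_t)+\partial_t\bigr)\tfrac1{m_i!}(\lambda_i+\partial_t)^{m_i}f_i=-\lambda_i\cdot\tfrac1{m_i!}(\lambda_i+\partial_t)^{m_i}f_i$, and the scalar $-\lambda_i$ factors out. For the $n$-th relation, the two summands of $\tilde\partial(a^n\otimes f_n)$ contribute, after reindexing, $(\partial\otimes 1)[\cdots]$, the terms $\sum_{i=1}^{n-1}\lambda_i[\cdots]$, and $\partial_t$-derivatives of the individual function factors, whose sum equals $(1\otimes\partial_t)[\cdots]$ by Leibniz; together this is $(\tilde\partial+\lambda_1+\cdots+\lambda_{n-1})[\cdots]$, as required. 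The transpositions in (C3) among the first $n-1$ arguments follow immediately from the corresponding relations of (C3$'$) for $R$, commutativity of $\mathbb C[t^{\pm1}]$ (to reorder the function factors), and a relabelling of summation indices.

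The two remaining axioms carry the real computational weight. For the last transposition in (C3), which swaps the $(n-1)$-st and $n$-th arguments and so moves $f_n$ out of the operator-free slot while $f_{n-1}$ moves into it, I would argue with the distribution form (C3$'$): substitute \eqref{3} into both sides and match, using $\partial_t^{(i)}\partial_t^{(j)}=\binom{i+j}{i}\partial_t^{(i+j)}$, the divided Leibniz rule $\partial_t^{(j)}(fg)=\sum_{a+b=j}(\partial_t^{(a)}f)(\partial_t^{(b)}g)$, and $\tilde\partial^{(j)}=(\partial\otimes1+1\otimes\partial_t)^{(j)}$, so that the binomial coefficients produced by redistributing the $\partial_t$-derivatives between $f_{n-1}$ and $f_n$ match those of (C3$'$) for $R$. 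Finally, for the Filippov identity (C4$'$) I would expand the nested brackets on both sides via repeated application of \eqref{3}, push all $\partial_t$-derivatives onto the function factors with the same two divided-power identities, and collect; after simplification each monomial in the $a_{(\bullet)}$'s with its attached function factor pairs off with its counterpart through (C4$'$) for $R$, the $t$-coefficients being forced to agree because the $\partial_t$ divided-power calculus is compatible with the binomials occurring in (C4$'$). I expect (C4) to be the genuine obstacle, not conceptually but in the volume of index bookkeeping required to keep the nested sums aligned; the case $n=2$, where the assertion is the classical fact from \cite{ak} that $R[t^{\pm1}]$ is a Lie conformal algebra, is a useful sanity check.
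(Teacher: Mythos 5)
Your setup coincides with the paper's: both proofs begin by recasting \eqref{3} in $\{\lambda_{1\to n-1}\}$-form as the bracket of $R$ with each $\lambda_i$ replaced by $\lambda_i+\partial_{t_i}$, where $\partial_{t_i}$ differentiates only $f_i$ (the paper makes this precise by writing $[{a^1}_{\lambda_1+\partial_{t_1}}\cdots a^n]\otimes (f_1(t_1)\cdots f_{n-1}(t_{n-1})f_n(t))|_{t_1=\cdots=t_{n-1}=t}$), and your verifications of (C1), (C2) and the first $n-2$ transpositions of (C3) from this form are exactly the paper's. Where you diverge is on the last transposition of (C3) and on (C4): you retreat to the distribution axioms (C3$'$), (C4$'$) and propose a divided-power index computation that you do not carry out and yourself flag as the main obstacle. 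This is unnecessary, and it is where your write-up falls short of a proof: the reformulation you have already established disposes of both cases by pure substitution. For the last part of (C3), apply (C3) of $R$ with $\mu_i=\lambda_i+\partial_{t_i}$; the resulting position of $a^n$ is $-\partial-\sum_{i=1}^{n-1}(\lambda_i+\partial_{t_i})$, and since on the product of all the $f$'s one has $\partial_t=\sum_{i=1}^{n}\partial_{t_i}$ (the Leibniz identity you already invoked for the $n$-th relation of (C2)), this equals $-\tilde\partial-\sum_i\lambda_i+\partial_{t_n}$ --- precisely the index that the definition of the $\tilde R$-bracket assigns to $a^n\otimes f_n$ when it sits in a non-final slot with $a^{n-1}\otimes f_{n-1}$ in the final slot. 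The same substitution $\lambda_i\rightsquigarrow\lambda_i+\partial_{t_i}$ into (C4) of $R$, again using $\partial_t=\sum_i\partial_{t_i}$ to identify the index $\sum_{i=1}^{n}(\lambda_i+\partial_{t_i})$ of the inner bracket with the index that \eqref{3} prescribes for the composite argument, gives (C4) for $\tilde R$ with no index bookkeeping at all. Your proposed brute-force route would presumably succeed, but as written it is a plan rather than a proof, and it abandons the one idea that makes the lemma routine; note that the paper itself only writes out (C3) in detail, precisely because the substitution argument makes the remaining axioms formal.
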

\begin{proof}
For convenience of calculations, we convert (\ref{3}) into the corresponding $\{\lambda_1,\cdots,\lambda_{n-1}\}$-brackets.
\begin{align}
&[{a^1\otimes f_1} _{\lambda_1}{\cdots}_{\lambda_{n-1}}a^n\otimes f_n]\notag\\
=&\sum_{k_{1\to n-1}\in\mathbb{N}_+}{\lambda_1}^{(k_1)}\cdots{\lambda_{n-1}}^{(k_{n-1})}(a^1\otimes f_1)_{(k_1)}{\cdots}_{(k_{n-1})}(a^n\otimes f_n)\notag\\
=&\sum_{\substack{k_{1\to n-1}\\j_{1\to n-1}}}{\lambda_1}^{(k_1)}\cdots{\lambda_{n-1}}^{(k_{n-1})}({a^1}_{(k_1+j_1)}{\cdots}_{(k_{n-1}+j_{n-1})}a^n)\otimes((\partial_t^{(j_1)}f_1)\cdots(\partial_t^{(j_{n-1})}f_{n-1})f_n)\notag\\
=&\sum_{\substack{k_{1\to n-1}\\j_{1\to n-1}}}{\lambda_1}^{(k_1)}\cdots{\lambda_{n-1}}^{(k_{n-1})}{\partial_{t_1}}^{(j_1)}\cdots{\partial_{t_{n-1}}}^{(j_{n-1})}({a^1}_{(k_1+j_1)}{\cdots}_{(k_{n-1}+j_{n-1})}a^n)\notag\\
&\qquad\qquad\qquad\qquad\qquad\qquad\qquad\qquad\qquad\qquad\otimes(f_1(t_1)\cdots f_{n-1}(t_{n-1})f_n(t))|_{t_1=t,\cdots,t_{n-1}=t}\notag\\
=&\sum_{\substack{k_1+j_1\\\cdots\\k_{n-1}+j_{n-1}}}{(\lambda_1+\partial_{t_1})}^{(k_1+j_1)}\cdots{(\lambda_{n-1}+\partial_{t_{n-1}})}^{(k_{n-1}+j_{n-1})}({a^1}_{(k_1+j_1)}{\cdots}_{(k_{n-1}+j_{n-1})}a^n)\notag\\
&\qquad\qquad\qquad\qquad\qquad\qquad\qquad\qquad\qquad\qquad\otimes(f_1(t_1)\cdots f_{n-1}(t_{n-1})f_n(t))|_{t_1=t,\cdots,t_{n-1}=t}\notag\\
=&[{a^1}_{\lambda_1+\partial_{t_1}}{\cdots}_{\lambda_{n-1}+\partial_{t_{n-1}}}a^n]\otimes (f_1(t_1)\cdots f_{n-1}(t_{n-1})f_n(t))|_{t_1=t,\cdots,t_{n-1}=t}
\end{align} 
Then we only need to verify (C1)-(C4) are satisfied. Let us check (C3) as an example.

a). For $i=1,2,\cdots,n-2$,
\begin{align*}
&[{a^1\otimes f_1} _{\lambda_1}\cdots {a^{i+1}\otimes f_{i+1}} _{\lambda_{i+1}}{a^i\otimes f_i} _{\lambda_i}\cdots a^n\otimes f_n]\\
=&[{a^1}_{\lambda_1+\partial_{t_1}}\cdots{a^{i+1}}_{\lambda_{i+1}+\partial_{t_{i+1}}} {a^i}_{\lambda_i+\partial_{t_i}}\cdots a^n]\otimes (f_1(t_1)\cdots f_{n-1}(t_{n-1})f_n(t))|_{t_1=t,\cdots,t_{n-1}=t}\\
=&-[{a^1}_{\lambda_1+\partial_{t_1}}\cdots {a^i}_{\lambda_i+\partial_{t_i}}{a^{i+1}}_{\lambda_{i+1}+\partial_{t_{i+1}}}\cdots a^n]\otimes (f_1(t_1)\cdots f_{n-1}(t_{n-1})f_n(t))|_{t_1=t,\cdots,t_{n-1}=t}\\
=&-[{a^1\otimes f_1} _{\lambda_1}{\cdots}_{\lambda_{n-1}}a^n\otimes f_n].
\end{align*}

b).\begin{align*}
&[{a^1\otimes f_1} _{\lambda_1}\cdots {a^{n-1}\otimes f_{n-1}} _{\lambda_{n-1}}a^n\otimes f_n]\\
=&[{a^1}_{\lambda_1+\partial_{t_1}}\cdots{a^{n-1}}_{\lambda_{n-1}+\partial_{t_{n-1}}}a^n]\otimes (f_1(t_1)\cdots f_{n-1}(t_{n-1})f_n(t))|_{t_1=t,\cdots,t_{n-1}=t}\\
=&-[{a^1}_{\lambda_1+\partial_{t_1}}\cdots {a^n} _{-\partial-\lambda_{1\to n-1}-\partial_{t_{1\to n}}+\partial_{t_n}}a^{n-1}]\otimes(f_1(t_1)\cdots f_{n-1}(t_{n-1})f_n(t_n))|_{t_1=t,\cdots,t_n=t}\\
=&-[{a^1}_{\lambda_1+\partial_{t_1}}\cdots {a^n} _{-\tilde{\partial}-\lambda_{1\to n-1}+\partial_{t_n}}a^{n-1}]\otimes(f_1(t_1)\cdots f_{n-1}(t_{n-1})f_n(t_n))|_{t_1=t,\cdots,t_n=t}\\
=&-[{a^1\otimes f_1} _{\lambda_1}\cdots {a^n\otimes f_n} _{-\tilde{\partial}-\lambda_{1\to n-1}}a^{n-1}\otimes f_{n-1}]
\end{align*}
\end{proof}
\begin{corollary}\label{c1}
(i) For each $p\ge1,p\in\mathbb{N}_+$, $\tilde{R}_p:=R[t_1^{\pm1},\cdots,t_p^{\pm1}]$ is an $n$-Lie conformal algebra with $\tilde{\partial}_p:=\partial\otimes1\cdots1+1\otimes\partial_{t_1}\otimes1\cdots1+\cdots+1\otimes\cdots\otimes1\otimes\partial_{t_p}$ and the conformal products\begin{align}
&(a^1\otimes f_1)_{(k_1)}{\cdots}_{(k_{n-1})}(a^n\otimes f_n)\notag\\
=&\sum_{j^{1\to n-1}_{1\to p}}({a^1}_{(k_1+j^1_{1\to p})}{\cdots}_{(k_{n-1}+j_{1\to p}^{n-1})}a^n)\otimes((\partial_{t_1}^{(j^1_1)}\cdots\partial_{t_p}^{(j_p^1)}f_1)\cdots(\partial_{t_1}^{(j_1^{n-1})}\cdots\partial_{t_p}^{(j^{n-1}_p)}f_{n-1})f_n).
\end{align}

(ii) Denote $\mathscr{L}ie_p\mbox{ }R:=\tilde{R}_p/\tilde{\partial}_p\tilde{R}_p$ and $a_{m_{1\to p}}=a_{m_1,\cdots,m_p}:=a\otimes t_1^{m_1}\cdots t_p^{m_p}\in\mathscr{L}ie_p\mbox{ }R$. The bracket
\begin{align}\label{7}
&[a^1_{m_{1\to p}^1},\cdots,a^n_{m_{1\to p}^n}]\notag\\
=&\sum_{j^{1\to n-1}_{1\to p}}\tbinom{m_1^1}{j_1^1}\cdots\tbinom{m_p^{n-1}}{j_p^{n-1}}({a^1}_{(j^1_{1\to p})}{\cdots}_{(j_{1\to p}^{n-1})}a^n)_{m_1^n+m_1^{1\to n-1}-j_1^{1\to n-1},\cdots,m_p^n+m_p^{1\to n-1}-j_p^{1\to n-1}}\notag\\
\end{align} is a well-defined $n$-Lie bracket on $\mathscr{L}ie_p\mbox{ }R$. Besides, $\mathscr{L}ie_p\mbox{ }R$ is a $\mathbb{C}[\partial]$-module via \begin{eqnarray}\label{8}
\partial(a_{m_1,\cdots,m_p})=-(\sum_{i=0}^{p}m_ia_{m_1,\cdots,m_i-1,\cdots m_p}).
\end{eqnarray} 

(iii) The induced $\partial_{t_i}$ are derivations of $\mathscr{L}ie_p\mbox{ }R$, given by
\begin{eqnarray}\label{10}
\partial_{t_i}(a_{m_1,\cdots,m_p})=m_ia_{m_1,\cdots,m_i-1,\cdots m_p}.
\end{eqnarray}  Thus $\partial$ in (ii) is also a derivation of $\mathscr{L}ie_p\mbox{ }R$.
\end{corollary}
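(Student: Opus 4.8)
The plan is to reduce everything to Lemma~\ref{p1} and its iteration. Part~(i) is a straightforward induction on $p$: Lemma~\ref{p1} is exactly the case $p=1$, and assuming $\tilde{R}_{p-1}$ is an $n$-Lie conformal algebra with derivation $\tilde\partial_{p-1}$, one applies Lemma~\ref{p1} again with the base ring $\mathbb{C}[t_p^{\pm1}]$ adjoined, noting that $\tilde R_p=(\tilde R_{p-1})[t_p^{\pm1}]$ and $\tilde\partial_p=\tilde\partial_{p-1}\otimes 1+1\otimes\partial_{t_p}$. The only thing to check is that the conformal products obtained by this two-step construction agree with the single displayed formula in~(i); this is a bookkeeping matter of expanding $\partial_{t_1}^{(j_1)}\cdots\partial_{t_p}^{(j_p)}$ via the multinomial/Leibniz pattern, and since the operators $\partial_{t_1},\dots,\partial_{t_p}$ commute, the nested sums collapse to the claimed one. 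I would present this collapse as the only computation worth writing out.

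For part~(ii), the key observation is Lemma~\ref{l1}: for any $n$-Lie conformal algebra $(\tilde R_p,\tilde\partial_p)$ the $0$-th product ${\bullet}_{(0)}\cdots{\bullet}_{(0)}{\bullet}$ makes $\tilde R_p/\tilde\partial_p\tilde R_p$ an $n$-Lie algebra. So $\mathscr{L}ie_p\,R$ carries an $n$-Lie bracket automatically; what remains is to identify this bracket, in the $t$-monomial basis, with formula~(\ref{7}). For this I would use axiom (C2$'$) iterated: in $\tilde R_p$ one has $\tilde\partial_p$ acting, but the relevant identity is rather the "integration by parts'' rule relating ${a\otimes t^m}$-type elements modulo $\tilde\partial_p\tilde R_p$. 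Concretely, working modulo $\tilde\partial_p\tilde R_p$ and using the explicit conformal product from~(i), the $0$-th product of $a^1\otimes t^{m^1},\dots,a^n\otimes t^{m^n}$ produces $\sum_j ({a^1}_{(j^1)}\cdots a^n)\otimes(\partial_t^{(j^1)}t^{m^1})\cdots t^{m^n}$, and $\partial_t^{(j)}t^m=\binom{m}{j}t^{m-j}$ gives the binomial coefficients; that this descends well-definedly to $\mathscr{L}ie_p\,R$ is precisely the content of Lemma~\ref{l1} applied to $\tilde R_p$. The $\mathbb{C}[\partial]$-module structure~(\ref{8}) is defined by transporting the action of $\partial$ (equivalently $\sum_i\partial_{t_i}$ after shifting) through the quotient: one checks $\partial\tilde\partial_p\tilde R_p\subseteq\tilde\partial_p\tilde R_p$ so the formula descends, and a direct computation on monomials gives~(\ref{8}).

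For part~(iii), I would verify directly that $\partial_{t_i}$, acting on $\tilde R_p$ in the obvious way, is a derivation of the conformal structure — this follows because $\partial_{t_i}$ commutes with all the $\partial_{t_j}$ appearing in the product formula of~(i) and acts only on the polynomial tensor factor by the Leibniz rule. Hence it preserves $\tilde\partial_p\tilde R_p$ (as $\partial_{t_i}$ commutes with $\tilde\partial_p$) and descends to a derivation of the $n$-Lie algebra $\mathscr{L}ie_p\,R$; the monomial formula~(\ref{10}) is immediate from $\partial_{t_i}t^m=m_it^{m-1}$. Since $\partial$ is, up to sign and a shift, expressible through the $\partial_{t_i}$ on $\mathscr{L}ie_p\,R$ — compare~(\ref{8}) and~(\ref{10}) — it is a derivation as well.

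The main obstacle is bookkeeping rather than conceptual: keeping the multi-indices $j^{1\to n-1}_{1\to p}$, $m^{1\to n}_{1\to p}$ straight while showing that the abstract $0$-th bracket coming from Lemma~\ref{l1} really equals~(\ref{7}), and that the quotient by $\tilde\partial_p\tilde R_p$ is compatible with everything. I expect that once the $p=1$ case is written carefully, the general $p$ follows by the same induction used for part~(i), so I would organize the proof to do $p=1$ in detail and then invoke the inductive step.
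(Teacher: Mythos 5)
Your proposal is correct and follows essentially the same route as the paper: part (i) by induction from Lemma~\ref{p1} using $\tilde R_p=\tilde R_{p-1}[t_p^{\pm1}]$, part (ii) by applying Lemma~\ref{l1} to $\tilde R_p$ and identifying the $0$-th bracket on monomials via $\partial_t^{(j)}t^m=\binom{m}{j}t^{m-j}$, and part (iii) from the commutation of $\partial_{t_i}$ with $\tilde\partial_p$. The only cosmetic difference is in (iii), where you check the derivation property of $\partial_{t_i}$ upstairs on the conformal products of $\tilde R_p$ by the Leibniz rule, whereas the paper verifies it directly on the quotient bracket through a binomial-coefficient manipulation; both arguments are sound.
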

\begin{proof}
(i)	Based on the fact that $\tilde{R}_{p+1}=\tilde{R}_p[t_{p+1}^{\pm1}]$, the result can be inductively obtained from Lemma \ref{p1}.

(ii) The statement is obviously true by (i) and Lemma \ref{l1}.

(iii) Since $\partial_{t_i}$ commutes with $\tilde{\partial}_p$, (\ref{10}) is well-defined. The statement can be proved by verifying $$\partial_{t_i}[a^1_{m_{1\to p}^1},\cdots,a^n_{m_{1\to p}^n}]=\sum_{j=1}^{n}[a^1_{m_{1\to p}^1},\cdots,\partial_{t_i}(a^j_{m_{1\to p}^j}),\cdots,a^n_{m_{1\to p}^n}].$$ Take $\partial_{t_1}$ as an example.
\begin{align*}
&\sum_{j=1}^{n}[a^1_{m_1^1,\cdots,m_p^1},\cdots,\partial_{t_1}(a^j_{m_1^j,\cdots,m_p^j}),\cdots,a^n_{m_1^n,\cdots,m_p^n}]\\
=&\sum_{j=1}^{n}m_1^j[a^1_{m_1^1,\cdots,m_p^1},\cdots,a^j_{m_1^j-1,m_2^j\cdots,m_p^j},\cdots,a^n_{m_1^n,\cdots,m_p^n}]\\
=&\sum_{j=1}^{n}\sum_{j_1^1,\cdots,j^{n-1}_p}m_1^j\tbinom{m_1^1}{j_1^1}\cdots\tbinom{m_1^j-1}{j_1^j}\cdots\tbinom{m_p^{n-1}}{j_p^{n-1}}\notag\\
&\qquad({a^1}_{(j^1_1+\cdots+j_p^1)}{\cdots}_{(j_1^{n-1}+\cdots+j_p^{n-1})}a^n)_{m_1^1+\cdots+m_1^n-j_1^1-\cdots-j_1^{n-1}-1,\cdots,m_p^1+\cdots+m_p^n-j_p^1-\cdots-j_p^{n-1}}\notag\\
=&\sum_{j_1^1,\cdots,j^{n-1}_p}\sum_{j=1}^{n}(m_1^j-j_1^j)\tbinom{m_1^1}{j_1^1}\cdots\tbinom{m_1^j}{j_1^j}\cdots\tbinom{m_p^{n-1}}{j_p^{n-1}}\notag\\
&\qquad({a^1}_{(j^1_1+\cdots+j_p^1)}{\cdots}_{(j_1^{n-1}+\cdots+j_p^{n-1})}a^n)_{m_1^1+\cdots+m_1^n-j_1^1-\cdots-j_1^{n-1}-1,\cdots,m_p^1+\cdots+m_p^n-j_p^1-\cdots-j_p^{n-1}}\notag\\
=&\sum_{j_1^1,\cdots,j^{n-1}_p}(m_1^1+\cdots+m_1^n-j_1^1-\cdots-j_1^{n-1})\tbinom{m_1^1}{j_1^1}\cdots\tbinom{m_1^j}{j_1^j}\cdots\tbinom{m_p^{n-1}}{j_p^{n-1}}\notag\\
&\qquad({a^1}_{(j^1_1+\cdots+j_p^1)}{\cdots}_{(j_1^{n-1}+\cdots+j_p^{n-1})}a^n)_{m_1^1+\cdots+m_1^n-j_1^1-\cdots-j_1^{n-1}-1,\cdots,m_p^1+\cdots+m_p^n-j_p^1-\cdots-j_p^{n-1}}\notag\\
=&\partial_{t_1}[a^1_{m_1^1,\cdots,m_p^1},\cdots,a^n_{m_1^n,\cdots,m_p^n}].
\end{align*}
Furthermore, $\partial=-(\partial_{t_1}+\cdots+\partial_{t_p})$ is also a derivation of $\mathscr{L}ie_p\mbox{ }R$.
\end{proof}

Then let us discuss the structure of $\mathscr{L}ie_p\mbox{ }R$.

\begin{proposition}\label{p3}
(i)	If $R$ is a free $\mathbb{C}[\partial]$-module, $\mathscr{L}ie_p\mbox{ }R\simeq V[t_1^{\pm1},\cdots,t_p^{\pm1}]$ for some $\mathbb{C}$-space $V$.

(ii) Let $R$ be a finite generated torsion $\mathbb{C}[\partial]$-module. Then $r_{m_1,\cdots,m_p}=0$ in $\mathscr{L}ie_p\mbox{ }R$ if $m_1,\cdots,m_p\ge0$. Particularly, $\mathscr{L}ie_1\mbox{ }R\simeq Rt_1^{-1}$.
	\end{proposition}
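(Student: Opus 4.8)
The underlying vector space of $\mathscr{L}ie_p\mbox{ }R$ — together with its $\mathbb{C}[\partial]$-module structure and its $\partial_{t_i}$-action — depends only on the $\mathbb{C}[\partial]$-module $R$, not on the conformal products: it is $\tilde{R}_p/\tilde{\partial}_p\tilde{R}_p$ with $\tilde{\partial}_p=\partial+D$, where $D:=\partial_{t_1}+\cdots+\partial_{t_p}$ acts only on the $t_i$. So I would ignore the bracket throughout and analyse this quotient directly.

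For (i), write the free module as $R=\mathbb{C}[\partial]\otimes_{\mathbb{C}}V$ with $V\cong R/\partial R$, so that $\tilde{R}_p=\mathbb{C}[\partial]\otimes V\otimes\mathbb{C}[t_1^{\pm1},\dots,t_p^{\pm1}]$ and every element is uniquely $\sum_{k\ge0}\partial^k\otimes g_k$ with $g_k\in V[t_1^{\pm1},\dots,t_p^{\pm1}]$, only finitely many nonzero. I would then run a division argument on the $\partial$-degree: from $\tilde{\partial}_p(\partial^{k-1}\otimes g)=\partial^k\otimes g+\partial^{k-1}\otimes Dg$ we get $\partial^k\otimes g\equiv-\partial^{k-1}\otimes Dg\pmod{\tilde{\partial}_p\tilde{R}_p}$, and iterating lowers the $\partial$-degree to $0$, so $g\mapsto[1\otimes g]$ gives a surjection $V[t_1^{\pm1},\dots,t_p^{\pm1}]\to\mathscr{L}ie_p\mbox{ }R$. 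Injectivity: if $1\otimes g=\tilde{\partial}_p(\sum_{k=0}^M\partial^k\otimes h_k)$, comparing $\partial$-degrees forces $h_M=0$, then $h_{k-1}+Dh_k=0$ for $k=M,\dots,1$, whence $h_M=\cdots=h_0=0$ and $g=Dh_0=0$. Since this map is visibly compatible with the $\partial_{t_i}$, we get $\mathscr{L}ie_p\mbox{ }R\simeq V[t_1^{\pm1},\dots,t_p^{\pm1}]$.

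For (ii), finite generation and torsionness give a nonzero $\phi\in\mathbb{C}[\partial]$ with $\phi(\partial)R=0$ (the product of generators of the annihilator ideals of a finite generating set). The key computation is that in $\mathscr{L}ie_p\mbox{ }R$ the identity $\tilde{\partial}_p(a\otimes g)=(\partial a)\otimes g+a\otimes Dg$ yields $(\partial a)\otimes g\equiv-a\otimes Dg$, hence inductively $(\partial^k a)\otimes g\equiv(-1)^k a\otimes D^k g$, and therefore, writing $\phi(x)=\sum_kc_kx^k$,
\[0=(\phi(\partial)a)\otimes g\equiv a\otimes\phi(-D)g\qquad\text{for all }a\in R,\ g\in\mathbb{C}[t_1^{\pm1},\dots,t_p^{\pm1}].\]
It then suffices to show that each monomial $t_1^{m_1}\cdots t_p^{m_p}$ with $m_i\ge0$ lies in $\phi(-D)\,\mathbb{C}[t_1^{\pm1},\dots,t_p^{\pm1}]$; in fact $\phi(-D)$ is already surjective on the polynomial ring $\mathbb{C}[t_1,\dots,t_p]$. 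To see this, the invertible linear change of variables $\tau_1=t_1,\ \tau_j=t_j-t_1\ (2\le j\le p)$ converts $D$ into $\partial_{\tau_1}$, so the claim reduces to the one-variable statement that $\phi(-\partial_{\tau_1})\colon\mathbb{C}[\tau_1]\to\mathbb{C}[\tau_1]$ is onto: when $\phi(0)\ne0$ it is invertible on each $\mathbb{C}[\tau_1]_{\le N}$, and in general one writes $\phi(x)=x^e\psi(x)$ with $\psi(0)\ne0$ and uses that $\partial_{\tau_1}$ is onto $\mathbb{C}[\tau_1]$. Hence $a\otimes t_1^{m_1}\cdots t_p^{m_p}\in\tilde{\partial}_p\tilde{R}_p$, i.e.\ $r_{m_1,\dots,m_p}=0$ in $\mathscr{L}ie_p\mbox{ }R$ whenever all $m_i\ge0$.

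For the last assertion I would prove $\tilde{R}_1=Rt_1^{-1}\oplus\tilde{\partial}_1\tilde{R}_1$. Spanning: any $r\otimes t_1^m$ with $m\ge0$ is $\equiv0$ by the previous paragraph, while for $m\le-2$ the relation $r\otimes t_1^m\equiv-\tfrac{1}{m+1}(\partial r)\otimes t_1^{m+1}$ (the $m\ne-1$ case of $\tilde{\partial}_1(r\otimes t_1^{m+1})\equiv0$) raises the exponent step by step to $-1$, landing in $Rt_1^{-1}$. Injectivity of $Rt_1^{-1}\to\mathscr{L}ie_1\mbox{ }R$: if $r\otimes t_1^{-1}=\tilde{\partial}_1(\sum_k r_k\otimes t_1^k)$ (a finite sum), the coefficient of $t_1^{-1}$ gives $r=\partial r_{-1}$ and the coefficient of $t_1^{-j}$ for $j\ge2$ gives $r_{-j+1}=\tfrac{1}{j-1}\partial r_{-j}$, so $r_{-1}=\tfrac{1}{(J-1)!}\partial^{J-1}r_{-J}$ for every $J$; finiteness of the support forces $r_{-1}=0$, hence $r=0$. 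Therefore $\mathscr{L}ie_1\mbox{ }R\simeq Rt_1^{-1}$.

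The two degree/division arguments are routine; the step that genuinely uses the hypotheses is the identity $0\equiv a\otimes\phi(-D)g$ combined with surjectivity of $\phi(-D)$ on the polynomial ring — this is precisely where finite generation and torsionness enter — and, for the final isomorphism, the finiteness of the support of $\sum_k r_k\otimes t_1^k$ in the injectivity argument, which is exactly what breaks down for free $R$, where $\mathscr{L}ie_1\mbox{ }R$ is strictly larger than $Rt_1^{-1}$.
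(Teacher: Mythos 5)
Your proposal is correct, and while part (i) follows essentially the same degree-reduction and coefficient-comparison argument as the paper, your treatment of part (ii) takes a genuinely different and cleaner route. The paper decomposes the torsion module into generalized eigenspaces $R_{\partial-c_i}$, reduces to $R=R_{\partial-c}$, and then exhibits explicit $\tilde{\partial}_1$-preimages of $r\otimes t_1^m$ by separate formulas for the cases $c=0$ and $c\ne 0$, finally extending to general $p$ by induction. You instead take a single nonzero annihilating polynomial $\phi(\partial)$ of the whole finitely generated torsion module, derive the congruence $0\equiv a\otimes\phi(-D)g$ from $(\partial^k a)\otimes g\equiv(-1)^k a\otimes D^k g$, and reduce everything to the surjectivity of the constant-coefficient operator $\phi(-D)$ on $\mathbb{C}[t_1,\dots,t_p]$; the linear change of variables turning $D$ into $\partial_{\tau_1}$ then settles all $p$ at once with no induction and no case split on the eigenvalues. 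This buys uniformity and avoids the paper's explicit preimage formulas, at the cost of the mild extra lemma on surjectivity of $\phi(-D)$ (correctly handled via $\phi(x)=x^e\psi(x)$ with $\psi(0)\ne 0$). Your proof of $\mathscr{L}ie_1\,R\simeq Rt_1^{-1}$ matches the paper's. One small quibble with your closing commentary only: the injectivity of $Rt_1^{-1}\to\mathscr{L}ie_1\,R$ (finiteness of support of $\sum_k r_k\otimes t_1^k$) does not actually break down for free $R$ --- it holds there too, as your own part (i) shows; what fails for free $R$ is the spanning step, since $r\otimes t_1^m$ with $m\ge 0$ no longer dies in the quotient. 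This misattribution does not affect the validity of any step of the proof itself.
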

\begin{proof}
	(i) With the assumption, we can find a  $\mathbb{C}$-space $V$ spanned by $\mathbb{C}[\partial]$-basis of $R$ so that $R\simeq \mathbb{C}[\partial]\otimes V$. Then $\tilde{R}_p\simeq \mathbb{C}[\partial]\otimes V[t_1^{\pm1},\cdots,t_p^{\pm1}]$. For any $\sum\limits_{i=0}^{m}\partial^i\otimes v_i(t_1,\cdots,t_p) \in \mathbb{C}[\partial]\otimes V[t_1^{\pm1},\cdots,t_p^{\pm1}]$, the relation\begin{align*}
	&\sum_{i=0}^{m}\partial^i\otimes v_i(t_1,\cdots,t_p)=v_0+\sum_{i=1}^{m}\partial^i\otimes v_i\\
	=&v_0+\sum_{i=1}^{m}(\partial^{(i-1)+1}\otimes v_i+\partial^{i-1}\otimes(\partial_{t_1}+\cdots+\partial_{t_p})v_i)-(\sum_{i=1}^{m}\partial^{i-1}\otimes(\partial_{t_1}+\cdots+\partial_{t_p})v_i)\\
	=&v_0-(\sum_{i=1}^{m}\partial^{i-1}\otimes(\partial_{t_1}+\cdots+\partial_{t_p})v_i)+\sum_{i=1}^{m}\tilde{\partial}_p(\partial^{i-1}\otimes v_i)\\
	\equiv&v_0-(\sum_{i=1}^{m}\partial^{i-1}\otimes(\partial_{t_1}+\cdots+\partial_{t_p})v_i)\\
	\equiv&v_0-(\partial_{t_1}+\cdots+\partial_{t_p})v_1+(\sum_{i=2}^{m}\partial^{i-2}\otimes(\partial_{t_1}+\cdots+\partial_{t_p})^2v_i)\\
	&\cdots\\
	\equiv&\sum_{i=0}^{m}(-1)^i(\partial_{t_1}+\cdots+\partial_{t_p})^iv_i
	\end{align*}
	holds in $\mathscr{L}ie_p\mbox{ }R$. That is,  $\tilde{R}_p=V[t_1^{\pm1},\cdots,t_p^{\pm1}]+\tilde{\partial}_p\tilde{R}_p$.
	
	Next, we show the sum is direct. Choose a basis $\{e_i\}$ of $V$, which is also a $\mathbb{C}[\partial]$-basis of $R$. Then up to an isomorphism, we can denote an arbitrary element of $V[t_1^{\pm1},\cdots,t_p^{\pm1}]$ as $\sum\limits_{i=1}^{s}e_i\otimes f_i$ and an element of $\tilde{R}_p$ as $\sum\limits_{i=0}^{r}\sum\limits_{j=1}^{l}\partial_ie_j\otimes f_{ij}$. Thus for an element in the intersection of $V[t_1^{\pm1},\cdots,t_p^{\pm1}]$ and $\tilde{\partial}_p\tilde{R}_p$, we have 
	$$\tilde{\partial}_p(\sum\limits_{i=0}^{r}\sum\limits_{j=1}^{l}\partial_ie_j\otimes f_{ij})=\sum\limits_{i=1}^{s}e_i\otimes f_i,$$
	Since \begin{align}
	&\tilde{\partial}_p(\sum\limits_{i=0}^{r}\sum\limits_{j=1}^{l}\partial^ie_j\otimes f_{ij})=\sum\limits_{i=0}^{r}\sum\limits_{j=1}^{l}(\partial^{i+1}e_j\otimes f_{ij}+\partial^ie_j\otimes(\partial_{t_1}+\cdots+\partial_{t_p})f_{ij})\notag\\
	=&\sum\limits_{j=1}^{l}e_j\otimes(\partial_{t_1}+\cdots+\partial_{t_p})f_{0j}+\sum\limits_{i=1}^{r}\sum\limits_{j=1}^{l}(\partial^{i}e_j\otimes f_{i-1\ j}+(\partial_{t_1}+\cdots+\partial_{t_p})f_{ij})+\sum\limits_{j=1}^{l}\partial^{r+1}e_j\otimes f_{rj}\notag
	\end{align} and $\{\partial^ie_j\}$ are independent, we can find successively $f_{rj}=f_{r-1\ j}=\cdots=f_{0j}=0$ for all $j=1,2,\cdots,l$. So there has no non-zero element in the intersection of $V[t_1^{\pm1},\cdots,t_p^{\pm1}]$ and $\tilde{\partial}_p\tilde{R}_p$.
	
	Therefore, $\mathscr{L}ie_p\mbox{ }R\simeq V[t_1^{\pm1},\cdots,t_p^{\pm1}]$.
	
		(ii) From \cite{n}, $R=\oplus_{i=1}^{h}R_{\partial-c_i}$ where $R_{\partial-c_i}:=\{r\in R|(\partial-c_i)^jr=0\mbox{ for some }j \in \mathbb{N}_+\}$ and $c_i$ are fixed constants. Abbreviating $R_{\partial-c_i}[t_1^{\pm1},\cdots,t_p^{\pm1}]$ as $\tilde{R}_{\partial-c_i}$, we have $\mathscr{L}ie_p\mbox{ }R\simeq\oplus_{i=1}^{h}\tilde{R}_{\partial-c_i}/\tilde{\partial}_p\tilde{R}_{\partial-c_i}$. So without loss of generality, we assume that $R=R_{\partial-c}$ for a certain constant $c$. For any $\sum_{i_1,\cdots,i_p}\limits r^{i_1,\cdots,i_p}\otimes t_1^{i_1}\cdots t_p^{i_p}\in R[t_1,\cdots,t_p]$, there exists $\sum_{j_1,\cdots,j_p}\limits s^{j_1,\cdots,j_p}\otimes t_1^{j_1}\cdots t_p^{j_p}\in R[t_1,\cdots,t_p]$ such that $$\tilde{\partial}_p(\sum_{j_1,\cdots,j_p}s^{j_1,\cdots,j_p}\otimes t_1^{j_1}\cdots t_p^{j_p})=\sum_{i_1,\cdots,i_p}r^{i_1,\cdots,i_p}\otimes t_1^{i_1}\cdots t_p^{i_p}.$$ It can be proven by induction on $p$.

	 Let $p=1$. For $r\otimes t_1^m\in R[t_1^{\pm1}]$ where $(\partial-c)^jr=0$ for a fixed $j$, it can be computed that
	\begin{align}
	r\otimes t_1^m=\begin{cases}
	\tilde{\partial}_1(\sum\limits_{i=1}^{j}\frac{m!}{(m+i)!}(-\partial)^{i-1}r\otimes t_1^{m+i}),&\mbox{if } m\ge0, c=0,\\
	-\tilde{\partial}_1(\sum\limits_{i=0}^{m}\frac{m!}{i!}(-f(\partial))^{m+1-i}r\otimes t_1^i),&\mbox{if }m\ge0, c\ne0,\\
	-\tilde{\partial}_1(\sum\limits_{i=0}^{-m-1}\frac{(-m-1-i)!}{(-m-1)!}(-\partial)^{i-1}r\otimes t_1^{m+i})-\frac{1}{(-m-1)!}(-\partial)^{i-1}r\otimes t_1^{-1},&\mbox{if }m\le-1,
	\end{cases}
	\end{align}
	where $f(\partial)$ is a polynomial in $\mathbb{C}[\partial]$ such that $\partial f(\partial)r=r$. By $(\tfrac{\partial}{c}-1)^jr=0$, $f$ is well-defined. Besides, that $r\otimes t_1^{-1}\in\tilde{\partial}_1\tilde{R}_1$ connotes the existence of $\sum\limits_{i=-l}^{q}r_i\otimes t_1^i$ such that
	\begin{align*}
	r\otimes t_1^{-1}=&\tilde{\partial}_1(\sum_{i=-l}^{q}r_i\otimes t_1^i)\\
	=&-lr_{-l}\otimes t_1^{-l-1}+\sum_{i=-l}^{-2}(\partial r_i+(i+1)r_{i+1})\otimes t_1^i+\partial r_{-1}\otimes t_1^{-1}\\
	&\qquad\qquad+\sum_{i=0}^{q-1}(\partial r_i+(i+1)r_{i+1})\otimes t_1^i+\partial r_q\otimes t_1^q.
	\end{align*} 
	Contrasting the coefficients, we have $r_{-l}=0$ when $l$ is large enough. Then $r_{-l+1}=r_{-l+2}=\cdots=r_{-1}=0$ are deduced in succession. So $r=\partial r_{-1}=0$ and thus $\tilde{R}_1=Rt_1^{-1}\oplus \tilde{\partial}_1\tilde{R}_1$. Hence $\mathscr{L}ie_1\mbox{ }R\simeq Rt_1^{-1}$.
	
	Now assume that the statement is true for $p=k$. Let us consider $r\otimes t_1^{m_1}\cdots t_{k+1}^{m_{k+1}}$ where $m_1,\cdots,m_{k+1}\ge0$. Based on the assumption, there exists $f_{m_{k+1}}\in R[t_1,\cdots,t_k]$ such that $\tilde{\partial}_kf_{m_{k+1}}=r\otimes t_1^{m_1}\cdots t_k^{m_k}$. Similarly, $f_{m_{k+1}-1},f_{m_{k+1}-2},\cdots,f_0$ can be found by recurrence relations $\tilde{\partial}_kf_i=-(i+1)f_{i+1}$. Then we have $$r\otimes t_1^{m_1}\cdots t_{k+1}^{m_{k+1}}=\tilde{\partial}_{k+1}(\sum_{i=0}^{m_{k+1}}f_i\otimes t_{k+1}^i).$$
\end{proof}

From now on, let  $R$ be a finite $n$-Lie conformal algebra. By \cite{n}, there is a finite-dimensional space $V$ such that
$$R=(\mathbb{C}[\partial]\otimes V)\oplus Tor\mbox{ }R,$$
where $Tor\mbox{ }R:=\{r\in R|f(\partial)r=0 \mbox{ for some non-zero }f(\partial) \in \mathbb{C}[\partial]\}$.
Choose $\{v^i\}$ as a basis of $V$, and we can define a series of subspaces of $\mathscr{L}ie_p\mbox{ }R$
\begin{eqnarray}\label{11}
\{\mathscr{L}_p^{(m)}=span\{v^i_{m_1,\cdots,m_p}|m_1,\cdots,m_p\ge0,m_1+\cdots+m_p\ge m\}\}.
\end{eqnarray} 

Recall that a topological space is called a linearly compact space if it is a product of finite dimensional discrete spaces \cite{l,g}, and a linearly compact algebra is a linearly compact space equipped with a continuous structure map \cite{g,ck}. From the above definition, we obtain the following  four  results immediately.

(1) From (\ref{7}) and (\ref{8}), $\mathscr{L}_p^{(0)}$ is a $\partial$-stable subalgebra of $\mathscr{L}ie_p\mbox{ }R$. Furthermore, if $\{v^i\}$ are $\mathbb{C}[\partial]$-independent in $R$, $\mathscr{L}_p^{(0)}\simeq V[t_1,\cdots,t_p]$ by Proposition \ref{p3}.

(2) If $i<j$, $\mathscr{L}_p^{(i)}\supset \mathscr{L}_p^{(j)}$ and $\mathscr{L}_p^{(i)}/\mathscr{L}_p^{(j)}$ is of finite dimension.

(3) For any $i_1,\cdots,i_p\ge0$, $[\mathscr{L}_p^{(i_1)},\cdots,\mathscr{L}_p^{(i_p)}]\subset \mathscr{L}_p^{(i_1+\cdots+i_p-s)}$ for some $s$ by (\ref{7}).

(4)  $\mathscr{L}_p^{(0)}$ can be seen as a linearly compact $n$-Lie algebra with the topology induced by $\{\mathscr{L}_p^{(m)}\}$, as it is the topological product of $\mathscr{L}_p^{(i)}/\mathscr{L}_p^{(i+1)},i\ge0$ as a space and has a continuous $n$-Lie bracket as an algebra. 

It should be noted that although $\mathscr{L}_p^{(m)}$ depends on the choice of $\{v^i\}$, $\mathscr{L}_p^{(0)}$ and the induced topology on $\mathscr{L}_p^{(0)}$ is independent of that choice. So we can have the following definition.
\begin{definition}
	\begin{em}
	Denote $(\mathscr{L}ie_p\mbox{ }R)_\_:=\mathscr{L}_p^{(0)}=\{a_{m_1,\cdots,m_p}|a\in R,m_1,\cdots,m_p\ge0\}.$ Then the topological $n$-Lie algebra $(\mathscr{L}ie_p\mbox{ }R)_\_$ is called the \textit{annihilation algebra of level $p$} associated to $n$-Lie conformal algebra $R$. The completion $\widehat{(\mathscr{L}ie_p\mbox{ }R)_\_}$ of $(\mathscr{L}ie_p\mbox{ }R)_\_$ with respect to its topology is called the \textit{completed annihilation algebra of level $p$} of $R$.
	\end{em}
\end{definition}
\begin{example}
	For the current $n$-Lie conformal algebra $Cur\ \mathfrak{g}$ associated to an $n$-Lie algebra $\mathfrak{g}$, $(\mathscr{L}ie_p\mbox{ }Cur\ \mathfrak{g})_\_$ is nothing but the $n$-Lie algebra $\mathfrak{g}[t_1,\cdots,t_p]$ endowed with the topology induced by $\{\mathscr{L}_p^{(m)}=\sum\limits_{m_1+\cdots+m_p=m}t_1^{m_1}\cdots t_p^{m_p}\mathfrak{g}[t_1,\cdots,t_p]\}$.
\end{example} 
A finite $n$-Lie conformal algebra $R$ is called \textit{torsionless} if $Tor\ R=0$. To show the importance of $ (\mathscr{L}ie_p\mbox{ }R)_\_ $ to torsionless finite $R$, the following lemma is necessary.
\begin{lemma}
	For $a^1,\cdots,a^n\in R,m_1^1,\cdots,m_p^n\in\mathbb{N}$,
\begin{align}\label{9}
&({a^1}_{(\sum m^1_{1\to p})}{\cdots}_{(\sum m^{n-1}_{1\to p})}a^n)_{m_{1\to p}^n}=\sum_{j_{1\to p}^{1\to n-1}}(-1)^{\sum m_{1\to p}^{1\to n-1}-\sum j_{1\to p}^{1\to n-1}}\tbinom{m_1^1}{j_1^1}\cdots\tbinom{m_p^{n-1}}{j_p^{n-1}}\notag\\
&\qquad\qquad\qquad\qquad\qquad[a^1_{j_{1\to p}^1},\cdots,a^{n-1}_{j_{1\to p}^{n-1}},a^n_{m_1^n+m_1^{1\to n-1}-j_1^{1\to n-1},\cdots,m_p^n+m_p^{1\to n-1}-j_p^{1\to n-1}}].
\end{align}
\end{lemma}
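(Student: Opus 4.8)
The plan is to prove identity~(\ref{9}) by essentially inverting the defining formula~(\ref{7}) of the bracket on $\mathscr{L}ie_p\mbox{ }R$. Recall from~(\ref{7}) that
\[
[a^1_{m_{1\to p}^1},\cdots,a^n_{m_{1\to p}^n}]
=\sum_{j^{1\to n-1}_{1\to p}}\tbinom{m_1^1}{j_1^1}\cdots\tbinom{m_p^{n-1}}{j_p^{n-1}}({a^1}_{(j^1_{1\to p})}{\cdots}_{(j_{1\to p}^{n-1})}a^n)_{m_1^n+m_1^{1\to n-1}-j_1^{1\to n-1},\cdots,m_p^n+m_p^{1\to n-1}-j_p^{1\to n-1}},
\]
so the right-hand side of~(\ref{9}) is obtained from this by inserting the expansion of each bracket and then collapsing a Vandermonde-type alternating sum of products of binomial coefficients. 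Concretely, I would substitute the formula for $[a^1_{j_{1\to p}^1},\cdots,a^{n-1}_{j_{1\to p}^{n-1}},a^n_{\ast}]$ (with $\ast=m^n+m^{1\to n-1}-j^{1\to n-1}$ componentwise) into the right side of~(\ref{9}), getting a double sum over the $j$'s and over fresh summation indices $l^{1\to n-1}_{1\to p}$ coming from the $(k)$-products inside. Each $\mathscr{L}ie_p\mbox{ }R$-element $({a^1}_{(l^1_{1\to p})}{\cdots}_{(l^{n-1}_{1\to p})}a^n)_{\ast\ast}$ that appears will carry a coefficient that is a product, over the $p$ coordinates, of sums of the form $\sum_{j}(-1)^{m-j}\binom{m}{j}\binom{j}{l}$, which equals $(-1)^{m-l}\binom{m}{l}\sum_{j}(-1)^{l-j}\binom{m-l}{j-l}=(-1)^{m}\binom{m}{l}\delta_{m,l}\cdot(\text{sign})$, i.e. it vanishes unless the two truncation patterns coincide. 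The surviving term is exactly the left-hand side of~(\ref{9}).

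The cleaner route, which I would actually write up, is to avoid the messy multi-index bookkeeping by exploiting that both sides are $p$-fold ``tensor products'' of the $p=1$ situation: all the generating-function manipulations in Lemma~\ref{p1} and Corollary~\ref{c1} show that passing from $p$ to $p+1$ variables just multiplies binomial factors coordinate by coordinate. So first I would prove~(\ref{9}) for $p=1$, where the claim reads
\[
({a^1}_{(m^1)}{\cdots}_{(m^{n-1})}a^n)_{m^n}
=\sum_{j^1,\dots,j^{n-1}}(-1)^{m^1+\cdots+m^{n-1}-j^1-\cdots-j^{n-1}}\tbinom{m^1}{j^1}\cdots\tbinom{m^{n-1}}{j^{n-1}}[a^1_{j^1},\dots,a^{n-1}_{j^{n-1}},a^n_{m^n+m^1+\cdots+m^{n-1}-j^1-\cdots-j^{n-1}}],
\]
and this is a one-variable binomial inversion of~(\ref{7}) at $p=1$: substitute~(\ref{7}) and use $\sum_{j\ge l}(-1)^{j-l}\binom{m}{j}\binom{j}{l}=\delta_{m,l}$ in each of the $n-1$ slots. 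Then induct on $p$ using $\tilde R_{p+1}=\tilde R_p[t_{p+1}^{\pm1}]$ from Corollary~\ref{c1}(i), noting that the bracket~(\ref{7}) and the $(k)$-products factor through the $t_{p+1}$-degree exactly as the ordinary binomial convolution in one extra variable, so the identity for $p$ tensored with the one-variable identity yields the identity for $p+1$.

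The main obstacle is purely notational: keeping the $(n-1)\times p$ arrays of indices $j^{1\to n-1}_{1\to p}$, $l^{1\to n-1}_{1\to p}$, $m^{1\to n}_{1\to p}$ straight while carrying out the interchange of summation and the binomial-inversion collapse, and making sure the alternating signs $(-1)^{\sum m - \sum j}$ track correctly through the substitution. There is also a small well-definedness point — the expression $({a^1}_{(\sum m^1_{1\to p})}{\cdots}_{(\sum m^{n-1}_{1\to p})}a^n)_{m_{1\to p}^n}$ and each $[a^1_{j^1_{1\to p}},\dots]$ are cosets in $\tilde R_p/\tilde\partial_p\tilde R_p$, so one should check the identity is compatible with the relations $\partial_{t_i}$-action of~(\ref{8})--(\ref{10}); but this is automatic since~(\ref{7}) itself is already well-defined and~(\ref{9}) is an algebraic consequence of~(\ref{7}). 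I expect the write-up to be about half a page, most of it the $p=1$ binomial computation.
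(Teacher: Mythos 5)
Your argument is correct, and in substance it is the same inversion of (\ref{7}) that the paper performs, only executed at the level of coefficients rather than generating functions: the paper packages (\ref{7}) as the identity $[a^1_{\lambda^1_1,\dots,\lambda^1_p},\dots,a^n_{\lambda^n_1,\dots,\lambda^n_p}]=[{a^1}_{\lambda^1_1+\cdots+\lambda^1_p}{\cdots}a^n]_{\lambda^1_1+\cdots+\lambda^n_1,\dots,\lambda^1_p+\cdots+\lambda^n_p}$ and then substitutes $\lambda^n_i\mapsto\lambda^n_i-\lambda^1_i-\cdots-\lambda^{n-1}_i$; expanding the resulting divided powers $(\lambda^n_i-\sum_k\lambda^k_i)^{(m)}$ produces exactly your alternating binomial coefficients. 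Your first route --- substitute (\ref{7}) into the right-hand side of (\ref{9}) and collapse using $\sum_{j}(-1)^{m-j}\binom{m}{j}\binom{j}{l}=\delta_{m,l}$ independently in each of the $(n-1)p$ index slots --- is complete as stated and already works uniformly in $p$, because the coefficient in (\ref{7}) is a product over the coordinates $i=1,\dots,p$; so the ``cleaner'' second route via induction on $p$ is unnecessary overhead and I would drop it. Two minor points for the write-up: the displayed evaluation of the inner sum in your first paragraph is garbled (the correct chain is $\binom{m}{j}\binom{j}{l}=\binom{m}{l}\binom{m-l}{j-l}$, whence $\sum_j(-1)^{m-j}\binom{m}{j}\binom{j}{l}=\binom{m}{l}(1-1)^{m-l}=\delta_{m,l}$), although the identity you invoke in the $p=1$ discussion is the right one; and your remark on well-definedness is correct as stated, since (\ref{9}) is a finite linear consequence of (\ref{7}), which Corollary \ref{c1} has already shown to be well defined on $\tilde R_p/\tilde\partial_p\tilde R_p$. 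The only thing the paper's generating-function formulation buys over yours is the intermediate identity relating $[a^1_{\lambda^1},\dots,a^n_{\lambda^n}]$ to the conformal bracket, which is convenient shorthand later (e.g.\ in Corollary \ref{c2}), but it is not needed for the lemma itself.
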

\begin{proof}
	Let $a_{\lambda_1,\cdots,\lambda_p}:=\sum\limits_{m_1,\cdots,m_p\in\mathbb{N}_+}{\lambda_1}^{(m_1)}\cdots{\lambda_p}^{(m_p)} a_{m_1,\cdots,m_p}$, and we have
	\begin{align}
	&[a^1_{\lambda^1_1,\cdots,\lambda_p^1},\cdots,a^n_{\lambda^n_1,\cdots,\lambda_p^n}]\notag\\
	=&\sum_{m_1^1,\cdots,m_p^n}{\lambda^1_1}^{(m_1^1)}\cdots{\lambda^n_p}^{(m_p^n)}[a^1_{m_1^1,\cdots,m_p^1},\cdots,a^n_{m_1^n,\cdots,m_p^n}]\notag\\
	=&\sum_{m_1^1,\cdots,m_p^n,j_1^1,\cdots,j_p^{n-1}}\tbinom{m_1^1}{j_1^1}\cdots\tbinom{m_p^{n-1}}{j_p^{n-1}}{\lambda^1_1}^{(m_1^1)}\cdots{\lambda^n_p}^{(m_p^n)}\notag\\
	&\qquad({a^1}_{(j^1_1+\cdots+j_p^1)}{\cdots}_{(j_1^{n-1}+\cdots+j_p^{n-1})}a^n)_{m_1^1+\cdots+m_1^n-j_1^1-\cdots-j_1^{n-1},\cdots,m_p^1+\cdots+m_p^n-j_p^1-\cdots-j_p^{n-1}}\notag\\
	=&\sum_{{\tiny \shortstack{$ j^1_1+\cdots+j_p^1,\cdots,j_1^{n-1}+\cdots+j_p^{n-1}, $\\$ m_1^1+\cdots+m_1^n-j_1^1-\cdots-j_1^{n-1},\cdots,m_p^1+\cdots+m_p^n-j_p^1-\cdots-j_p^{n-1} $}}}(\lambda^1_1+\cdots+\lambda_p^1)^{(j^1_1+\cdots+j_p^1)}\cdots\notag\\
	&\qquad(\lambda^{n-1}_1+\cdots+\lambda_p^{n-1})^{(j^{n-1}_1+\cdots+j_p^{n-1})}(\lambda^1_1+\cdots+\lambda_1^n)^{(m_1^1+\cdots+m_1^n-j_1^1-\cdots-j_1^{n-1})}\cdots\notag\\
	&\qquad(\lambda^1_p+\cdots+\lambda_p^n)^{(m_p^1+\cdots+m_p^n-j_p^1-\cdots-j_p^{n-1})}\notag\\
	&\qquad({a^1}_{(j^1_1+\cdots+j_p^1)}{\cdots}_{(j_1^{n-1}+\cdots+j_p^{n-1})}a^n)_{m_1^1+\cdots+m_1^n-j_1^1-\cdots-j_1^{n-1},\cdots,m_p^1+\cdots+m_p^n-j_p^1-\cdots-j_p^{n-1}}\notag\\
	=&[{a^1}_{\lambda^1_1+\cdots+\lambda_p^1}{\cdots}_{\lambda^{n-1}_1+\cdots+\lambda_p^{n-1}}a^n]_{\lambda^1_1+\cdots+\lambda_1^n,\cdots,\lambda^1_p+\cdots+\lambda_p^n}
	\end{align}
	Replacing $\lambda^1_1+\cdots+\lambda_1^n,\cdots,\lambda^1_p+\cdots+\lambda_p^n$ by $\lambda_1^n,\cdots,\lambda_p^n$ yields $$[{a^1}_{\lambda^1_1+\cdots+\lambda_p^1}{\cdots}_{\lambda^{n-1}_1+\cdots+\lambda_p^{n-1}}a^n]_{\lambda_1^n,\cdots,\lambda_p^n}=[a^1_{\lambda^1_1,\cdots,\lambda_p^1},\cdots,a^n_{\lambda^n_1-\lambda^1_1-\cdots-\lambda^{n-1}_1,\cdots,\lambda_p^n-\lambda^1_p-\cdots-\lambda_p^{n-1}}].$$
	Then (\ref{9}) can be deduced from 
		\begin{align}
	&[{a^1}_{\lambda^1_1+\cdots+\lambda_p^1}{\cdots}_{\lambda^{n-1}_1+\cdots+\lambda_p^{n-1}}a^n]_{\lambda_1^n,\cdots,\lambda_p^n}\notag\\
	=&\sum_{m_1^1,\cdots,m_p^n}{\lambda^1_1}^{(m_1^1)}\cdots{\lambda^n_p}^{(m_p^n)}({a^1}_{(m^1_1+\cdots+m_p^1)}{\cdots}_{(m^{n-1}_1+\cdots+m_p^{n-1})}a^n)_{m_1^n,\cdots,m_p^n}\notag\\
	=&[a^1_{\lambda^1_1,\cdots,\lambda_p^1},\cdots,a^n_{\lambda^n_1-\lambda^1_1-\cdots-\lambda^{n-1}_1,\cdots,\lambda_p^n-\lambda^1_p-\cdots-\lambda_p^{n-1}}]\notag\\
	=&\sum_{k_1^1,\cdots,k_p^n}{\lambda^1_1}^{(k_1^1)}\cdots{\lambda^{n-1}_p}^{(k_p^{n-1})}(\lambda^n_1-\lambda^1_1-\cdots-\lambda^{n-1}_1)^{(k^n_1)}\cdots(\lambda_p^n-\lambda^1_p-\cdots-\lambda_p^{n-1})^{(k_p^n)}\notag\\
	&\qquad[a^1_{k_1^1,\cdots,k_p^1},\cdots,a^n_{k_1^n,\cdots,k_p^n}]\notag\\
	=&\sum_{m_1^1,\cdots,m_p^n,j_1^1,\cdots,j_p^{n-1}}{\lambda^1_1}^{(m_1^1)}\cdots{\lambda^n_p}^{(m_p^n)}(-1)^{m_1^1+\cdots+m_p^{n-1}-j_1^1-\cdots-j_p^{n-1}}\tbinom{m_1^1}{j_1^1}\cdots\tbinom{m_p^{n-1}}{j_p^{n-1}}\notag\\
	&\qquad[a^1_{j_1^1,\cdots,j_p^1},\cdots,a^{n-1}_{j_1^{n-1},\cdots,j_p^{n-1}},a^n_{m_1^1+\cdots+m_1^n-j_1^1-\cdots-j_1^{n-1},\cdots,m_p^1+\cdots+m_p^n-j_p^1-\cdots-j_p^{n-1}}].
	\end{align}
	
\end{proof}

\begin{corollary}\label{c2}
	(i) $a_{\lambda_1,\cdots,\lambda_p}=0$ if and only if $a\in Tor\mbox{ }R$.
	
	(ii) Let $R$ be  an $n$-Lie conformal algebra $R$ and a free $\mathbb{C}[\partial]$-module. Then $R$ is commutative if and only if $(\mathscr{L}ie_p\mbox{ }R)_\_$ is commutative.
\end{corollary}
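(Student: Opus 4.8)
The plan is to prove (i) first and then derive (ii) from it. For (i), I would start from the decomposition $R=(\mathbb{C}[\partial]\otimes V)\oplus Tor\ R$ of \cite{n}. Since this is a splitting of $\mathbb{C}[\partial]$-modules and each $\partial_{t_i}$ preserves both summands, tensoring with $\mathbb{C}[t_1^{\pm1},\dots,t_p^{\pm1}]$ and quotienting by $\tilde\partial_p$ yields a $\mathbb{C}$-linear decomposition $\mathscr{L}ie_p\ R\cong\mathscr{L}ie_p(\mathbb{C}[\partial]\otimes V)\oplus\mathscr{L}ie_p(Tor\ R)$ under which the map $a\mapsto a_{m_{1\to p}}$ is the direct sum of the corresponding maps. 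Writing $a=b+c$ with $b\in\mathbb{C}[\partial]\otimes V$ and $c\in Tor\ R$, Proposition \ref{p3}(ii) applied to the finitely generated torsion module $Tor\ R$ gives $c_{m_1,\dots,m_p}=0$ for all $m_1,\dots,m_p\ge0$, hence $a_{m_1,\dots,m_p}=b_{m_1,\dots,m_p}$ there. This already settles one direction: if $a\in Tor\ R$ then $b=0$, so $a_{\lambda_1,\dots,\lambda_p}=0$.

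For the reverse implication of (i) it suffices to show that $b\mapsto(b_{m_1,\dots,m_p})_{m_i\ge0}$ is injective on $\mathbb{C}[\partial]\otimes V$. Here I would invoke the isomorphism $\mathscr{L}ie_p(\mathbb{C}[\partial]\otimes V)\cong V[t_1^{\pm1},\dots,t_p^{\pm1}]$ of Proposition \ref{p3}(i), under which $v^i_{m_1,\dots,m_p}\leftrightarrow v^i t_1^{m_1}\cdots t_p^{m_p}$ for a basis $\{v^i\}$ of $V$, while $\partial$ acts as $-(\partial_{t_1}+\cdots+\partial_{t_p})$ by (\ref{8}). Summing the generating series gives $v^i_{\lambda_1,\dots,\lambda_p}\leftrightarrow v^i\,e^{\lambda_1 t_1}\cdots e^{\lambda_p t_p}$, and since $(\partial a)_{\lambda_{1\to p}}=\partial(a_{\lambda_{1\to p}})$, for $b=\sum_i g_i(\partial)v^i$ one obtains $b_{\lambda_1,\dots,\lambda_p}\leftrightarrow\bigl(\sum_i g_i(-(\lambda_1+\cdots+\lambda_p))v^i\bigr)e^{\lambda_1 t_1}\cdots e^{\lambda_p t_p}$. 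As $e^{\lambda_1 t_1}\cdots e^{\lambda_p t_p}$ is a unit in the formal power series ring and the $v^i$ are linearly independent, $b_{\lambda_1,\dots,\lambda_p}=0$ forces $g_i=0$ for every $i$, i.e. $b=0$; hence $a=c\in Tor\ R$.

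For (ii), the implication ``$R$ commutative $\Rightarrow(\mathscr{L}ie_p\ R)_\_$ commutative'' is immediate from (\ref{7}), which expresses every $n$-Lie bracket on $\mathscr{L}ie_p\ R$ as a sum of terms involving the conformal products of $R$. Conversely, assume $(\mathscr{L}ie_p\ R)_\_$ is commutative and fix $a^1,\dots,a^n\in R$ and $k_1,\dots,k_{n-1}\ge0$. In identity (\ref{9}) choose $m^l_1=k_l$ and $m^l_i=0$ for $i\ge2$, $l\le n-1$, and $m^n_i=\nu_i$ with $\nu_1,\dots,\nu_p\ge0$ arbitrary; then every bracket occurring on the right-hand side of (\ref{9}) has all of its $t$-indices $\ge0$, hence lies in $(\mathscr{L}ie_p\ R)_\_$ and vanishes by assumption. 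Thus $({a^1}_{(k_1)}{\cdots}_{(k_{n-1})}a^n)_{\nu_1,\dots,\nu_p}=0$ for all $\nu_i\ge0$, i.e. $({a^1}_{(k_1)}{\cdots}_{(k_{n-1})}a^n)_{\lambda_1,\dots,\lambda_p}=0$. By (i), the element ${a^1}_{(k_1)}{\cdots}_{(k_{n-1})}a^n$ lies in $Tor\ R$, which is $0$ since $R$ is free. As the $a^j$ and $k_l$ are arbitrary, $R$ is commutative.

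I expect the main obstacle to be the bookkeeping in (i): verifying that the torsion/free splitting of $R$ is genuinely inherited by $\mathscr{L}ie_p\ R$ in a way compatible with the maps $a\mapsto a_{m_{1\to p}}$, and then pushing the generating-function identification $v^i_{\lambda_{1\to p}}\leftrightarrow v^i e^{\lambda_1 t_1}\cdots e^{\lambda_p t_p}$ through the $\mathbb{C}[\partial]$-action (\ref{8}). Once (i) is in hand, (ii) is a short deduction via (\ref{9}), its only delicate point being the freedom to realize prescribed sums $k_l=\sum_i m^l_i$ while keeping all exponents non-negative, which is exactly what forces all brackets on the right of (\ref{9}) into the annihilation subalgebra.
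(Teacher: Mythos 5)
Your proposal is correct and follows essentially the same route as the paper: part (i) combines Proposition \ref{p3} with the $\partial$-action (\ref{8}) to isolate the free part of $a$, and part (ii) is exactly the intended deduction from (\ref{7}) and (\ref{9}). Your write-up is considerably more detailed than the paper's terse argument (in particular the generating-function identification $b_{\lambda_{1\to p}}\leftrightarrow(\sum_i g_i(-\lambda_1-\cdots-\lambda_p)v^i)e^{\lambda_1 t_1}\cdots e^{\lambda_p t_p}$ makes precise the paper's one-line claim that $a_{\lambda_1,\cdots,\lambda_p}\ne 0$ for $m_i$ sufficiently large), but the underlying ideas coincide.
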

\begin{proof}
	(i) From Proposition \ref{p3}, $a_{\lambda_1,\cdots,\lambda_p}=0$ if $a\in Tor\mbox{ }R$. If $a\notin Tor\mbox{ }R$, $a=\sum\limits_{i,j}c_{ij}\partial^iv^j+r$ where non-zero element $\sum\limits_{i,j}c_{ij}\partial^iv^j\in\mathbb{C}[\partial]\otimes V$ and $r\in Tor\mbox{ }R$. Since \begin{eqnarray}
	(\partial a)_{m_1,\cdots,m_p}=-(\sum_{i=0}^{p}m_ia_{m_1,\cdots,m_i-1,\cdots m_p}) 
	\end{eqnarray} holds in $\mathscr{L}ie_p\mbox{ }R$,  $a_{\lambda_1,\cdots,\lambda_p}\ne0$ for $m_i$ sufficiently large.
	
	(ii) It can be immediately deduced from (\ref{7}) and (\ref{9}).
\end{proof}
 Based on the above discussion, we can give the following theorem.
\begin{theorem}
	Suppose $R$ and $S$ are torsionless finite $n$-Lie conformal algebras. Then for every topological $n$-Lie algebra homomorphism $\varphi:(\mathscr{L}ie_p\mbox{ }R)_\_\to(\mathscr{L}ie_p\mbox{ }S)_\_ $ compatible with $\partial_{t_i}$-actions, there exists a unique homomorphism of $n$-Lie conformal algebras $\phi:R\to S$ inducing $\varphi$.
\end{theorem}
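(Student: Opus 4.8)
The plan is to reconstruct $\phi$ from $\varphi$ by exploiting the structure description of $(\mathscr{L}ie_p\ R)_\_$ for torsionless $R$. By Proposition~\ref{p3}(i), writing $R=\mathbb{C}[\partial]\otimes V$ with $V$ a finite-dimensional space and $\{v^i\}$ a basis, we have $(\mathscr{L}ie_p\ R)_\_\simeq V[t_1,\dots,t_p]=\{a_{m_1,\dots,m_p}\mid a\in R,\ m_i\ge0\}$, and similarly for $S$ with a space $W$. First I would observe that, since $\varphi$ commutes with all $\partial_{t_i}$ and $R$ is torsionless, the generating series $a_{\lambda_1,\dots,\lambda_p}=\sum {\lambda_1}^{(m_1)}\cdots{\lambda_p}^{(m_p)}a_{m_1,\dots,m_p}$ transform by $\varphi(a_{\lambda_1,\dots,\lambda_p})=\varphi(a)_{\lambda_1,\dots,\lambda_p}$ if we set $\varphi(a):=$ (the element of $S$ whose series is $\varphi(a_{0,\dots,0})$ developed out); more carefully, because $\partial_{t_i}(a_{m_1,\dots,m_p})=m_i a_{\dots,m_i-1,\dots}$, the $\partial_{t_i}$-equivariance forces $\varphi$ to be determined by its values on the ``constant-in-$t$'' elements $a_{0,\dots,0}$, and these values must again be images of $R$ under a $\mathbb{C}$-linear map. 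This gives a candidate $\mathbb{C}$-linear map $V\to (\mathscr{L}ie_p\ S)_\_$.

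Next I would promote this to a $\mathbb{C}[\partial]$-module map $\phi:R\to S$. Using $\partial(a_{m_1,\dots,m_p})=-\sum_i m_i a_{\dots,m_i-1,\dots}$ from (\ref{8}), the element $\partial=-(\partial_{t_1}+\cdots+\partial_{t_p})$ acts on $(\mathscr{L}ie_p\ R)_\_$ in a way recorded inside the algebra, so $\partial_{t_i}$-equivariance of $\varphi$ already yields $\partial$-equivariance; combined with Corollary~\ref{c2}(i) (the series $a_{\lambda_1,\dots,\lambda_p}$ is nonzero precisely when $a\notin Tor\ R$, hence for torsionless $R$ the assignment $a\mapsto a_{\lambda_1,\dots,\lambda_p}$ is injective), the candidate map is forced to send $v^i$ to a well-defined element of $S$ and then extends uniquely and $\mathbb{C}[\partial]$-linearly to $\phi:R\to S$. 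Uniqueness of $\phi$ is immediate from this injectivity: any conformal homomorphism inducing $\varphi$ must agree with $\varphi$ on the $a_{0,\dots,0}$, which determine it on $R$.

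It then remains to check that this $\phi$ is a homomorphism of $n$-Lie conformal algebras, i.e.\ that it intertwines the $\{\lambda_{1\to n-1}\}$-brackets, and conversely that the conformal homomorphism $\phi$ induces exactly $\varphi$ on the annihilation algebras. Both directions are handled by the bracket--coefficient dictionary of Lemma~\ref{l1}--Corollary~\ref{c1}: formula (\ref{7}) expresses the $n$-Lie bracket on $\mathscr{L}ie_p\ R$ in terms of the conformal products ${a^1}_{(k_1)}\cdots a^n$, and formula (\ref{9}) inverts this, writing the conformal products as $\mathbb{C}$-linear combinations of $n$-Lie brackets of the $a_{m_{1\to p}}$. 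Applying $\varphi$ to (\ref{9}) for $R$, using that $\varphi$ preserves the $n$-Lie bracket and is $\partial_{t_i}$-equivariant, and re-reading the result through (\ref{9}) for $S$, one gets $\phi\bigl({a^1}_{(k_1)}\cdots a^n\bigr)=\phi(a^1)_{(k_1)}\cdots\phi(a^n)$ for all $k_i$, equivalently compatibility with the $\lambda$-brackets; the converse that $\phi$ induces $\varphi$ follows by reading (\ref{7}) in the other direction.

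\textbf{Main obstacle.} The genuinely delicate point is the very first step: showing that a topological, $\partial_{t_i}$-equivariant $n$-Lie algebra map $\varphi$ must actually come from a $\mathbb{C}$-linear map on the finite-dimensional ``generating'' data, rather than mixing different $t$-degrees in an uncontrolled way. Continuity (compatibility with the filtration $\{\mathscr{L}_p^{(m)}\}$) plus $\partial_{t_i}$-equivariance should pin down $\varphi$ on $a_{m_1,\dots,m_p}$ from its value on $a_{0,\dots,0}$ via repeated application of the $\partial_{t_i}$, but one must check that $\varphi(a_{0,\dots,0})$ indeed lies in the degree-zero part $W$ (not merely in the completion) and depends $\mathbb{C}$-linearly and $\mathbb{C}[\partial]$-compatibly on $a$; this is where torsionlessness of both $R$ and $S$ — via Corollary~\ref{c2}(i) — is essential, and where the bookkeeping with the variables $\lambda_i^j$ in the proof of Lemma (\ref{9}) has to be marshalled carefully.
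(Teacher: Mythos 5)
There is a genuine gap at exactly the step you single out as the main obstacle, and your proposed resolution runs in the wrong direction. You claim that $\partial_{t_i}$-equivariance forces $\varphi$ to be determined by its values on the elements $a_{0,\dots,0}$, to be recovered ``via repeated application of the $\partial_{t_i}$''. But by (\ref{10}) the operators $\partial_{t_i}$ \emph{lower} the $t$-degrees, $\partial_{t_i}(a_{m_1,\dots,m_p})=m_ia_{m_1,\dots,m_i-1,\dots,m_p}$, so equivariance determines $\varphi(a_{m_1,\dots,m_i-1,\dots,m_p})$ from $\varphi(a_{m_1,\dots,m_p})$ and never the reverse; the element $a_{0,\dots,0}$ is annihilated by every $\partial_{t_i}$ and generates nothing. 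Worse, since $(\partial s)_{0,\dots,0}=0$ by (\ref{8}), the evaluation $s\mapsto s_{0,\dots,0}$ kills $\partial S$, so $\varphi(a_{0,\dots,0})$ can only ever detect $\phi(a)$ modulo $\partial S$ and cannot recover the $\partial$-dependence of $\phi(a)$. Concretely, take $R=S=Cur\,\mathfrak g$ with $\mathfrak g$ abelian and $p=1$: for any $h\in\mathfrak g$ the map $\varphi(g_m)=g_m+mh_{m-1}$ is a continuous, $\partial_t$-equivariant homomorphism of the (abelian) annihilation algebras, it is induced by $\phi(g)=g-\partial h$, and yet $\varphi(g_0)=g_0$; your recipe would read off the identity from the degree-zero data and output the wrong $\phi$.

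The correct construction — and the one the paper uses — works from the whole family rather than from degree zero: expand $\varphi(r_{m_1,\dots,m_p})=\sum_{i,j}c^i_{j_1,\dots,j_p}s^i_{j_1,\dots,j_p}$ for \emph{all} $m_1,\dots,m_p$ in a $\mathbb{C}[\partial]$-basis $\{s^i\}$ of $S$; $\partial_{t_i}$-equivariance forces $c^i_{j}=0$ unless $j_l\le m_l$ and forces $c^i_{j}=\tfrac{m_1!}{j_1!}\cdots\tfrac{m_p!}{j_p!}\,d^i_{m_1-j_1,\dots,m_p-j_p}$ with the $d$'s independent of $m$; continuity with respect to the filtration $\{\mathscr{L}_p^{(m)}\}$ then kills all but finitely many $d$'s, so the entire family $\{\varphi(r_{m_1,\dots,m_p})\}$ is of the form $\{(\phi(r))_{m_1,\dots,m_p}\}$ for a single element $\phi(r)\in S$. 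The remaining ingredients of your proposal — uniqueness from the injectivity of $a\mapsto a_{\lambda_1,\dots,\lambda_p}$ on torsionless algebras (Corollary \ref{c2}), $\partial$-compatibility from $\partial=-\sum_i\partial_{t_i}$, and bracket compatibility via the dictionary (\ref{7})/(\ref{9}) — are correct and agree with the paper, but they are all downstream of producing $\phi(r)$ in the first place, which is precisely the step your argument gets backwards.
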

\begin{proof}
	Choose $r\in R$. As $S$ is torsionless, by (\ref{10}) and Corollary \ref{c2}, there is a unique $s\in S$ such that $\varphi(r_{m_1,\cdots,m_p})=s_{m_1,\cdots,m_p}$ for all $m_1,\cdots,m_p\in \mathbb{N}$. Let $\{s^i\}$ be a $\mathbb{C}[\partial]$-basis of $S$. Then there exist $c^i_{j_1,\cdots,j_p}\in\mathbb{C}$ such that
	$\varphi(r_{m_1,\cdots,m_p})=\sum\limits_{i,j_1,\cdots,j_p}c^i_{j_1,\cdots,j_p}s^i_{j_1,\cdots,j_p}$. Since $\varphi$ commutes with $\partial_{t_i}$, $j_l\le m_l$ for $l=1,\cdots,p$. Thus the relation can be rephrased as \begin{align}
	\varphi(r_{m_1,\cdots,m_p})=&\sum_{i}\sum_{j_1=0}^{m_1}\cdots\sum_{j_p=0}^{m_p}\tfrac{m_1!}{j_1!}\cdots\tfrac{m_p!}{j_p!}d^i_{m_1-j_1,\cdots,m_p-j_p}s^i_{j_1,\cdots,j_p}\notag\\
	=&\sum_{i}\sum_{j_1=0}^{m_1}\cdots\sum_{j_p=0}^{m_p}d^i_{m_1-j_1,\cdots,m_p-j_p}{\partial_{t_1}}^{m_1-j_1}\cdots{\partial_{t_p}}^{m_p-j_p}s^i_{m_1,\cdots,m_p}
	\end{align} with constants $d^i_{j_1,\cdots,j_p}$.
	Then  $\phi(r)=\sum\limits_{i,j_1,\cdots,j_p}d^i_{j_1,\cdots,j_p}{\partial_{t_1}}^{j_1}\cdots{\partial_{t_p}}^{j_p}s^i$ is the only map satisfies $(\phi(r))_{{m_1,\cdots,m_p}}=\varphi(r_{m_1,\cdots,m_p})$ for all $m_1,\cdots,m_p\in \mathbb{N}$, if $\phi(r)\in S$.
	
	Denote the origin of $(\mathscr{L}ie_p\mbox{ }R)_\_$ resp. $ (\mathscr{L}ie_p\mbox{ }S)_\_ $ by $\textbf{0}_R$ resp. $\textbf{0}_S$ and the nuclear base (\ref{11}) of $(\mathscr{L}ie_p\mbox{ }R)_\_$ resp. $ (\mathscr{L}ie_p\mbox{ }S)_\_ $ by $\mathscr{L}_p^{(m)}(R)$ resp. $\mathscr{L}_p^{(m)}(S)$. Since $\varphi(\textbf{0}_R)=\textbf{0}_S$ and $\varphi$ is continuous, for $\mathscr{L}_p^{(1)}(S)\ni\textbf{0}_S$, there exists a neighborhood $U$ of $\textbf{0}_R$ such that $\varphi(U)\subset\mathscr{L}_p^{(1)}(S)$. That is, $\varphi(\mathscr{L}_p^{(m)}(R))\subset\mathscr{L}_p^{(1)}(S)$ for $m$ sufficiently large, which means $d^i_{m_1,\cdots,m_p}=0$ for $m_1+\cdots+m_p\gg0$. So $\phi$ is a well-defined map from $R$ to $S$. The commutativity of $\varphi$ with $\partial_{t_i}$ brings about the compatibility of $\phi$ and $\partial$. By the definition of $\mathscr{L}ie_p\mbox{ }R$, $(\partial r)_{m_1,\cdots,m_p}=\partial (r_{m_1,\cdots,m_p})$. Then we have
	\begin{align*}
	(\phi(\partial r))_{m_1,\cdots,m_p}=\varphi((\partial r)_{m_1,\cdots,m_p})=&\varphi(\partial (r_{m_1,\cdots,m_p}))\\
	=&\partial\varphi(r_{m_1,\cdots,m_p})\\
	=&\partial((\phi(r))_{{m_1,\cdots,m_p}})\\
	=&(\partial (\phi(r)))_{m_1,\cdots,m_p}.
	\end{align*} Thus $\phi(\partial r)=\partial (\phi(r))$. Similarly, by (\ref{9}), $\phi$ commutes with the conformal brackets as $\varphi$ commutes with the $n$-Lie brackets. Hence $\phi$ is an $n$-Lie conformal algebra homomorphism between $R$ and $S$.
	
\end{proof}
	\section{Cohomology of $n$-Lie conformal algebra}\label{s4}
In this section, we establish the representation and cohomology theory of an $n$-Lie conformal algebra $R$ and explain their relation with that of $n$-Lie conformal algebras $ (\mathscr{L}ie_p\mbox{ }R)_\_ $.
	
	\begin{definition}
		\begin{em}
		Let $V$ and $W$ be $\mathbb{C}[\partial]$-modules. A $\mathbb{C}$-linear map $f^{\lambda_1,\cdots,\lambda_{n-2}}: V\to W$ is called a \textit{conformal linear map} if $${f^{\lambda_1,\cdots,\lambda_{n-2}}}_\lambda\partial v=(\partial+\lambda+\lambda_1+\cdots+\lambda_{n-2})({f^{\lambda_1,\cdots,\lambda_{n-2}}}_\lambda v)$$ for $v\in V$.
		\end{em}
	\end{definition}
Similar to \cite{ak}, we use $Chom(V,W)$ to denote the set of all conformal linear maps from $V$ to $W$ and $Cend\ V:=Chom(V,V)$. Then the generalized $Cend\ V$ also has a Lie conformal algebra structure defined by\begin{eqnarray}\label{12}
&{(\partial f^{\lambda_{1\to n-2}})}_\lambda v:=-\lambda{f^{\lambda_{1\to n-2}}}_\lambda v,\notag\\
&[{f^{\lambda_{1\to n-2}}}_\lambda{g^{\mu_{1\to n-2}}}]_\mu v:={f^{\lambda_{1\to n-2}}}_\lambda({g^{\mu_{1\to n-2}}}_{\mu-\lambda}v)-{g^{\mu_{1\to n-2}}}_{\mu-\lambda}({f^{\lambda_{1\to n-2}}}_\lambda v).
\end{eqnarray} Besides, it is obvious for an $n$-Lie conformal algebra $R$ that $L(\overrightarrow{a}):={a^1}_{\lambda_1}{\cdots}_{\lambda_{n-2}}a^{n-1}$ is an element in $Cend\ R$ by $$L(\overrightarrow{a})_\lambda r=[{a^1}_{\lambda_1}{\cdots}_{\lambda_{n-2}}{a^{n-1}}_\lambda r], r\in R.$$
\begin{definition}
	\begin{em}
		For an $n$-Lie conformal algebra $R$, $f^{\lambda_{1\to n-2}}\in Cend\ R$ is called a \textit{derivation} of $R$ if 
		\begin{align}\label{13}
		{f^{\lambda_{1\to n-2}}}_{\lambda}	[{a^1} _{\mu_1}{a^2}_{\mu_2}{\cdots}_{\mu_{n-1}}&a^n]=\sum_{i=1}^{n-1}[{a^1} _{\mu_1}\cdots{ a^{i-1}}_{\mu_{i-1}}({f^{\lambda_{1\to n-2}}}_{\lambda}{ a^i})_{\lambda+\lambda_{1\to n-2}+\mu_i}{ a^{i+1}}_{\mu_{i+1}}{\cdots}_{\mu_{n-1}}{a^n}]\notag\\
		&\qquad\qquad+[{a^1} _{\mu_1}{\cdots}_{\mu_{n-2}}{a^{n-1}}_{\mu_{n-1}}({f^{\lambda_{1\to n-2}}}_{\lambda}{ a^n})].
		\end{align} 
	\end{em}
\end{definition}

\begin{proposition}\label{p4}
	(i) Denote the set of all derivations of an $n$-Lie conformal algebra $R$ by $Der\ R$. Then $Der\ R$ is a subalgebra of $Cend\ R$.
	
	(ii) Every $L(\overrightarrow{a})$ is a derivation of $R$. And elements in the linear space spanned by such derivations are called \textbf{inner} derivations.
\end{proposition}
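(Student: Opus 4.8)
The plan is to prove the two parts of Proposition~\ref{p4} separately, both by direct verification against the defining identity~(\ref{13}).

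For part (i), I would first check that $Der\ R$ is a $\mathbb{C}[\partial]$-submodule of $Cend\ R$: given a derivation $f^{\lambda_{1\to n-2}}$, applying the definition~(\ref{12}) of the action of $\partial$ on $Cend\ R$ one sees that $\partial f^{\lambda_{1\to n-2}}$ again satisfies~(\ref{13}), since multiplying by $-\lambda$ passes through the $\lambda$-slot uniformly on both sides (using axiom (C2) to absorb the spectral shifts). Then the core computation is closure under the conformal bracket: take $f^{\lambda_{1\to n-2}}, g^{\mu_{1\to n-2}}\in Der\ R$ and apply $[f_\lambda g]_\nu = f_\lambda(g_{\nu-\lambda}\,\cdot\,) - g_{\nu-\lambda}(f_\lambda\,\cdot\,)$ to a bracket $[{a^1}_{\sigma_1}\cdots{}_{\sigma_{n-1}}a^n]$. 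Expanding $f_\lambda$ and $g_{\nu-\lambda}$ through the bracket via~(\ref{13}), the ``doubly hit the same slot'' terms cancel in pairs (this is exactly the same cancellation pattern as in the ordinary Lie-algebra check that a commutator of derivations is a derivation), and the ``hit two different slots'' terms are symmetric under $f\leftrightarrow g$ and likewise cancel, leaving precisely the $n$ terms of~(\ref{13}) for $[f_\lambda g]_\nu$. The only real bookkeeping nuisance is keeping track of which spectral parameters shift by $\lambda+\lambda_{1\to n-2}$ versus $(\nu-\lambda)+\mu_{1\to n-2}$; I would set up compact notation for the shifted slots before expanding so the cancellations are visible.

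For part (ii), I would verify directly that $L(\overrightarrow{a})_\lambda = [{a^1}_{\lambda_1}\cdots{}_{\lambda_{n-2}}{a^{n-1}}_\lambda\,\cdot\,]$ satisfies~(\ref{13}). This is essentially a restatement of the fundamental (Filippov-type) axiom (C4): applying $L(\overrightarrow{a})_\lambda$ to $[{b^1}_{\mu_1}\cdots{}_{\mu_{n-1}}b^n]$ means forming $[{a^1}_{\lambda_1}\cdots{a^{n-1}}_\lambda[{b^1}_{\mu_1}\cdots b^n]]$, and (C4), together with the skew-symmetry axiom (C3) used to move $a^{n-1}$ (the slot carrying $\lambda$) into the last position and back, rewrites this as the sum over $i$ of brackets in which one $b^i$-slot is replaced by $[{a^1}_{\lambda_1}\cdots{a^{n-1}}_\lambda b^i]$ with the appropriate total spectral parameter $\lambda+\lambda_{1\to n-2}+\mu_i$, plus the term acting on the last slot $b^n$. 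One should also first confirm $L(\overrightarrow{a})\in Cend\ R$, i.e. the conformal-linearity relation ${L(\overrightarrow{a})}_\lambda\partial r = (\partial+\lambda+\lambda_1+\cdots+\lambda_{n-2})({L(\overrightarrow{a})}_\lambda r)$, which is immediate from the second line of (C2). Finally, since $Der\ R$ is a $\mathbb{C}$-subspace by part (i), the linear span of the $L(\overrightarrow{a})$ lies in $Der\ R$, justifying the terminology ``inner derivations.''

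The main obstacle is purely organizational: matching the index conventions of (C3) and (C4) — in particular the asymmetry between the first $n-1$ arguments (which carry explicit $\lambda_i$'s) and the last argument (which carries $-\partial-\lambda_1-\cdots-\lambda_{n-1}$) — against the asymmetry already built into the derivation identity~(\ref{13}). I expect that once the last-slot terms are handled correctly using (C2) and the second relation of (C3), the combinatorics of part (ii) reduces verbatim to (C4), and part (i) reduces to the classical derivation-bracket cancellation; no genuinely new idea is needed beyond careful parameter tracking.
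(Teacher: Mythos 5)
Your plan coincides with the paper's proof: part (i) is the classical commutator-of-derivations computation (expand both compositions via (\ref{13}); the cross terms where $f$ and $g$ hit two \emph{different} slots cancel in pairs, while the same-slot compositions assemble into $[{f^{\lambda_{1\to n-2}}}_\lambda g^{\mu_{1\to n-2}}]$ acting on each slot --- note your wording swaps which family cancels, though you cite the correct classical pattern and conclusion), and part (ii) is exactly the paper's reduction to (C4) via (C3), where the spectral parameter $-\partial-\mu_i-\lambda_{1\to n-2}$ appearing in the displaced slot is converted back to $\lambda$ by the conformal sesquilinearity of $L(\overrightarrow{b})$ --- precisely the two ``key formulas'' the paper isolates before the main computation. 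There is no gap beyond the bookkeeping you already anticipate.
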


\begin{proof}
	(i) We only need to show $[{f^{\lambda_{1\to n-2}}}_\lambda{g^{\mu_{1\to n-2}}}]$ defined in (\ref{12}) is a derivation for all $f^{\lambda_{1\to n-2}},g^{\mu_{1\to n-2}}\in Der\ R$. By definition, 
	\begin{align*}
	&[{f^{\lambda_{1\to n-2}}}_\lambda{g^{\mu_{1\to n-2}}}]_\mu[{a^1} _{\xi_1}{a^2}_{\xi_2}{\cdots}_{\xi_{n-1}}a^n]\\
	=&{f^{\lambda_{1\to n-2}}}_\lambda({g^{\mu_{1\to n-2}}}_{\mu-\lambda}[{a^1} _{\xi_1}{a^2}_{\xi_2}{\cdots}_{\xi_{n-1}}a^n])-{g^{\mu_{1\to n-2}}}_{\mu-\lambda}({f^{\lambda_{1\to n-2}}}_\lambda [{a^1} _{\xi_1}{a^2}_{\xi_2}{\cdots}_{\xi_{n-1}}a^n]).\\
	\end{align*}
	Substituting in (\ref{13}), we can obtain a lengthy expansion. Then it can be noted that the terms in the form of $$[{a^1} _{\xi_1}\cdots({f^{\lambda_{1\to n-2}}}_\lambda a^j)_{\lambda+\lambda_{1\to n-2}+\xi_j}\cdots({g^{\mu_{1\to n-2}}}_{\mu-\lambda}{ a^i})_{\mu-\lambda+\mu_{1\to n-2}+\xi_i}{\cdots}_{\xi_{n-1}}{a^n}]$$ occur in pairs with opposite signs and thus cancel each other out. With the remaining items, we have \begin{align*}
	&{f^{\lambda_{1\to n-2}}}_\lambda({g^{\mu_{1\to n-2}}}_{\mu-\lambda}[{a^1} _{\xi_1}{a^2}_{\xi_2}{\cdots}_{\xi_{n-1}}a^n])-{g^{\mu_{1\to n-2}}}_{\mu-\lambda}({f^{\lambda_{1\to n-2}}}_\lambda [{a^1} _{\xi_1}{a^2}_{\xi_2}{\cdots}_{\xi_{n-1}}a^n])\\
	=&\sum_{i=1}^{n-1}[{a^1} _{\xi_1}\cdots{ a^{i-1}}_{\xi_{i-1}}({f^{\lambda_{1\to n-2}}}_\lambda({g^{\mu_{1\to n-2}}}_{\mu-\lambda}{ a^i}))_{\mu+\lambda_{1\to n-2}+\mu_{1\to n-2}+\xi_i}{ a^{i+1}}_{\xi_{i+1}}{\cdots}_{\xi_{n-1}}{a^n}]\notag\\
	&+[{a^1} _{\xi_1}{\cdots}_{\xi_{n-2}}{a^{n-1}}_{\xi_{n-1}}({f^{\lambda_{1\to n-2}}}_\lambda({g^{\mu_{1\to n-2}}}_{\mu-\lambda}{ a^n}))]\\
	&-\sum_{i=1}^{n-1}[{a^1} _{\xi_1}\cdots{ a^{i-1}}_{\xi_{i-1}}({g^{\mu_{1\to n-2}}}_{\mu-\lambda}({f^{\lambda_{1\to n-2}}}_{\lambda}{ a^i}))_{\mu+\lambda_{1\to n-2}+\mu_{1\to n-2}+\xi_i}{ a^{i+1}}_{\xi_{i+1}}{\cdots}_{\xi_{n-1}}{a^n}]\notag\\
	&-[{a^1} _{\xi_1}{\cdots}_{\xi_{n-2}}{a^{n-1}}_{\xi_{n-1}}({g^{\mu_{1\to n-2}}}_{\mu-\lambda}({f^{\lambda_{1\to n-2}}}_{\lambda}{ a^n}))].
	\end{align*} This completes the proof.
	
	(ii) To prove the statement, we need the following key formulas. For any $f^{\lambda_{1\to n-2}},g^{\mu_{1\to n-2}}\in Cend\ R, a\in R$, \begin{align}
		{f^{\lambda_{1\to n-2}}}_{-\partial-\lambda}({g^{\mu_{1\to n-2}}}_{-\partial-\mu}a)=&\sum_{m}{f^{\lambda_{1\to n-2}}}_{-\partial-\lambda}((-\partial-\mu)^{(m)}({g^{\mu_{1\to n-2}}}_{(m)}a))\notag\\
		=&\sum_{m}(\lambda-\mu-\lambda_{1\to n-2})^{(m)}({f^{\lambda_{1\to n-2}}}_{-\partial-\lambda}({g^{\mu_{1\to n-2}}}_{(m)}a))\notag\\
		=&{f^{\lambda_{1\to n-2}}}_{-\partial-\lambda}({g^{\mu_{1\to n-2}}}_{\lambda-\mu-\lambda_{1\to n-2}}a),
	\end{align} 
	and \begin{align}
	&{f^{\lambda_{1\to n-2}}}_{\lambda}({g^{\mu_{1\to n-2}}}_{\mu}a)={f^{\lambda_{1\to n-2}}}_{-\partial-\xi}({g^{\mu_{1\to n-2}}}_{-\partial-\eta}a)\notag\\
	=&{f^{\lambda_{1\to n-2}}}_{-\partial-\xi}({g^{\mu_{1\to n-2}}}_{\xi-\eta-\lambda_{1\to n-2}}a)={f^{\lambda_{1\to n-2}}}_{\lambda}({g^{\mu_{1\to n-2}}}_{\mu-\lambda-\lambda_{1\to n-2}}a).
	\end{align}
	
	Let $L(\overrightarrow{b})={b^1}_{\lambda_1}{\cdots}_{\lambda_{n-2}}b^{n-1}$ be an inner derivation of $R$. Then from (C3)-(C4) and the above formulas, we have 
	\begin{align*}
	&{L(\overrightarrow{b})}_{\lambda}	[{a^1} _{\mu_1}{a^2}_{\mu_2}{\cdots}_{\mu_{n-1}}a^n]\notag\\
	=&\sum_{i=1}^{n-1}(-1)^{n-i}[{a^1} _{\mu_1}\cdots{ a^{i-1}}_{\mu_{i-1}}{ a^{i+1}}_{\mu_{i+1}}{\cdots}_{\mu_{n-1}}{a^n}_{-\partial-\lambda-\lambda_{1\to n-2}-\mu_{1\to n-1}}({L(\overrightarrow{b})}_{-\partial-\mu_i-\lambda_{1\to n-2}}{ a^i})]\notag\\
	&+[{a^1} _{\mu_1}{\cdots}_{\mu_{n-2}}{a^{n-1}}_{\mu_{n-1}}({L(\overrightarrow{b})}_{\lambda+\mu_{1\to n-1}}{ a^n})]\\
	=&\sum_{i=1}^{n-1}(-1)^{n-i}[{a^1} _{\mu_1}\cdots{ a^{i-1}}_{\mu_{i-1}}{ a^{i+1}}_{\mu_{i+1}}{\cdots}_{\mu_{n-1}}{a^n}_{-\partial-\lambda-\lambda_{1\to n-2}-\mu_{1\to n-1}}({L(\overrightarrow{b})}_{\lambda}{ a^i})]\notag\\
	&+[{a^1} _{\mu_1}{\cdots}_{\mu_{n-2}}{a^{n-1}}_{\mu_{n-1}}({L(\overrightarrow{b})}_{\lambda}{ a^n})]\\
	=&\sum_{i=1}^{n-1}[{a^1} _{\mu_1}\cdots{ a^{i-1}}_{\mu_{i-1}}({L(\overrightarrow{b})}_{\lambda}{ a^i})_{\lambda+\lambda_{1\to n-2}+\mu_i}{ a^{i+1}}_{\mu_{i+1}}{\cdots}_{\mu_{n-1}}{a^n}]\notag\\
	&+[{a^1} _{\mu_1}{\cdots}_{\mu_{n-2}}{a^{n-1}}_{\mu_{n-1}}({L(\overrightarrow{b})}_{\lambda}{ a^n})].
	\end{align*}
\end{proof}
\begin{remark}
	It can be easily computed that \begin{align}\label{17}
	[L(\overrightarrow{a})_\lambda L(\overrightarrow{b})]_\mu r=&\sum_{i=1}^{n-2}({b^1} _{\mu_1}\cdots{ b^{i-1}}_{\mu_{i-1}}({L(\overrightarrow{a})}_{\lambda}{ b^i})_{\lambda+\lambda_{1\to n-2}+\mu_i}{ b^{i+1}}_{\mu_{i+1}}\cdots{b^{n-1}})_{\mu-\lambda}r\notag\\
	&+({b^1} _{\mu_1}\cdots{ b^{n-2}}_{\mu_{n-2}}({L(\overrightarrow{a})}_{\lambda}{ b^{n-1}}))_{\lambda_{1\to n-2}+\mu}r
	\end{align} for $L(\overrightarrow{a})={a^1}_{\lambda_1}{\cdots}_{\lambda_{n-2}}a^{n-1},L(\overrightarrow{b})={b^1}_{\mu_1}{\cdots}_{\mu_{n-2}}b^{n-1},r\in R$.
	
\end{remark}
	\begin{definition}
		\begin{em}
			Let $R$ be an $n$-Lie conformal algebra. A $\mathbb{C}[\partial]$-module $M$ is called a \textit{conformal representation} of $R$ if it is endowed with the $\{\lambda_{1\to n-1}\}$-action  ${a^1} _{\lambda_1}{\cdots}_{\lambda_{n-2}}{a^{n-1}}_{\lambda_{n-1}}v$ satisfying
			
			(M1) ${a^1} _{\lambda_1}{\cdots}_{\lambda_{n-2}}{a^{n-1}}\in Cend\ M$, i.e. ${a^1} _{\lambda_1}{\cdots}_{\lambda_{n-2}}{a^{n-1}}_{\lambda_{n-1}}v\in \mathbb{C}[\partial,\lambda_1,\lambda_2,\cdots,\lambda_{n-1}]\otimes_{\mathbb{C}[\partial]}M$,
			
			(M2) ${a^1} _{\lambda_1}{\cdots}_{\lambda_{i-1}}{\partial a^i}_{\lambda_i}{\cdots}_{\lambda_{n-2}}{a^{n-1}}_{\lambda_{n-1}}v=-\lambda_i{a^1} _{\lambda_1}{\cdots}_{\lambda_{n-2}}{a^{n-1}}_{\lambda_{n-1}}v$ for $i=1,2,\cdots,n-1$,
			
			\qquad\mbox{ }$[{a^1} _{\lambda_1}{\cdots}_{\lambda_{n-2}}{a^{n-1}}_{\lambda_{n-1}}\partial v]=(\partial+\lambda_1+\lambda_2+\cdots+\lambda_{n-1}){a^1} _{\lambda_1}{\cdots}_{\lambda_{n-2}}{a^{n-1}}_{\lambda_{n-1}}v$,
			
			(M3) ${a^1} _{\lambda_1}\cdots{ a^i}_{\lambda_i}{ a^{i+1}}_{\lambda_{i+1}}{\cdots}_{\lambda_{n-1}}v=-{a^1} _{\lambda_1}\cdots{ a^{i+1}}_{\lambda_{i+1}}{ a^i}_{\lambda_i}{\cdots}_{\lambda_{n-1}}v$ for $i=1,2,\cdots,n-2$,

			(M4) $\sum\limits_{i=1}^{n}(-1)^{n-i}{a^1} _{\lambda_1}\cdots{ a^{i-1}}_{\lambda_{i-1}}{ a^{i+1}}_{\lambda_{i+1}}{\cdots}_{\lambda_{n-1}}{a^n}_{\lambda_n}({ a^i}_{\lambda_i}{b^1}_{\lambda_{n+1}}{\cdots}_{\lambda_{2n-3}}{b^{n-2}}_{\lambda_{2n-2}}v)$
			
			\qquad$=[{a^1} _{\lambda_1}{a^2}_{\lambda_2}{\cdots}_{\lambda_{n-1}}a^n]_{\lambda_1+\cdots+\lambda_n}{b^1}_{\lambda_{n+1}}{\cdots}_{\lambda_{2n-3}}{b^{n-2}}_{\lambda_{2n-2}}v$,
			
			\qquad${a^1}_{\lambda_1}{\cdots}_{\lambda_{n-2}}{a^{n-1}}_{\lambda}({b^1}_{\mu_1}{\cdots}_{\mu_{n-2}}{b^{n-1}}_{\mu }v)-{b^1}_{\mu_1}{\cdots}_{\mu_{n-2}}{b^{n-1}}_{\mu}({a^1}_{\lambda_1}{\cdots}_{\lambda_{n-2}}{a^{n-1}}_{\lambda}v)$
			
			\qquad$=[({a^1}_{\lambda_1}{\cdots}_{\lambda_{n-2}}a^{n-1})_\lambda ({b^1}_{\mu_1}{\cdots}_{\mu_{n-2}}b^{n-1})]_{\lambda+\mu} v,$\\
			for all $a^1,\cdots,a^n,b^1,\cdots,b^{n-1} \in R,v\in M$.
		\end{em}
	\end{definition}
\begin{remark}
	In essence, the notion of representations is given by replacing the last element in $\{\lambda_{1\to n-1}\}$-bracket of $R$ by an element of $M$. Since it is a left module action, we lose part of skew-symmetry in (C3). Due to the two different positions of elements in the Filippov identity (C4), there are two equations in (M4).  
\end{remark}
From (C4) and the above definition, the following lemma can be deduced.
\begin{lemma}
Suppose $M$ is a conformal representation of $n$-Lie conformal algebra $R$. Then $R\oplus M$ is an $n$-Lie conformal algebra with the conformal brackets defined as
\begin{align}\label{14}
&[{a^1+v^1} _{\lambda_1}{a^2+v^2}_{\lambda_2}{\cdots}_{\lambda_{n-1}}{a^n+v^n}]\notag\\
=&[{a^1} _{\lambda_1}{\cdots}_{\lambda_{n-1}}a^n]+\sum_{i=1}^{n-1}(-1)^{n-i}{a^1} _{\lambda_1}\cdots{ a^i}_{\lambda_i}{ a^{i+1}}_{\lambda_{i+1}}{\cdots}_{\lambda_{n-1}}{a^n}_{-\partial-\sum\lambda_i}v^i+[{a^1} _{\lambda_1}\cdots{a^{n-1}}_{\lambda_{n-1}}v^n].
\end{align} 
\end{lemma}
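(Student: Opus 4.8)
The plan is to verify the four axioms (C1)--(C4) directly for the bracket \eqref{14} on the $\mathbb{C}[\partial]$-module $R\oplus M$, using the corresponding axioms (C1)--(C4) of $R$ and the module axioms (M1)--(M4). Note first that $R\oplus M$ carries the obvious $\mathbb{C}[\partial]$-module structure, and that the bracket \eqref{14} is built so that it restricts to the given bracket on $R$, so the restriction of all identities to $R$ is automatic. The conceptual point is the one flagged in the remark preceding Definition of a representation: \eqref{14} is nothing but the bracket of $R$ with the last slot allowed to take values in $R\oplus M$, where an element $a^i+v^i$ in the $i$-th slot (for $i<n$) contributes by being moved to the last slot via the skew-symmetry convention (C3), picking up the sign $(-1)^{n-i}$ and the spectral shift $\lambda_i\mapsto -\partial-\sum\lambda_j$.

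First I would check (C1) and (C2), which are immediate: (C1) holds because each summand in \eqref{14} lies in $\mathbb{C}[\partial,\lambda_{1\to n-1}]\otimes_{\mathbb{C}[\partial]}(R\oplus M)$ by (C1) for $R$ and (M1); (C2) is a term-by-term check combining (C2) for $R$ with (M2), taking care of the extra $\partial$-prefactor appearing on the $v^i$-terms because of the shifted spectral parameter $-\partial-\sum\lambda_j$. Next I would verify the skew-symmetry axiom (C3). For transpositions of adjacent slots $i,i+1$ with $i\le n-2$ this follows from (C3) for $R$ together with (M3) when both transposed slots are ``active'' in a given summand, and from a relabeling of the summation index when one of them is the distinguished $v$-slot. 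The interesting part of (C3) is the last relation, which swaps slots $n-1$ and $n$ with the spectral change $\lambda_{n-1}\leftrightarrow -\partial-\lambda_{1\to n-1}$: here one uses that \eqref{14} was precisely designed so that the $v^n$-term and the $v^{n-1}$-term interchange, and the $R$-part obeys the last identity in (C3) for $R$; I would write out both sides and match them summand by summand.

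The main obstacle is the Filippov identity (C4) for \eqref{14}. The strategy is to expand both sides of (C4) for arbitrary elements $a^i+v^i$, $b^j+w^j$ of $R\oplus M$ and collect terms according to how many of the inserted vectors $v^i,w^j$ appear: the terms with no vectors reduce to (C4) for $R$; the terms linear in a single $w^j$ (a vector coming from the $b$-block) reduce, after using the sign/spectral conventions to move that vector to the last slot, to the first equation of (M4); and the terms linear in a single $v^i$ (a vector from the $a$-block) reduce to a combination of (M3), (M4) and skew-symmetry of the $a$-arguments. The bookkeeping of signs $(-1)^{n-i}$ and of the shifted spectral arguments $-\partial-\sum\lambda_j$ is where the computation is genuinely delicate; the key simplifying observation is the pair of ``move to the last slot'' formulas already used in the proof of Proposition~\ref{p4}(ii), namely that $f^{\lambda_{1\to n-2}}_{-\partial-\lambda}(g^{\mu_{1\to n-2}}_{-\partial-\mu}a)=f^{\lambda_{1\to n-2}}_{-\partial-\lambda}(g^{\mu_{1\to n-2}}_{\lambda-\mu-\lambda_{1\to n-2}}a)$, which let one normalize all spectral parameters consistently before comparing. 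Once every term is normalized this way, both sides of (C4) for $R\oplus M$ become finite sums that match term by term against (C4) for $R$ and the two identities in (M4), completing the proof.
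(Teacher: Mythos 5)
Your proposal is correct and follows essentially the same route as the paper: (C1)--(C3) are reduced directly to (C1)--(C3) of $R$ together with (M1)--(M3), and the Filippov identity is verified by expanding both sides of (C4) and sorting the summands into pure $R$-terms (handled by (C4) of $R$), terms containing a $v^i$, and terms containing a $w^j$, the latter two being matched against the two equations of (M4) with the help of (M3). The only difference is bookkeeping-level: the paper attributes the $v$-terms to the first equation of (M4) and the $w$-terms to the second, while you state the assignment slightly differently, but this does not affect the validity of the argument.
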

\begin{proof}
	Our goal is to show (C1)-(C4) hold for (\ref{14}). In fact, (C1)-(C3) of $R\oplus M$ can be immediately deduced from (C1)-(C3) of $R$ and(M1)-(M3). A slightly more complex discussion is given for the verification of \begin{align}\label{15}
	&[[{a^1+v^1} _{\lambda_1}{\cdots}_{\lambda_{n-1}}{a^n+v^n}]_{\lambda_1+\cdots+\lambda_n}{b^2+w^2}_{\lambda_{n+1}}{\cdots}_{\lambda_{2n-2}}{b^n+w^n}]\notag\\
	=&\sum\limits_{i=1}^{n}(-1)^{n-i}[{a^1+v^1} _{\lambda_1}\cdots{ a^{i-1}+v^{i-1}}_{\lambda_{i-1}}{ a^{i+1}+v^{i+1}}_{\lambda_{i+1}}{\cdots}_{\lambda_{n-1}}{a^n+v^n}_{\lambda_n}\notag\\
	&\qquad[{ a^i+v^i}_{\lambda_i}{b^2+w^2}_{\lambda_{n+1}}{\cdots}_{\lambda_{2n-2}}{b^n+w^n}]],
	\end{align}where $a^1,\cdots,a^n,b^2,\cdots,b^n \in R$, $v^1,\cdots,v^n,w^2,\cdots,w^n \in M$.
	Putting in (\ref{14}), we obtain the expansion of two sides of (\ref{15}). The summands are in the following three forms: (a) terms in $R$, (b) terms contained $v^i$, (c) terms contained $w^i$. Terms of type (a) in two sides are equal due to (C4) of $R$. The equivalences of terms of type (a) in two sides can be inferred from (M3) and the last equation in (M4), while that of terms of type (b) can be inferred from (M3) and the first equation in (M4).   
\end{proof}
\begin{lemma}
	Let $R$ be an $n$-Lie conformal algebra. If $M$ is a conformal representation of conformal algebra $R$, then $\mathscr{V}_p(M):=M[t_1^{\pm1},\cdots,t_p^{\pm1}]/\tilde{\partial}_pM[t_1^{\pm1},\cdots,t_p^{\pm1}]$ is a representation of $n$-Lie algebra $\mathscr{L}ie_p\mbox{ }R$. Furthermore, $M$ carries a $(\mathscr{L}ie_p\mbox{ }R)_\_$-representation structure.
\end{lemma}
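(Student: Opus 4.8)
The plan is to build the $\mathscr{L}ie_p\,R$-action on $\mathscr{V}_p(M)$ by the same distribution mechanism used to construct $\mathscr{L}ie_p\,R$ itself in Corollary~\ref{c1}, and then to descend to the annihilation algebra. The first step is to promote $M$ to a conformal representation of the $n$-Lie conformal algebra $\tilde R_p=R[t_1^{\pm1},\dots,t_p^{\pm1}]$ on the enlarged $\mathbb{C}[\tilde\partial_p]$-module $\tilde M_p:=M[t_1^{\pm1},\dots,t_p^{\pm1}]$, by setting
\[
(a^1\otimes f_1)_{\lambda_1}{\cdots}_{\lambda_{n-2}}(a^{n-1}\otimes f_{n-1})_{\lambda_{n-1}}(v\otimes f_n)
=\bigl({a^1}_{\lambda_1+\partial_{t_1}}{\cdots}_{\lambda_{n-1}+\partial_{t_{n-1}}}v\bigr)\otimes\bigl(f_1(t_1)\cdots f_n(t_n)\bigr)\big|_{t_i=t},
\]
in exact parallel with the $\{\lambda_{1\to n-1}\}$-bracket formula proved in Lemma~\ref{p1}. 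One checks (M1)--(M4) for this action by the same substitution-of-variables computation that verified (C1)--(C4) in Lemma~\ref{p1}: (M1)--(M3) are immediate, the two identities in (M4) follow from the corresponding identities for $M$ as an $R$-module after the formal shift $\lambda_i\mapsto\lambda_i+\partial_{t_i}$. Equivalently, and more economically, one can observe that $R\oplus M$ is an $n$-Lie conformal algebra by the preceding lemma, apply Corollary~\ref{c1} to it to get the $n$-Lie conformal algebra $\widetilde{(R\oplus M)}_p=\tilde R_p\oplus\tilde M_p$, and read off that $\tilde M_p$ is a conformal $\tilde R_p$-module; this avoids rechecking (M4) by hand.

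The second step is to pass to quotients. By Lemma~\ref{l1} applied to the $n$-Lie conformal algebra $\tilde R_p\oplus\tilde M_p$, the subspace $\tilde\partial_p(\tilde R_p\oplus\tilde M_p)=\tilde\partial_p\tilde R_p\oplus\tilde\partial_p\tilde M_p$ is an ideal, so the $0$-th products descend to an $n$-Lie bracket on
\[
(\tilde R_p/\tilde\partial_p\tilde R_p)\oplus(\tilde M_p/\tilde\partial_p\tilde M_p)=\mathscr{L}ie_p\,R\ \oplus\ \mathscr{V}_p(M).
\]
Concretely, with $v_{m_{1\to p}}:=v\otimes t_1^{m_1}\cdots t_p^{m_p}\in\mathscr{V}_p(M)$, the action is
\[
[a^1_{m^1_{1\to p}},\dots,a^{n-1}_{m^{n-1}_{1\to p}},v_{m^n_{1\to p}}]
=\sum_{j^{1\to n-1}_{1\to p}}\tbinom{m^1_1}{j^1_1}\cdots\tbinom{m^{n-1}_p}{j^{n-1}_p}
\bigl({a^1}_{(j^1_{1\to p})}{\cdots}_{(j^{n-1}_{1\to p})}v\bigr)_{m^n_1+m^{1\to n-1}_1-j^{1\to n-1}_1,\ \dots},
\]
i.e. formula (\ref{7}) with $a^n$ replaced by $v$. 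Since the bracket on the direct sum satisfies skew-symmetry and the Filippov identity with the first $n-1$ entries from $\mathscr{L}ie_p\,R$ and the last from $\mathscr{V}_p(M)$, and since the $\mathscr{L}ie_p\,R$-part is a subalgebra, this exhibits $\mathscr{V}_p(M)$ as an $n$-Lie module over $\mathscr{L}ie_p\,R$; well-definedness is automatic because $\partial_{t_i}$ commutes with $\tilde\partial_p$ (cf. Corollary~\ref{c1}(iii)).

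For the last assertion, one restricts to the annihilation subalgebra $(\mathscr{L}ie_p\,R)_\_=\mathscr{L}_p^{(0)}$. The claim is that $M$ itself --- rather than the larger space $\mathscr{V}_p(M)$ --- carries a compatible $(\mathscr{L}ie_p\,R)_\_$-module structure; the natural construction is the analogue of the $\lambda=0$ specialization, namely let $a_{m_{1\to p}}\in(\mathscr{L}ie_p\,R)_\_$ act on $v\in M$ through the coefficient extraction ${a^1}_{(0)}{\cdots}_{(0)}v$-type terms, i.e. via the map $M\to\mathscr{V}_p(M)$, $v\mapsto v_{0,\dots,0}$, which is $(\mathscr{L}ie_p\,R)_\_$-equivariant once one knows that for $m^n_{1\to p}=0$ the right-hand side above lands (after identification) back in $M$. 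The verification that this is a genuine action reduces to (M4) specialized at the relevant arguments, exactly as Lemma~\ref{l1} extracted an honest $n$-Lie bracket from (C4$'$). The main obstacle I anticipate is bookkeeping rather than conceptual: tracking the binomial coefficients and the shifted index $m^n_l+m^{1\to n-1}_l-j^{1\to n-1}_l$ through the two equations of (M4), and making sure the ``shift by $\partial_{t_i}$'' on the module side is compatible with the same shift on the algebra side so that the Filippov identity on $\mathscr{L}ie_p\,R\oplus\mathscr{V}_p(M)$ closes up; invoking Corollary~\ref{c1} for the algebra $R\oplus M$ is precisely what makes this bookkeeping unnecessary, so I would route the proof through that reduction.
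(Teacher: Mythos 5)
Your first two steps are correct and essentially coincide with the paper's argument: the paper likewise first makes $M[t_1^{\pm1},\cdots,t_p^{\pm1}]$ a conformal representation of $\tilde R_p$ by the mechanism of Lemma \ref{p1} and then descends modulo $\tilde\partial_p$ as in Corollary \ref{c1}, arriving at exactly your displayed action on $\mathscr{V}_p(M)$. Your shortcut of applying Corollary \ref{c1} to the semidirect sum $R\oplus M$ and reading off $\mathscr{V}_p(M)$ as an abelian ideal of $\mathscr{L}ie_p(R\oplus M)$ is a legitimate economy over the paper's ``similar proof'' phrasing.

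The gap is in the final step. The map $\iota\colon M\to\mathscr{V}_p(M)$, $v\mapsto v_{0,\cdots,0}$, is \emph{not} $(\mathscr{L}ie_p\,R)_\_$-equivariant, and the condition you flag as the thing to be checked in fact fails: for $m^n_{1\to p}=0$ the action formula gives
\begin{align*}
[a^1_{m^1_{1\to p}},\cdots,a^{n-1}_{m^{n-1}_{1\to p}},v_{0,\cdots,0}]
=\sum_{j^{1\to n-1}_{1\to p}}\tbinom{m^1_1}{j^1_1}\cdots\tbinom{m^{n-1}_p}{j^{n-1}_p}
\bigl({a^1}_{(j^1_{1\to p})}{\cdots}_{(j^{n-1}_{1\to p})}v\bigr)_{m_1^{1\to n-1}-j_1^{1\to n-1},\cdots,m_p^{1\to n-1}-j_p^{1\to n-1}},
\end{align*}
whose summands carry every index vector between $(0,\cdots,0)$ and $(m_1^{1\to n-1},\cdots,m_p^{1\to n-1})$, so the right-hand side does not lie in $\iota(M)$ unless all $m^i_l=0$; the line $\{v_{0,\cdots,0}\}$ is not a submodule, and the action cannot be transported along $\iota$. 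Two short repairs are available. (a) Note that the indices occurring in the output are always $\ge$ those of the input (since $\tbinom{m^k_l}{j^k_l}=0$ for $j^k_l>m^k_l$), so both $\mathscr{V}_p(M)_{\ge0}:=\mathrm{span}\{v_{l_{1\to p}}\,:\,l_i\ge0\}$ and $W:=\mathrm{span}\{v_{l_{1\to p}}\,:\,l_i\ge0,\ (l_1,\cdots,l_p)\ne(0,\cdots,0)\}$ are $(\mathscr{L}ie_p\,R)_\_$-submodules; identifying $M\simeq\mathscr{V}_p(M)_{\ge0}/W$ (a quotient, not a subobject) leaves only the term $j^k_l=m^k_l$ and yields the action $(a^1_{m^1_{1\to p}},\cdots,a^{n-1}_{m^{n-1}_{1\to p}})v={a^1}_{(m^1_1+\cdots+m^1_p)}\cdots{a^{n-1}}_{(m^{n-1}_1+\cdots+m^{n-1}_p)}v$, which is the structure the paper records. (b) Alternatively, take this formula as the definition and verify the module identities directly from (M4) in its $(k)$-product form together with $\tbinom{m_1+\cdots+m_p}{k}=\sum_{k_1+\cdots+k_p=k}\tbinom{m_1}{k_1}\cdots\tbinom{m_p}{k_p}$, which collapses everything to the $p=1$ case. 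As written, though, your last paragraph does not establish the $(\mathscr{L}ie_p\,R)_\_$-module structure on $M$.
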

\begin{proof}
	A discussion similar to Lemma \ref{p1} will show $M[t_1^{\pm1},\cdots,t_p^{\pm1}]$ is a conformal representation of $\tilde{R}_p$. Then by (M2), a similar proof of Corollary \ref{c1} can explain $\mathscr{V}_p(M)$ is a representation of $n$-Lie algebra $\mathscr{L}ie_p\mbox{ }R$. Explicitly, the $\mathscr{L}ie_p\mbox{ }R$--representation structure on $\mathscr{V}_p(M)$ is given by \begin{align}
	(a^1_{m_1^1,\cdots,m_p^1},&\cdots,a^{n-1}_{m_1^{n-1},\cdots,m_p^{n-1}})v_{m_1^n,\cdots,m_p^n}=\sum_{j_{1\to p}^{1\to {n-1}}}\tbinom{m_1^1}{j_1^1}\cdots\tbinom{m_p^{n-1}}{j_p^{n-1}}\notag\\
	&({a^1}_{(j^1_1+\cdots+j_p^1)}\cdots{a^{n-1}}_{(j_1^{n-1}+\cdots+j_p^{n-1})}v)_{m_1^1+\cdots+m_1^{n}-j_1^1-\cdots-j_1^{n-1},\cdots,m_p^1+\cdots+m_p^{n}-j_p^1-\cdots-j_p^{n-1}},
	\end{align}where $v_{m_1,\cdots,m_p}:=v\otimes t_1^{m_1}\cdots t_p^{m_p}\in\mathscr{V}_p(M)$.  And the induced $(\mathscr{L}ie_p\mbox{ }R)_\_$-representation structure on $M$ is defined as\begin{align}
	(a^1_{m_1^1,\cdots,m_p^1},&\cdots,a^{n-1}_{m_1^{n-1},\cdots,m_p^{n-1}})v=({a^1}_{(m^1_1+\cdots+m_p^1)}\cdots{a^{n-1}}_{(m_1^{n-1}+\cdots+m_p^{n-1})}v).
	\end{align}
\end{proof}
 Now it is time to construct the basic complex of $R$. Let $R$ be an $n$-Lie conformal algebra and $M$ be a conformal representation of $R$. Define $\widetilde{C}^q(R,M)$ as the set of $\mathbb{C}$-linear maps $$ \gamma: R^{\otimes^{n-1}}\otimes {\tiny \begin{matrix}
 	(q-1)\\\cdots
 	\end{matrix}} \otimes R^{\otimes^{n-1}}\otimes R\to M[\lambda^{1\to p-1}_{1\to n-1},\lambda^{p-1}_n] $$ 
 $$a^1_{1\to n-1}\otimes\cdots\otimes a^{q-1}_{1\to n-1}\otimes a^{q-1}_n\mapsto \gamma_{\lambda^{1\to q-1}_{1\to n-1},\lambda^{q-1}_n}(a^1_{1\to n-1},\cdots, a^{q-1}_{1\to n-1}, a^{q-1}_n)$$satisfying the following conditions.
 
 (B1) \textbf{Conformal antilinearity.}\begin{align*}
&\gamma_{\lambda^{1\to q-1}_{1\to n-1},\lambda^{q-1}_n}(a^1_{1\to n-1},\cdots,a^i_1,\cdots,\partial a^i_j,\cdots,a^i_{n-1},\cdots, a^{q-1}_{1\to n-1}, a^{q-1}_n)\\
&\qquad\qquad\qquad\qquad\qquad\qquad=-\lambda^i_j\gamma_{\lambda^{1\to q-1}_{1\to n},\lambda^{q-1}_n}(a^1_{1\to n-1},\cdots, a^{q-1}_{1\to n-1}, a^{q-1}_n).
 \end{align*}
 
 (B2) \textbf{Skew-symmetry.} For each fixed $i$, $\gamma$ is skew-symmetric with respect to simultaneous permutations of $a^i_j$ and $\lambda^i_j$.
\begin{remark}
	The image be interpreted as the result of $\gamma$-action on ${a^1_{1}}_{\lambda^1_{1}}\cdots {a^i_j}_{\lambda^i_{j}}\cdots {a^{q-1}_{n-1}}_{\lambda^{q-1}_{n-1}} {a^{q-1}_n}_{\lambda^{q-1}_n}$.
\end{remark}

\begin{proposition}\label{p5}
	For $\gamma \in \widetilde{C}^q(R,M)$, \begin{align}\label{16}
	&(D\gamma)_{\lambda^{1\to q}_{1\to n-1},\lambda^q_n}(a^1_{1\to n-1},\cdots, a^q_{1\to n-1}, a^q_n)\notag\\
	=&\sum_{1\le i<j\le q}(-1)^{i}\gamma_{\lambda^{1\to\hat{i}\to j-1}_{1\to n-1},[\lambda^i_{1\to n-1},\lambda^j_{1\to n-1}],\lambda^{j+1\to q}_{1\to n-1},\lambda^q_n}(a^1_{1\to n-1},\cdots,\hat{a}^i_{1\to n-1},\cdots,a^{j-1}_{1\to n-1},\notag
	\\&\qquad\qquad\qquad\qquad\qquad\qquad\qquad\qquad\qquad\qquad\qquad[{a^i_{1\to n-1}},a^j_{1\to n-1}],a^{j+1}_{1\to n-1},\cdots,a^q_n)\notag\\
	&+\sum_{i=1}^{q}(-1)^{i}\gamma_{\lambda^{1\to\hat{i}\to q}_{1\to n-1},\lambda^q_n+\lambda^i_{1\to n-1}}(a^1_{1\to n-1},\cdots,\hat{a}^i_{1\to n-1},\cdots,{a}^q_{1\to n-1},[{a^i_1}_{\lambda^i_1}{\cdots}_{\lambda^i_{n-2}}{a^i_{n-1}}_{\lambda^i_{n-1}}a^q_n])\notag\\
	&+\sum_{i=1}^{q}(-1)^{i+1}{a^i_1}_{\lambda^i_1}{\cdots}_{\lambda^i_{n-2}}{a^i_{n-1}}_{\lambda^i_{n-1}}(\gamma_{\lambda^{1\to\hat{i}\to q}_{1\to n-1},\lambda^q_n}(a^1_{1\to n-1},\cdots,\hat{a}^i_{1\to n-1},\cdots,{a}^q_{1\to n-1},a^q_n))\notag\\
	&+\sum_{i=1}^{n-1}(-1)^{n+q-i+1}{a^q_1}_{\lambda^q_1}\cdots{\hat{a}^q_i}\cdots{a^q_{n-1}}_{\lambda^q_{n-1}} {a^q_n}_{\lambda^q_n}(\gamma_{\lambda^{1\to q-1}_{1\to n-1},\lambda^q_i}(a^1_{1\to n-1},\cdots, a^{q-1}_{1\to n-1}, a^q_i)),
	\end{align}where $[{a^i_{1\to n-1}},a^j_{1\to n-1}]$ and $[\lambda^i_{1\to n-1},\lambda^j_{1\to n-1}]$ are the corresponding elements and indexes of the expansion of $[({a^i_1}_{\lambda^i_1}{\cdots}_{\lambda^i_{n-2}}a^i_{n-1})_{\lambda^i_{n-1}}({a^j_1}_{\lambda^j_1}{\cdots}_{\lambda^j_{n-2}}a^j_{n-1})]_{\lambda^i_{n-1}+\lambda^j_{n-1}}$, 
	defines a map $D$ from $\widetilde{C}^q(R,M)$ to $\widetilde{C}^{q+1}(R,M)$ satisfying $D^2=0$.
\end{proposition}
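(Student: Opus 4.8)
The plan is to establish two things: first that $D\gamma$ lies in $\widetilde C^{q+1}(R,M)$, i.e. that it satisfies the conformal antilinearity (B1) and the skew-symmetry (B2); and second that $D^2=0$. For the first part I would check (B1) by inspecting each of the four sums defining $D\gamma$ in \eqref{16}: replacing some $a^i_j$ by $\partial a^i_j$ pulls out a factor $-\lambda^i_j$ in every summand, using conformal antilinearity of $\gamma$ itself (B1 for $\gamma$) together with axioms (C2) and (M2) for the terms where $a^i_j$ appears inside a bracket or an action; the one point requiring care is the $n$th slot $a^q_n$, where (C2)/(M2) produces the shifted factor $\partial+\lambda_1+\cdots$ rather than $-\lambda$, but this is exactly matched by the way the total index is carried through $[\lambda^i_{1\to n-1},\lambda^j_{1\to n-1}]$ and by the summand in which $a^q_n$ is fed into the bracket. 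For (B2), skew-symmetry of $D\gamma$ under simultaneous permutation of $a^q_i\leftrightarrow a^q_j$ and $\lambda^q_i\leftrightarrow\lambda^q_j$ within a fixed block follows from (B2) for $\gamma$ and the skew-symmetry axioms (C3)/(M3); the block-permutation symmetry (exchanging two entire groups $a^i_{1\to n-1}\leftrightarrow a^j_{1\to n-1}$) is built into the sign conventions $(-1)^i$ and the ordered double sum $\sum_{i<j}$, exactly as in the classical Chevalley–Eilenberg / Filippov-complex differential.

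The heart of the matter is $D^2=0$. Here I would expand $(D^2\gamma)_{\cdots}(a^1_{1\to n-1},\dots,a^{q+1}_{1\to n-1},a^{q+1}_n)$ by applying \eqref{16} twice and sorting the resulting terms by type, in direct analogy with the proof that the Filippov (Takhtajan) cohomology differential of an $n$-Lie algebra squares to zero. Schematically the terms fall into groups: (i) those obtained by applying a bracket twice to three of the ``$n-1$-tuples'' $a^i_{1\to n-1},a^j_{1\to n-1},a^k_{1\to n-1}$ — these cancel among themselves by the Filippov identity (C4); (ii) those where one $n-1$-tuple acts (via $L(\vec a)$) and another is inserted into a bracket, or two act in succession — these cancel using the fact, proven in Proposition \ref{p4}, that each $L(\vec a)$ is a derivation and that the $L(\vec a)$ close under the bracket \eqref{17}, together with the module axioms (M4); (iii) diagonal terms involving the distinguished last slot $a^q_n$ versus $a^{q+1}_n$, which cancel using the second equation of (M4) and the relation \eqref{9}-type bookkeeping for how the $n$th index is transported. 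Each cancellation is a sign-chase: a term produced with ``$i$ then $j$'' appears again with ``$j$ then $i$'' carrying the opposite sign because of the shift $i\mapsto i-1$ in the index of the surviving entries after the first bracket is taken. I would organize this by fixing, for each cancelling pair, the triple (or pair) of indices involved and matching the two ways it arises.

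The main obstacle I expect is purely bookkeeping rather than conceptual: keeping track of the spectral parameters $\lambda^i_j$ and, especially, the single ``special'' parameter $\lambda^q_n$ attached to the last argument, which behaves asymmetrically (it absorbs $\lambda^i_{1\to n-1}$ when $a^q_n$ is pushed into a bracket, and it splits off as $\lambda^q_i$ when an entry is moved into the last position). One must verify that after taking two differentials the accumulated parameter shifts in a term like $\gamma_{\ldots,\lambda^q_n+\lambda^i_{1\to n-1}+\lambda^j_{1\to n-1}}(\ldots)$ agree for the two orders in which the corresponding pair of brackets can be formed; the compatibility is guaranteed by (C2)/(M2) (linearity in $\partial$) and by the identity governing $[\lambda^i_{1\to n-1},\lambda^j_{1\to n-1}]$, but writing it out cleanly is the delicate step. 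A secondary subtlety is the terms in which the $n$th argument $a^i_n$ (for $i<q$) is itself moved — but in this complex only the last factor $a^q_n$ plays a distinguished role at each stage, so after the first application of $D$ the argument that was ``$a^i_n$'' has been absorbed into a bracket and no longer needs separate treatment; confirming that no such term is orphaned is part of the sorting. Once the terms are grouped as in (i)–(iii), each group cancels by one of the already-established facts (C4, Proposition \ref{p4} and \eqref{17}, or (M4)), so $D^2=0$ follows.
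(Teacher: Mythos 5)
Your plan coincides with the paper's own proof: it verifies (B1) and (B2) summand by summand in \eqref{16} using (C2)/(M2) and (C3)/(M3), and then establishes $D^2=0$ by expanding twice and cancelling the resulting terms in groups via the Filippov identity, the derivation property of $L(\overrightarrow{a})$ from Proposition \ref{p4}, the bracket \eqref{17}, and (M4). The only under-specified point is the transposition $a^q_{n-1}\leftrightarrow a^q_n$ in the (B2) check: there the individual summands of \eqref{16} do not simply change sign by (C3)/(M3), but rather the first and second summands (and likewise the third and fourth) cancel against each other in pairs, exactly as in the paper's Step 2.
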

\begin{proof}
	\textbf{Step 1:} $D\gamma$ meets the conformal antilinearity condition. We first consider $\partial a^i_j$ with $i=1,\cdots,q,j=1,\cdots,n-1$ and then concentrate on $\partial a^p_n$.
	
	With the $\partial$-action on $a^i_j$, by (\ref{17}) and (B1) of $\gamma$, the value of the first summand of (\ref{16}) equals the original value multiplying $-\lambda^i_j$. Similar changes are seen in the second and third summands by (C2) of $R$ and (M2) of $M$. For the fourth summand, if $i\ne q$, we can get the multiple $-\lambda^i_j$ by (B1) of $\gamma$ and if $i=q$, the multiple $-\lambda^q_j$ is given by (B1) of $\gamma$ and (M2) of $M$.
	
	Then let we focus on $\partial a^q_n$. By (B1) of $\gamma$, we can extract the multiple $-\lambda^q_n$ from the first and third summands. The $-\lambda^q_n$ of the second summand is derived from the joint action of (C2) of $R$ and (B1) of $\gamma$, while that of the fourth summand is derived from (M2) of $M$.
	
	\textbf{Step 2:} $D\gamma$ also satisfies (B2) so that $D$ is a well-defined map between $\widetilde{C}^q(R,M)$ and $\widetilde{C}^{q+1}(R,M)$. Analogously, we discuss the transposition of $a^i_j$ and $a^i_{j+1}$ with $i=1,\cdots,q,j=1,\cdots,n-2$ and the transposition of $a^q_{n-1}$ and $a^q_n$ separately.
	
	Since\begin{align*}
	&[({a^i_1}_{\lambda^i_1}{\cdots}_{\lambda^i_{n-2}}a^i_{n-1})_{\lambda^i_{n-1}}({a^j_1}_{\lambda^j_1}{\cdots}_{\lambda^j_{n-2}}a^j_{n-1})]_{\lambda^i_{n-1}+\lambda^j_{n-1}}v\\
	=&{a^i_1}_{\lambda^i_1}{\cdots}_{\lambda^i_{n-2}}{a^i_{n-1}}_{\lambda^i_{n-1}}({a^j_1}_{\lambda^j_1}{\cdots}_{\lambda^j_{n-2}}{a^j_{n-1}}_{\lambda^j_{n-1}}v)-{a^j_1}_{\lambda^j_1}{\cdots}_{\lambda^j_{n-2}}{a^j_{n-1}}_{\lambda^j_{n-1}}({a^i_1}_{\lambda^i_1}{\cdots}_{\lambda^i_{n-2}}{a^i_{n-1}}_{\lambda^i_{n-1}}v),
	\end{align*} 
	the transposition of $a^i_j$ and $a^i_{j+1}$ changes the sign of every term in the first summand. By (C3), (M3) and (B2), it is easy to see that the signs of the second and third summands are changed. For the fourth summand, if $i\ne q$, the sign is changed by (B2) and if $i= q$, the sign is changed by $(-1)^{n+q-i+1}$.
	
	Then we consider \begin{align*}
	&(D\gamma)_{\lambda^{1\to q}_{1\to n-1},\lambda^q_n}(a^1_{1\to n-1},\cdots, a^{q-1}_{1\to n-1}, a^q_1,\cdots,a^q_{n-1},a^q_n)\\
	+&(D\gamma)_{\lambda^{1\to q-1}_{1\to n-1},\lambda^q_{1\to n-2},\lambda^q_n,\lambda^q_{n-1}}(a^1_{1\to n-1},\cdots, a^{q-1}_{1\to n-1}, a^q_1,\cdots,a^q_n,a^q_{n-1}).
	\end{align*}
	As discussed above, we only need to consider the case when $a^p_{n-1}$ and $a^p_n$ are separated. Putting in (\ref{17}), we find terms in the first and second summand occur in pairs with opposite signs and thus cancel each other out. And this also happens in the third and fourth summand.  
	
	\textbf{Step 3:} Now we are going to show $D^2=0$. Substituting in (\ref{16}), we have
	\begin{align}\label{18}
	&(D^2\gamma)_{\lambda^{1\to q}_{1\to n-1},\lambda^q_n}(a^1_{1\to n-1},\cdots, a^q_{1\to n-1}, a^q_n)\notag\\
	=&\sum_{1\le i<j\le q}(-1)^{i}(D\gamma)_{\lambda^{1\to\hat{i}\to j-1}_{1\to n-1},[\lambda^i_{1\to n-1},\lambda^j_{1\to n-1}],\lambda^{j+1\to q}_{1\to n-1},\lambda^q_n}(a^1_{1\to n-1},\cdots,\hat{a}^i_{1\to n-1},\cdots,a^{j-1}_{1\to n-1},\notag
	\\&\qquad\qquad\qquad\qquad\qquad\qquad\qquad\qquad\qquad\qquad\qquad[{a^i_{1\to n-1}},a^j_{1\to n-1}],a^{j+1}_{1\to n-1},\cdots,a^q_n)\notag\\
	&+\sum_{i=1}^{q}(-1)^{i}(D\gamma)_{\lambda^{1\to\hat{i}\to q}_{1\to n-1},\lambda^q_n+\lambda^i_{1\to n-1}}(a^1_{1\to n-1},\cdots,\hat{a}^i_{1\to n-1},\cdots,{a}^q_{1\to n-1},[{a^i_1}_{\lambda^i_1}{\cdots}_{\lambda^i_{n-2}}{a^i_{n-1}}_{\lambda^i_{n-1}}a^q_n])\notag\\
	&+\sum_{i=1}^{q}(-1)^{i+1}{a^i_1}_{\lambda^i_1}{\cdots}_{\lambda^i_{n-2}}{a^i_{n-1}}_{\lambda^i_{n-1}}((D\gamma)_{\lambda^{1\to\hat{i}\to q}_{1\to n-1},\lambda^q_n}(a^1_{1\to n-1},\cdots,\hat{a}^i_{1\to n-1},\cdots,{a}^q_{1\to n-1},a^q_n))\notag\\
	&+\sum_{i=1}^{n-1}(-1)^{n+q-i+1}{a^q_1}_{\lambda^q_1}\cdots{\hat{a}^q_i}\cdots{a^q_{n-1}}_{\lambda^q_{n-1}} {a^q_n}_{\lambda^q_n}((D\gamma)_{\lambda^{1\to q-1}_{1\to n-1},\lambda^q_i}(a^1_{1\to n-1},\cdots, a^{q-1}_{1\to n-1}, a^q_i))
	\end{align}
	Then we discuss each summand in the expansion of (\ref{18}) in turn. For the sake of simplicity, we use $S_{ij}$, $i,j=1,2,3,4$ to mean the $j$th summand of the expansion of the $i$th summand in (\ref{18}).
	
	\textbf{($S_{11}$).} Terms in $S_{11}$ can be divided into the following three types. 
	\begin{align*}
	&(a)\ \gamma_{\cdots,[\lambda^i_{1\to n-1},\lambda^j_{1\to n-1}],\cdots,[\lambda^k_{1\to n-1},\lambda^l_{1\to n-1}],\cdots,\lambda^q_n}(\cdots,[{a^i_{1\to n-1}},a^j_{1\to n-1}],\\
	&\qquad\qquad\qquad\qquad\qquad\qquad\qquad\qquad\qquad\qquad\cdots,[{a^k_{1\to n-1}},a^l_{1\to n-1}],\cdots,a^q_n), \\
	&(b)\ \gamma_{\cdots,[\lambda^i_{1\to n-1},[\lambda^j_{1\to n-1},\lambda^k_{1\to n-1}]],\cdots,\lambda^q_n}(\cdots,[{a^i_{1\to n-1}},[{a^j_{1\to n-1}},a^k_{1\to n-1}]],\cdots,a^q_n),\\
	&(c)\ \gamma_{\cdots,[[\lambda^i_{1\to n-1},\lambda^j_{1\to n-1}],\lambda^k_{1\to n-1}],\cdots,\lambda^q_n}(\cdots,[[{a^i_{1\to n-1}},{a^j_{1\to n-1}}],a^k_{1\to n-1}],\cdots,a^q_n).
	\end{align*}
	
	For fixed $i,j,k,l$, terms of type (a) will appear twice because the advents of $[{a^i_{1\to n-1}},a^j_{1\to n-1}]$ and $[{a^k_{1\to n-1}},a^l_{1\to n-1}]$ have different order. In the case when $i<k$, if $[{a^i_{1\to n-1}},a^j_{1\to n-1}]$ appears first, the sign would be $(-1)^{i+k-1}$, while if $[{a^k_{1\to n-1}},a^l_{1\to n-1}]$ appears first, the sign would be $(-1)^{i+k}$. Then they cancel each other out. Similar things happen in the case when $k<i$.
	
	Each term of type (c) appears if and only if $i<j<k$, while each term of type (b) appears if and only if $i<k$ and $j<k$. So each term of type (c) can be matched with two terms of type (b) in the following way.\begin{align*}
	&(-1)^{i+j-1}\gamma_{\cdots,[[\lambda^i_{1\to n-1},\lambda^j_{1\to n-1}],\lambda^k_{1\to n-1}],\cdots,\lambda^q_n}(\cdots,[[{a^i_{1\to n-1}},{a^j_{1\to n-1}}],a^k_{1\to n-1}],\cdots,a^q_n)\\
	+&(-1)^{i+j}\gamma_{\cdots,[\lambda^i_{1\to n-1},[\lambda^j_{1\to n-1},\lambda^k_{1\to n-1}]],\cdots,\lambda^q_n}(\cdots,[{a^i_{1\to n-1}},[{a^j_{1\to n-1}},a^k_{1\to n-1}]],\cdots,a^q_n)\\
	+&(-1)^{i+j-1}\gamma_{\cdots,[\lambda^j_{1\to n-1},[\lambda^i_{1\to n-1},\lambda^k_{1\to n-1}]],\cdots,\lambda^q_n}(\cdots,[{a^j_{1\to n-1}},[{a^i_{1\to n-1}},a^k_{1\to n-1}]],\cdots,a^q_n).
	\end{align*} Denote $L(\overrightarrow{a^l}):={a^l_1}_{\lambda^l_1}{\cdots}_{\lambda^l_{n-2}}a^l_{n-1}$. Then the different parts of the three summands correspond to $$[[{L(\overrightarrow{a^i})}_{\lambda^i_{n-1}}L(\overrightarrow{a^j})]_{\lambda^i_{n-1}+\lambda^j_{n-1}}L(\overrightarrow{a^k})]_{\lambda^i_{n-1}+\lambda^j_{n-1}+\lambda^k_{n-1}},$$
	$$-[{L(\overrightarrow{a^i})}_{\lambda^i_{n-1}}[{L(\overrightarrow{a^j})}_{\lambda^j_{n-1}}L(\overrightarrow{a^k})]]_{\lambda^i_{n-1}+\lambda^j_{n-1}+\lambda^k_{n-1}},$$
	and 
	$$[{L(\overrightarrow{a^j})}_{\lambda^j_{n-1}}[{L(\overrightarrow{a^i})}_{\lambda^i_{n-1}}L(\overrightarrow{a^k})]]_{\lambda^i_{n-1}+\lambda^j_{n-1}+\lambda^k_{n-1}}.$$
	By Proposition \ref*{p4} and linearity of $\gamma$, the sum is zero for each $i<j<k$.
	
	\textbf{($S_{12}$, $S_{21}$ and $S_{22}$).} Terms in $S_{12}$ have two types.\begin{align*}
	&(a)\ \gamma_{\cdots,[\lambda^i_{1\to n-1},\lambda^j_{1\to n-1}],\cdots,\lambda^q_n+\lambda^k_{1\to n-1}}(\cdots,[{a^i_{1\to n-1}},a^j_{1\to n-1}],\cdots,[{a^k_1}_{\lambda^k_1}{\cdots}_{\lambda^k_{n-2}}{a^k_{n-1}}_{\lambda^k_{n-1}}a^q_n]),\\
	&(b)\ \gamma_{\cdots,\lambda^q_n+\lambda^i_{1\to n-1}+\lambda^j_{1\to n-1}}(\cdots,[{L(\overrightarrow{a^i})}_{\lambda^i_{n-1}}L(\overrightarrow{a^j})]_{\lambda^i_{n-1}+\lambda^j_{n-1}}a^q_n).
	\end{align*}
	
	Terms of type (a) in $S_{12}$ and terms of $S_{21}$ are in the same form. But they have opposite signs in different summands whether $i<k$ or $i>k$ for fixed $i,j,k$. So they are offset. On the other hand, each term of type (b) can be matched with two terms of $S_{22}$ as follows:\begin{align*}
	&(-1)^{i+j-1}\gamma_{\cdots,\lambda^q_n+\lambda^i_{1\to n-1}+\lambda^j_{1\to n-1}}(\cdots,[{L(\overrightarrow{a^i})}_{\lambda^i_{n-1}}L(\overrightarrow{a^j})]_{\lambda^i_{n-1}+\lambda^j_{n-1}}a^q_n)\\
	+&(-1)^{i+j}\gamma_{\cdots,\lambda^q_n+\lambda^i_{1\to n-1}+\lambda^j_{1\to n-1}}(\cdots,{L(\overrightarrow{a^i})}_{\lambda^i_{n-1}}({L(\overrightarrow{a^j})}_{\lambda^j_{n-1}}a^q_n))\\
	+&(-1)^{i+j-1}\gamma_{\cdots,\lambda^q_n+\lambda^i_{1\to n-1}+\lambda^j_{1\to n-1}}(\cdots,{L(\overrightarrow{a^j})}_{\lambda^j_{n-1}}({L(\overrightarrow{a^i})}_{\lambda^i_{n-1}}a^q_n)).
	\end{align*} And this exhausts terms of $S_{22}$. By (\ref{12}) and linearity of $\gamma$, the sum is zero for each $i<j$.
	
	\textbf{($S_{13}$, $S_{31}$ and $S_{33}$).} Terms in $S_{13}$ have two types.\begin{align*}
	&(a)\ {a^i_1}_{\lambda^i_1}{\cdots}_{\lambda^i_{n-2}}{a^i_{n-1}}_{\lambda^i_{n-1}}(\gamma_{\cdots,[\lambda^j_{1\to n-1},\lambda^k_{1\to n-1}],\cdots,\lambda^q_n}(\cdots,[{a^j_{1\to n-1}},a^k_{1\to n-1}],\cdots,a^q_n)),\\
	&(b)\ [{L(\overrightarrow{a^i})}_{\lambda^i_{n-1}}L(\overrightarrow{a^j})]_{\lambda^i_{n-1}+\lambda^j_{n-1}}(\gamma_{\lambda^{1\to\hat{i}\to\hat{j}\to q}_{1\to n-1},\lambda^q_n}(\cdots,\hat{a}^i_{1\to n-1},\cdots,\hat{a}^j_{1\to n-1},\cdots,a^q_n)).
	\end{align*}
	
	Terms of type (a) in $S_{13}$ and terms of $S_{31}$ are in the same form. But they have opposite signs in different summands whether $i<j$ or $i>j$ for fixed $i,j,k$. So they are offset. On the other hand, each term of type (b) can be matched with two terms of $S_{33}$ as follows.\begin{align*}
	&(-1)^{i+j}[{L(\overrightarrow{a^i})}_{\lambda^i_{n-1}}L(\overrightarrow{a^j})]_{\lambda^i_{n-1}+\lambda^j_{n-1}}(\gamma_{\lambda^{1\to\hat{i}\to\hat{j}\to q}_{1\to n-1},\lambda^q_n}(\cdots,\hat{a}^i_{1\to n-1},\cdots,\hat{a}^j_{1\to n-1},\cdots,a^q_n))\\
	+&(-1)^{i+j+1}{L(\overrightarrow{a^i})}_{\lambda^i_{n-1}}({L(\overrightarrow{a^j})}_{\lambda^j_{n-1}}(\gamma_{\lambda^{1\to\hat{i}\to\hat{j}\to q}_{1\to n-1},\lambda^q_n}(\cdots,\hat{a}^i_{1\to n-1},\cdots,\hat{a}^j_{1\to n-1},\cdots,a^q_n)))\\
	+&(-1)^{i+j+2}{L(\overrightarrow{a^j})}_{\lambda^j_{n-1}}({L(\overrightarrow{a^i})}_{\lambda^i_{n-1}}(\gamma_{\lambda^{1\to\hat{i}\to\hat{j}\to q}_{1\to n-1},\lambda^q_n}(\cdots,\hat{a}^i_{1\to n-1},\cdots,\hat{a}^j_{1\to n-1},\cdots,a^q_n))).
	\end{align*} And this exhausts terms of $S_{33}$. By (M4) of $M$, the sum is zero for each $i<j$.
	
	\textbf{($S_{23}$ and $S_{32}$).} Terms of type (a) in $S_{23}$ and terms of $S_{32}$ are in the same form as follows.
	\begin{align*}
	{a^i_1}_{\lambda^i_1}{\cdots}_{\lambda^i_{n-2}}{a^i_{n-1}}_{\lambda^i_{n-1}}(\gamma_{\lambda^{1\to\hat{i}\to\hat{j}\to q}_{1\to n-1},\lambda^q_n+\lambda^{j}_{1\to n-1}}(\cdots,\hat{a}^i_{1\to n-1},\cdots,\hat{a}^j_{1\to n-1},\cdots,[{a^j_1}_{\lambda^j_1}{\cdots}{a^j_{n-1}}_{\lambda^j_{n-1}}a^q_n])).
	\end{align*} But they have opposite signs in different summands whether $i<j$ or $i>j$ for fixed $i,j$. So they cancel each other out. 
	
	\textbf{($S_{14}$, $S_{41}$, $S_{42}$, part of $S_{24}$, $S_{43}$ and part of $S_{34}$).} Since the expansion of $[{L(\overrightarrow{a^i})}_{\lambda^i_{n-1}}L(\overrightarrow{a^j})]_{\lambda^i_{n-1}+\lambda^j_{n-1}}$ is $$\sum_{l=1}^{n-1}({a^j_1} _{\lambda^j_1}\cdots{ a^j_{l-1}}_{\lambda^j_{l-1}}({L(\overrightarrow{a^i})}_{\lambda^i_{n-1}}a^j_l)_{\lambda^j_l+\lambda^i_{1\to n-1}}{ a^j_{l+1}}_{\lambda^j_{l+1}}\cdots{a^j_{n-1}})_{\lambda^j_{n-1}},$$
	terms in $S_{14}$ have the following forms.
	\begin{align*}
	&(a)\ {a^q_1}_{\lambda^q_1}\cdots{\hat{a}^q_k}\cdots{a^q_{n-1}}_{\lambda^q_{n-1}} {a^q_n}_{\lambda^q_n}(\gamma_{\cdots,[\lambda^i_{1\to n-1},\lambda^j_{1\to n-1}],\cdots,\lambda^q_k}(\cdots,[{a^i_{1\to n-1}},a^j_{1\to n-1}],\cdots, a^q_k)),\\
	&(b)\ {a^q_1}_{\lambda^q_1}\cdots{\hat{a}^q_j}\cdots{a^q_{n-1}}_{\lambda^q_{n-1}} {a^q_n}_{\lambda^q_n}(\gamma_{\lambda^{1\to\hat{i}\to q-1}_{1\to n-1},\lambda^q_j+\lambda^i_{1\to n-1}}(\cdots, \hat{a}^{i}_{1\to n-1},\cdots, ({L(\overrightarrow{a^i})}_{\lambda^i_{n-1}}a^q_j))),\\
	&(c)\ \cdots{\hat{a}^q_k}\cdots({L(\overrightarrow{a^i})}_{\lambda^i_{n-1}}a^q_j)_{\lambda^q_j+\lambda^i_{1\to n-1}}\cdots {a^q_n}_{\lambda^q_n}(\gamma_{\lambda^{1\to\hat{i}\to q-1}_{1\to n-1},\lambda^q_k+\lambda^i_{1\to n-1}}(\cdots, \hat{a}^{i}_{1\to n-1},\cdots, {a}^q_k)),\\
	&(d)\ \cdots({L(\overrightarrow{a^i})}_{\lambda^i_{n-1}}a^q_j)_{\lambda^q_j+\lambda^i_{1\to n-1}}\cdots{\hat{a}^q_k}\cdots {a^q_n}_{\lambda^q_n}(\gamma_{\lambda^{1\to\hat{i}\to q-1}_{1\to n-1},\lambda^q_k+\lambda^i_{1\to n-1}}(\cdots, \hat{a}^{i}_{1\to n-1},\cdots, {a}^q_k)).
	\end{align*}
	
	Terms of type (a) in $S_{14}$ and terms of $S_{41}$ are in the same form. But they have opposite signs in different summands for fixed $i,j,k$. So they are offset. Similarly, terms of type (b) in $S_{14}$ and part terms of $S_{42}$ cancel each other out for fixed $i<j=p$.
	
	Fix $i,k$, terms of type (c) and (d) plus corresponding terms in $S_{24}$\begin{align*}
	(-1)^{i+n+q-k}\cdots{\hat{a}^q_k}\cdots ({L(\overrightarrow{a^i})}_{\lambda^i_{n-1}}a^q_n)_{\lambda^q_n+\lambda^i_{1\to n-1}}(\gamma_{\lambda^{1\to\hat{i}\to q-1}_{1\to n-1},\lambda^q_k+\lambda^i_{1\to n-1}}(\cdots, \hat{a}^{i}_{1\to n-1},\cdots, {a}^q_k))	
	\end{align*} equal \begin{align*}
	(-1)^{i+n+q-k}[{L(\overrightarrow{a^i})}_{\lambda^i_{n-1}}L(\overrightarrow{a^q_k})]_{\lambda^i_{n-1}+\lambda^q_n}(\gamma_{\lambda^{1\to\hat{i}\to q-1}_{1\to n-1},\lambda^q_k+\lambda^i_{1\to n-1}}(\cdots, \hat{a}^{i}_{1\to n-1},\cdots, {a}^q_k)),
	\end{align*} where $L(\overrightarrow{a^q_k}):={a^q_1}_{\lambda^q_1}\cdots{\hat{a}^q_k}{\cdots}_{\lambda^q_{n-1}} {a^q_n}$. It can be matched with terms of $S_{43}$\begin{align*}
	(-1)^{i+n+q-k+2}{L(\overrightarrow{a^q_k})}_{\lambda^q_n}{L(\overrightarrow{a^i})}_{\lambda^i_{n-1}}((\gamma_{\lambda^{1\to\hat{i}\to q-1}_{1\to n-1},\lambda^q_k+\lambda^i_{1\to n-1}}(\cdots, \hat{a}^{i}_{1\to n-1},\cdots, {a}^q_k)))
	\end{align*} and part terms of $S_{34}$\begin{align*}
	(-1)^{i+n+q-k+1}{L(\overrightarrow{a^i})}_{\lambda^i_{n-1}}({L(\overrightarrow{a^q_k})}_{\lambda^q_n}(\gamma_{\lambda^{1\to\hat{i}\to q-1}_{1\to n-1},\lambda^q_k+\lambda^i_{1\to n-1}}(\cdots, \hat{a}^{i}_{1\to n-1},\cdots, {a}^q_k)))
	\end{align*}for fixed $i<j=q$. From (M4), they are offset.
	
	\textbf{($S_{44}$ and the rest of $S_{34}$ and $S_{24}$).} Adding up the remainders yields the following equation. \begin{align*}
	&\sum_{i=1}^{n-1}(-1)^{n-i}{a^{q-1}_1}_{\lambda^{q-1}_1}\cdots{\hat{a}^{q-1}_i}\cdots{a^{q-1}_{n-1}}_{\lambda^{q-1}_{n-1}} {[{a^q_1}_{\lambda^q_1}{\cdots}_{\lambda^q_{n-2}}{a^q_{n-1}}_{\lambda^q_{n-1}}a^q_n]}_{\lambda^q_n+\lambda^q_{1\to n-1}}\\
	&\qquad(\gamma_{\lambda^{1\to q-2}_{1\to n-1},\lambda^{q-1}_i}(a^1_{1\to n-1},\cdots, a^{q-2}_{1\to n-1}, a^{q-1}_i))\\
	&+\sum_{i=1}^{n-1}(-1)^{n-i+1}{a^q_1}_{\lambda^q_1}{\cdots}_{\lambda^q_{n-2}}{a^q_{n-1}}_{\lambda^q_{n-1}}({a^{q-1}_1}_{\lambda^{q-1}_1}\cdots{\hat{a}^{q-1}_i}\cdots{a^{q-1}_{n-1}}_{\lambda^{q-1}_{n-1}} {a^q_n}_{\lambda^q_n}\\
	&\qquad(\gamma_{\lambda^{1\to q-2}_{1\to n-1},\lambda^{q-1}_i}(a^1_{1\to n-1},\cdots, a^{q-2}_{1\to n-1}, a^{q-1}_i)))\\
	&+\sum_{i=1}^{n-1}\sum_{j=1}^{n-1}(-1)^{i+j+1}{a^q_1}_{\lambda^q_1}\cdots{\hat{a}^{q}_j}{\cdots}{a^q_{n-1}}_{\lambda^q_{n-1}}{a^q_n}_{\lambda^q_n}({a^{q-1}_1}_{\lambda^{q-1}_1}\cdots{\hat{a}^{q-1}_i}\cdots{a^{q-1}_{n-1}}_{\lambda^{q-1}_{n-1}} {a^q_j}_{\lambda^q_j}\\
	&\qquad(\gamma_{\lambda^{1\to q-2}_{1\to n-1},\lambda^{q-1}_i}(a^1_{1\to n-1},\cdots, a^{q-2}_{1\to n-1}, a^{q-1}_i)))\\
	=&\sum_{i=1}^{n-1}(-1)^{i}{[{a^q_1}_{\lambda^q_1}{\cdots}_{\lambda^q_{n-2}}{a^q_{n-1}}_{\lambda^q_{n-1}}a^q_n]}_{\lambda^q_n+\lambda^q_{1\to n-1}}{a^{q-1}_1}_{\lambda^{q-1}_1}\cdots{\hat{a}^{q-1}_i}\cdots{a^{q-1}_{n-1}}_{\lambda^{q-1}_{n-1}} \\
	&\qquad(\gamma_{\lambda^{1\to q-2}_{1\to n-1},\lambda^{q-1}_i}(a^1_{1\to n-1},\cdots, a^{q-2}_{1\to n-1}, a^{q-1}_i))\\
	&+\sum_{i=1}^{n-1}\sum_{j=1}^{n}(-1)^{n+i-j+1}{a^q_1}_{\lambda^q_1}\cdots{\hat{a}^{q}_j}{\cdots}{a^q_{n-1}}_{\lambda^q_{n-1}}{a^q_n}_{\lambda^q_n}({a^q_j}_{\lambda^q_j}{a^{q-1}_1}_{\lambda^{q-1}_1}\cdots{\hat{a}^{q-1}_i}\cdots{a^{q-1}_{n-1}}_{\lambda^{q-1}_{n-1}} \\
	&\qquad(\gamma_{\lambda^{1\to q-2}_{1\to n-1},\lambda^{q-1}_i}(a^1_{1\to n-1},\cdots, a^{q-2}_{1\to n-1}, a^{q-1}_i)))
	\end{align*} By the first equation of (M4), the value is zero.
	
	Therefore, $D^2=0$.
 \end{proof}
Hence $(\widetilde{C}^*(R,M):=\{\widetilde{C}^q(R,M),q\ge0\},D)$ is a complex. Each $q$-cochain $\widetilde{C}^q(R,M)$ can be regarded as a $\mathbb{C}[\partial]$-module via\begin{align}
(\partial\gamma)_{\lambda^{1\to q-1}_{1\to n-1},\lambda^{q-1}_n}(a^1_{1\to n-1},&\cdots, a^{q-1}_{1\to n-1}, a^{q-1}_n)\notag\\
&=(\partial+\lambda^{1\to q-1}_{1\to n-1}+\lambda^{q-1}_n)\gamma_{\lambda^{1\to q-1}_{1\to n-1},\lambda^{q-1}_n}(a^1_{1\to n-1},\cdots, a^{q-1}_{1\to n-1}, a^{q-1}_n).
\end{align} 
\begin{corollary}
	$D\partial=\partial D$. Thus $\partial\widetilde{C}^*(R,M)$ is a subcomplex of $\widetilde{C}^*(R,M)$.
\end{corollary}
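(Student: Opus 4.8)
The plan is to verify $D\partial=\partial D$ by evaluating both operators on an arbitrary cochain $\gamma\in\widetilde{C}^q(R,M)$ and arbitrary arguments, and comparing the scalar factors produced by the $\partial$-action. Recall that $\partial$ acts on a $q$-cochain by multiplying $\gamma_{\lambda^{1\to q-1}_{1\to n-1},\lambda^{q-1}_n}(\cdots)$ by $\partial+\lambda^{1\to q-1}_{1\to n-1}+\lambda^{q-1}_n$, that is, by $\partial$ plus the sum of \emph{all} the spectral parameters recorded in the subscript; accordingly, writing $W$ for the sum of all the spectral parameters carried by a $(q+1)$-cochain, namely $W:=\lambda^q_n+\lambda^{1\to q}_{1\to n-1}$, we have $(\partial(D\gamma))_{\cdots}(\cdots)=(\partial+W)(D\gamma)_{\cdots}(\cdots)$. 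The crux is that, in each of the four summands of $D(\partial\gamma)$ given by (\ref{16}), replacing $\gamma$ by $\partial\gamma$ contributes exactly the single factor $(\partial+W)$, which---being a scalar in $\partial$ and the $\lambda$'s that does not depend on the summation indices---can then be pulled out in front of the whole expression.

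First I would treat the first two summands, in which $\gamma$ occurs without any external conformal action. In the first summand $\gamma$ is evaluated with the spectral indices $\lambda^{1\to\hat{i}\to j-1}_{1\to n-1},[\lambda^i_{1\to n-1},\lambda^j_{1\to n-1}],\lambda^{j+1\to q}_{1\to n-1},\lambda^q_n$, and inspection of the expansion (\ref{17}) shows that, in every term of that expansion, the indices attached to the bracket block $[a^i_{1\to n-1},a^j_{1\to n-1}]$ sum to $\lambda^i_{1\to n-1}+\lambda^j_{1\to n-1}$; hence the recorded indices of this cochain again sum to $W$, so that replacing $\gamma$ by $\partial\gamma$ multiplies the summand by $(\partial+W)$. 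In the second summand the last argument is $[{a^i_1}_{\lambda^i_1}{\cdots}_{\lambda^i_{n-2}}{a^i_{n-1}}_{\lambda^i_{n-1}}a^q_n]$ with attached index $\lambda^q_n+\lambda^i_{1\to n-1}$, so the recorded indices sum to $\lambda^q_n+\lambda^i_{1\to n-1}+\sum_{k\ne i}\lambda^k_{1\to n-1}=W$ and the factor $(\partial+W)$ again appears.

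For the third summand $\gamma$ is evaluated with indices $\lambda^{1\to\hat{i}\to q}_{1\to n-1},\lambda^q_n$, whose sum equals $W-\lambda^i_{1\to n-1}$, and then the $(n-1)$-fold conformal action $f(-):={a^i_1}_{\lambda^i_1}{\cdots}_{\lambda^i_{n-2}}{a^i_{n-1}}_{\lambda^i_{n-1}}(-)$ is applied; iterating (M2) yields $f\bigl((\partial+c)v\bigr)=(\partial+c+\lambda^i_{1\to n-1})f(v)$ for any scalar $c$, so $f\bigl((\partial\gamma)_{\cdots}(\cdots)\bigr)=f\bigl((\partial+W-\lambda^i_{1\to n-1})\gamma_{\cdots}(\cdots)\bigr)=(\partial+W)f\bigl(\gamma_{\cdots}(\cdots)\bigr)$, and again $(\partial+W)$ emerges. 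The fourth summand is handled in the same way: $\gamma$ is evaluated with indices $\lambda^{1\to q-1}_{1\to n-1},\lambda^q_i$, of total weight $W-\sum_{m\ne i}\lambda^q_m-\lambda^q_n$, while the external action ${a^q_1}_{\lambda^q_1}\cdots{\hat{a}^q_i}\cdots{a^q_{n-1}}_{\lambda^q_{n-1}}{a^q_n}_{\lambda^q_n}(-)$ raises the $\partial$-weight by $\sum_{m\ne i}\lambda^q_m+\lambda^q_n$ by (M2), producing once more the common factor $(\partial+W)$. Summing the four summands gives $D(\partial\gamma)=(\partial+W)\,D\gamma=\partial(D\gamma)$, i.e.\ $D\partial=\partial D$. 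In particular $D$ sends $\partial\widetilde{C}^q(R,M)$ into $\partial\widetilde{C}^{q+1}(R,M)$, and since $\partial$ is $\mathbb{C}$-linear, $\partial\widetilde{C}^*(R,M)$ is a $D$-stable graded subspace of $\widetilde{C}^*(R,M)$, hence a subcomplex.

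I expect the only point requiring care to be the weight bookkeeping in the first two summands---precisely the assertion that the bracketed block $[a^i_{1\to n-1},a^j_{1\to n-1}]$ (respectively $[{a^i_1}_{\lambda^i_1}{\cdots}_{\lambda^i_{n-2}}{a^i_{n-1}}_{\lambda^i_{n-1}}a^q_n]$), with the spectral indices assigned to it in (\ref{16}), carries total weight exactly $\lambda^i_{1\to n-1}+\lambda^j_{1\to n-1}$ (respectively $\lambda^q_n+\lambda^i_{1\to n-1}$). This is exactly what formula (\ref{17}), the conformal antilinearity (B1), and (C2)/(M2) encode, and it is the same computation already carried out in Step 1 of the proof of Proposition \ref{p5}; so no genuinely new difficulty arises, and the corollary reduces to careful but routine bookkeeping.
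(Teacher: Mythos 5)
Your proof is correct and follows the same route as the paper: the paper likewise extracts the common factor $\partial+\lambda^{1\to q}_{1\to n-1}+\lambda^{q}_n$ directly from the first two summands of (\ref{16}) and recovers it in the last two via (M2). Your version simply spells out the weight bookkeeping that the paper leaves implicit.
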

\begin{proof}
	Consider the expansion of $(D(\partial\gamma))_{\lambda^{1\to q}_{1\to n-1},\lambda^q_n}(a^1_{1\to n-1},\cdots, a^q_{1\to n-1}, a^q_n)$ by (\ref{16}). We can easily extract $\partial+\lambda^{1\to q}_{1\to n-1}+\lambda^{q}_n$ from the first two summands. And $\partial+\lambda^{1\to q}_{1\to n-1}+\lambda^{q}_n$ of the last two summands can be deduced by applying (M2). 
\end{proof}
 Then we give the following definitions.
\begin{definition}
	\begin{em}
		The complex $\widetilde{C}^*(R,M)$ endowed with differential $D$ is called the \textit{basic complex of $R$ with coefficients in $M$}. And the quotient complex $C^*(R,M)=\widetilde{C}^*(R,M)/\partial\widetilde{C}^*(R,M)$ is called the \textit{reduced complex of $R$ with coefficients in $M$}.
	\end{em}
\end{definition}

Recall the complex of an $n$-Lie algebra $\mathfrak{g}$ with coefficients in a $\mathfrak{g}$-representation $(V,\rho)$ (See, e.g.,\cite{lszc}) is composed by 

(i) $q$-cochains $C^q(\mathfrak{g},V):=\{\gamma:\mathfrak{g}^{\wedge^{n-1}}\otimes{\tiny \begin{matrix}
(q-1)\\\cdots
\end{matrix}} \otimes\mathfrak{g}^{\wedge^{n-1}}\wedge \mathfrak{g}\to V\}$,

(ii) differential $D:C^q(\mathfrak{g},V)\to C^{q+1}(\mathfrak{g},V)$ given by\begin{align*}
&(D\gamma)(a^1_{1\to n-1},\cdots, a^q_{1\to n-1}, a^q_n)\notag\\
=&\sum_{1\le i<j\le q}(-1)^{i}\gamma(a^1_{1\to n-1},\cdots,\hat{a}^i_{1\to n-1},\cdots,a^{j-1}_{1\to n-1},[{a^i_{1\to n-1}},a^j_{1\to n-1}],a^{j+1}_{1\to n-1},\cdots,a^q_n)\notag\\
&+\sum_{i=1}^{q}(-1)^{i}\gamma(a^1_{1\to n-1},\cdots,\hat{a}^i_{1\to n-1},\cdots,{a}^q_{1\to n-1},[{a^i_1},{\cdots},{a^i_{n-1}},a^q_n])\notag\\
&+\sum_{i=1}^{q}(-1)^{i+1}\rho({a^i_1},{\cdots},{a^i_{n-1}})(\gamma(a^1_{1\to n-1},\cdots,\hat{a}^i_{1\to n-1},\cdots,{a}^q_{1\to n-1},a^q_n))\notag\\
&+\sum_{i=1}^{n-1}(-1)^{n+q-i+1}\rho({a^q_1},\cdots,{\hat{a}^q_i},\cdots,{a^q_{n-1}}, {a^q_n})(\gamma(a^1_{1\to n-1},\cdots, a^{q-1}_{1\to n-1}, a^q_i)),
\end{align*}where 
$[{a^i_{1\to n-1}},a^j_{1\to n-1}]=\sum_{l=1}^{n-1}\limits ({a^j_1} ,\cdots,{ a^j_{l-1}},[{a^i_1},{\cdots},{a^i_{n-1}},a^j_l],{ a^j_{l+1}},\cdots,{a^j_{n-1}})$.

Let \begin{align*}
\gamma&_{\lambda^{1\to q-1}_{1\to n-1},\lambda^{q-1}_n}(a^1_{1\to n-1},\cdots, a^{q-1}_{1\to n-1}, a^{q-1}_n)\\
&=\sum_{m^{1\to q-1}_{1\to n-1},m^{q-1}_n}{\lambda^1_1}
^{(m^1_1)}\cdots{\lambda^i_j}^{(m^i_j)}\cdots{\lambda^{q-1}_n}^{(m^{q-1}_n)}\gamma_{(m^{1\to q-1}_{1\to n-1},m^{q-1}_n)}(a^1_{1\to n-1},\cdots, a^{q-1}_{1\to n-1}, a^{q-1}_n),
\end{align*} and we can rephrase the differential of the basic complex as\begin{align*}
&(D\gamma)_{(m^{1\to q}_{1\to n-1},m^{q}_n)}(a^1_{1\to n-1},\cdots, a^q_{1\to n-1}, a^q_n)\notag\\
=&\sum_{1\le i<j\le q}(-1)^{i}\gamma_{(m^{1\to\hat{i}\to j-1}_{1\to n-1},[m^i_{1\to n-1},m^j_{1\to n-1}],m^{j+1\to q}_{1\to n-1},m^q_n)}(a^1_{1\to n-1},\cdots,\hat{a}^i_{1\to n-1},\cdots,a^{j-1}_{1\to n-1},\notag
\\&\qquad\qquad\qquad\qquad\qquad\qquad\qquad\qquad\qquad\qquad\qquad[{a^i_{1\to n-1}},a^j_{1\to n-1}],a^{j+1}_{1\to n-1},\cdots,a^q_n)\notag\\
&+\sum_{i=1}^{q}\sum_{k^i_1,\cdots,k^i_{n-1}}(-1)^{i}\tbinom{m^i_1}{k^i_1}\cdots\tbinom{m^i_{n-1}}{k^i_{n-1}}\gamma_{(m^{1\to\hat{i}\to q}_{1\to n-1},m^q_n+m^i_{1\to n-1}-k^i_{1\to n-1})}(a^1_{1\to n-1},\cdots,\hat{a}^i_{1\to n-1},\notag
\\&\qquad\qquad\qquad\qquad\qquad\qquad\qquad\qquad\qquad\qquad\qquad\cdots,{a}^q_{1\to n-1},{a^i_1}_{(k^i_1)}{\cdots}_{(k^i_{n-2})}{a^i_{n-1}}_{(k^i_{n-1})}a^q_n)\notag\\
&+\sum_{i=1}^{q}(-1)^{i+1}{a^i_1}_{(m^i_1)}{\cdots}_{(m^i_{n-2})}{a^i_{n-1}}_{(m^i_{n-1})}(\gamma_{(m^{1\to\hat{i}\to q}_{1\to n-1},m^q_n)}(a^1_{1\to n-1},\cdots,\hat{a}^i_{1\to n-1},\cdots,{a}^q_{1\to n-1},a^q_n))\notag\\
&+\sum_{i=1}^{n-1}(-1)^{n+q-i+1}{a^q_1}_{(m^q_1)}\cdots{\hat{a}^q_i}\cdots{a^q_{n-1}}_{(m^q_{n-1})} {a^q_n}_{(m^q_n)}(\gamma_{(m^{1\to q-1}_{1\to n-1},m^q_i)}(a^1_{1\to n-1},\cdots, a^{q-1}_{1\to n-1}, a^q_i)),
\end{align*} where $[m^i_{1\to n-1},m^j_{1\to n-1}]$ and $[{a^i_{1\to n-1}},a^j_{1\to n-1}]$ correspond to\begin{align*}
&\sum_{l=1}^{n-1}\sum_{k^i_1,\cdots,k^i_{n-1}}\tbinom{m^i_1}{k^i_1}\cdots\tbinom{m^i_{n-1}}{k^i_{n-1}}({a^j_1} _{(m^j_1)}\cdots{ a^j_{l-1}}_{(m^j_{l-1})}\\&\qquad\qquad\qquad({a^i_1}_{(k^i_1)}{\cdots}_{(k^i_{n-2})}{a^i_{n-1}}_{(k^i_{n-1})}a^j_l)_{(m^j_l+m^i_{1\to n-1}-k^i_{1\to n-1})}{ a^j_{l+1}}_{(m^j_{l+1})}\cdots{a^j_{n-1}})_{(m^j_{n-1})}.
\end{align*}  Obviously, the subcomplex, whose $q$-cochain is made of $\gamma_{(0,\cdots,0)}$, is just the complex of $n$-Lie algebra $R$ with coefficients in $M$. Denote the complex of $n$-Lie algebra $(\mathscr{L}ie_p\mbox{ }R)_\_$ with coefficients in $M$ by $C^*((\mathscr{L}ie_p\mbox{ }R)_\_,M)$. $C^q((\mathscr{L}ie_p\mbox{ }R)_\_,M)$ carries a $\mathbb{C}[\partial]$-module structure defined by \begin{align*}
&(\partial\gamma)({a^1_1}_{m^1_{1\ 1\to p}},\cdots,{a^i_j}_{m^i_{j\ 1\to p}},\cdots,{a^{q-1}_n}_{m^{q-1}_{n\ 1\to p}})\\
=&\partial(\gamma({a^1_1}_{m^1_{1\ 1\to p}},\cdots,{a^i_j}_{m^i_{j\ 1\to p}},\cdots,{a^{q-1}_n}_{m^{q-1}_{n\ 1\to p}}))\\
&-\sum_{i,j}\gamma({a^1_1}_{m^1_{1\ 1\to p}},\cdots,{\partial a^i_j}_{m^i_{j\ 1\to p}},\cdots,{a^{q-1}_n}_{m^{q-1}_{n\ 1\to p}}).
\end{align*} And similar to \cite{bkv}, $\widetilde{C}^*(R,M)$ and $C^*((\mathscr{L}ie_p\mbox{ }R)_\_,M)$ are closely related.
\begin{theorem}\label{t1}
	For $\gamma\in \widetilde{C}^*(R,M)$, define 
	$$(\varphi\gamma)({a^1_1}_{m^1_{1\ 1\to p}},\cdots,{a^{q-1}_n}_{m^{q-1}_{n\ 1\to p}})=\gamma_{(\sum m^1_{1\ 1\to p},\cdots,\sum m^{q-1}_{n\ 1\to p})}({a^1_1},\cdots,{a^{q-1}_n}).$$
	Then $\varphi:\widetilde{C}^*(R,M)\to C^*((\mathscr{L}ie_p\mbox{ }R)_\_,M)$ is a monomorphism of cochain complexes compatible with the $\partial$-actions for all $p\ge1$. When $p=1$, $\varphi$ is an isomorphism.
\end{theorem}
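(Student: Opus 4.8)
The plan is to establish successively: that $\varphi\gamma$ is a well-defined cochain on the $n$-Lie algebra $(\mathscr{L}ie_p\ R)_\_$, that $\varphi$ is injective and intertwines the $\partial$-actions, that $\varphi$ is a map of complexes, and that it is surjective when $p=1$. The combinatorial engine throughout is the Vandermonde convolution
\[
\sum_{j_1+\cdots+j_p=k}\binom{m_1}{j_1}\cdots\binom{m_p}{j_p}=\binom{m_1+\cdots+m_p}{k},
\]
which is exactly what makes the index-collapse $m_{1\to p}\mapsto M:=m_1+\cdots+m_p$ used to define $\varphi$ compatible with the $p$-fold binomial sums occurring in the bracket (\ref{7}) of $\mathscr{L}ie_p\ R$ and in the induced $(\mathscr{L}ie_p\ R)_\_$-action on $M$.

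For well-definedness one must check that $\varphi\gamma$ respects the relations $(\partial a)_{m_1,\dots,m_p}=-\sum_l m_l\,a_{m_1,\dots,m_l-1,\dots,m_p}$ that cut out $\mathscr{L}ie_p\ R=\tilde R_p/\tilde\partial_p\tilde R_p$ (torsion elements already vanish in $(\mathscr{L}ie_p\ R)_\_$ by Proposition \ref{p3}(ii), and $\gamma$ kills any torsion element $a$ since $f(\partial)a=0$ forces $f(-\lambda^i_j)\gamma_{\underline\lambda}(\cdots a\cdots)=0$, hence $\gamma_{\underline\lambda}(\cdots a\cdots)=0$, the module $M[\underline\lambda]$ being free over $\mathbb{C}[\underline\lambda]$). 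Reading the conformal antilinearity (B1) on $(m)$-components gives $\gamma_{(\dots,M,\dots)}(\dots,\partial a,\dots)=-M\gamma_{(\dots,M-1,\dots)}(\dots,a,\dots)$, which is precisely the needed compatibility; the skew-symmetry of $\varphi\gamma$ required of an $n$-Lie cochain is inherited from (B2) and the skew-symmetry of (\ref{7}). Injectivity is immediate: if $\varphi\gamma=0$, evaluation on the generators $a^i_j\otimes t_1^{k^i_j}$ with all other torus exponents zero gives $\gamma_{(k^{1\to q-1}_{1\to n-1},k^{q-1}_n)}=0$ for every choice of the $k$'s, so $\gamma=0$. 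For $\partial$-equivariance, $(\partial\gamma)_{\underline\lambda}=(\partial+\sum\underline\lambda)\gamma_{\underline\lambda}$ contributes $\partial(\gamma_{(m)})+\sum_{\text{slots }s}M_s\gamma_{(\dots,M_s-1,\dots)}$ on $(m)$-components, and expanding $(\partial a)_{m_{1\to p}}$ as above rewrites this as $\partial\bigl((\varphi\gamma)(\cdots)\bigr)-\sum_s(\varphi\gamma)(\cdots,\partial(\cdot),\cdots)$, which is the $\partial$-action on $C^*((\mathscr{L}ie_p\ R)_\_,M)$.

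The heart of the matter is $\varphi D=D\varphi$. I would evaluate both sides on a tuple $(a^1_1)_{m^1_{1\,1\to p}},\dots,(a^q_n)_{m^q_{n\,1\to p}}$, set $M^i_j:=\sum_l m^i_{j\,l}$, expand $\varphi(D\gamma)$ by the $(m)$-component form of the conformal differential displayed just before the theorem (with these indices $M^i_j$), and expand $D(\varphi\gamma)$ by the $n$-Lie algebra differential, substituting the bracket (\ref{7}) for $[\,\cdot\,]$ and the induced action $a^1_{m^1}\cdots a^{n-1}_{m^{n-1}}v=a^1_{(\sum m^1)}\cdots a^{n-1}_{(\sum m^{n-1})}v$ for $\rho$. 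The two expansions then match summand by summand: for the two summands carrying an internal bracket of the $a$'s, one performs the inner sums over the split indices $j^k_l$ at fixed $\sum_l j^k_l$ by Vandermonde and identifies the result with the conformal $(m)$-products — here (\ref{9}) is the explicit dictionary between a conformal product with summed indices and a bracket of $\mathscr{L}ie_p\ R$, which in particular aligns the bracket-of-brackets contributions and the index labels $[\lambda^i_{1\to n-1},\lambda^j_{1\to n-1}]$ — while the two representation summands match on the nose since $\rho$ on $M$ sees only the collapsed index.

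Finally, for $p=1$ the collapse $m_1\mapsto m_1$ is the identity, so $\delta\mapsto\gamma$ with $\gamma_{(m^1_1,\dots,m^{q-1}_n)}(a^1_1,\dots,a^{q-1}_n):=\delta\bigl((a^1_1)_{m^1_1},\dots,(a^{q-1}_n)_{m^{q-1}_n}\bigr)$ is a two-sided inverse of $\varphi$ once it is checked to land in $\widetilde C^q(R,M)$: (B1) and (B2) for $\gamma$ follow from $(\partial a)_m=-m\,a_{m-1}$ and the skew-symmetry of $\delta$, and each $\gamma_{\underline\lambda}$ is genuinely a polynomial because a continuous cochain on the linearly compact $n$-Lie algebra $(\mathscr{L}ie_1\ R)_\_$ factors through finite-dimensional quotients, hence vanishes once any torus exponent is large — the one place where linear compactness is used. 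I expect the main obstacle to be the term-by-term bookkeeping in $\varphi D=D\varphi$, keeping the hats, the split indices $j^k_l$, and the sign $(-1)^{n+q-i+1}$ on the final block consistent while the Vandermonde collapse is carried out; this is labour rather than a genuine conceptual difficulty.
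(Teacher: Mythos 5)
Your proposal is correct and follows essentially the same route as the paper: collapse the multi-indices via $m_{1\to p}\mapsto\sum_l m_l$, check injectivity on generators supported on a single torus variable, verify $\partial$-compatibility by direct computation, and establish $\varphi D=D\varphi$ summand-by-summand using the Vandermonde convolution $\binom{m_1+\cdots+m_p}{k}=\sum_{k_1+\cdots+k_p=k}\binom{m_1}{k_1}\cdots\binom{m_p}{k_p}$, which is exactly the identity the paper isolates. Your extra attention to well-definedness on the quotient $\tilde R_p/\tilde\partial_p\tilde R_p$ and to why the inverse at $p=1$ lands in $\widetilde C^q(R,M)$ (polynomiality via continuity on the linearly compact algebra) fills in details the paper passes over silently, but does not change the method.
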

\begin{proof}
	(B2) implies $\varphi\gamma\in C^*((\mathscr{L}ie_p\mbox{ }R)_\_,M)$. For $\gamma_1,\gamma_2\in\widetilde{C}^q(R,M)$, it is easy to see that $\varphi\gamma_1=\varphi\gamma_2$ if and only if $\gamma_1=\gamma_2$. $\gamma$ and $\varphi\gamma$ are in one-to-one correspondence when $p=1$. Since
	\begin{align*}
	&(\partial(\varphi\gamma))({a^1_1}_{m^1_{1\ 1\to p}},\cdots,{a^{q-1}_n}_{m^{q-1}_{n\ 1\to p}})\\
	=&\partial((\varphi\gamma)({a^1_1}_{m^1_{1\ 1\to p}},\cdots,{a^i_j}_{m^i_{j\ 1\to p}},\cdots,{a^{q-1}_n}_{m^{q-1}_{n\ 1\to p}}))\\
	&-\sum_{i,j}(\varphi\gamma)({a^1_1}_{m^1_{1\ 1\to p}},\cdots,{\partial a^i_j}_{m^i_{j\ 1\to p}},\cdots,{a^{q-1}_n}_{m^{q-1}_{n\ 1\to p}})\\
	=&\partial((\varphi\gamma)({a^1_1}_{m^1_{1\ 1\to p}},\cdots,{a^i_j}_{m^i_{j\ 1\to p}},\cdots,{a^{q-1}_n}_{m^{q-1}_{n\ 1\to p}}))\\
	&+\sum_{i,j,k}m^i_{j\ k}(\varphi\gamma)({a^1_1}_{m^1_{1\ 1\to p}},\cdots,{a^i_j}_{m^i_{j\ 1},\cdots,m^i_{j\ k}-1,\cdots,m^i_{j\ p}},\cdots,{a^{q-1}_n}_{m^{q-1}_{n\ 1\to p}})\\
	=&(\partial\gamma)_{(\sum m^1_{1\ 1\to p},\cdots,\sum m^{q-1}_{n\ 1\to p})}({a^1_1},\cdots,{a^{q-1}_n})\\
	=&(\varphi(\partial\gamma))({a^1_1}_{m^1_{1\ 1\to p}},\cdots,{a^{q-1}_n}_{m^{q-1}_{n\ 1\to p}}),
	\end{align*} $\varphi$ commutes with $\partial$. And $\varphi D\gamma=D\varphi\gamma$ can be checked by summands of the expansions. Take the $i$th term of the second summands as an example.
	
Comparing the coefficients of $x^k$ in the two sides of $(x+1)^{m_1+\cdots+m_p}=(x+1)^{m_1}\cdots(x+1)^{m_p}$, we have $$\tbinom{m_1+\cdots+m_p}{k}=\sum_{k_1+\cdots+k_p=k}\tbinom{m_1}{k_1}\cdots\tbinom{m_p}{k_p}.$$ Hence the $i$th term of $D\varphi \gamma$-action on $ ({a^1_1}_{m^1_{1\ 1\to p}},\cdots,{a^{q}_n}_{m^{q}_{n\ 1\to p}}) $ can be transformed into that of $\varphi D\gamma$ in the following way.\begin{align*}
	&(-1)^{i}(\varphi\gamma)({a^1_1}_{m^1_{1\ 1\to p}},\cdots,\hat{a}^i_{1\to n-1},\cdots,{a^{q}_{n-1}}_{m^{q}_{n-1\ 1\to p}},[{a^i_1}_{m^i_{1\ 1\to p}},{\cdots},{a^i_{n-1}}_{m^i_{n-1\ 1\to p}},{a^{q}_n}_{m^{q}_{n\ 1\to p}}])\\
	=&(-1)^{i}\sum_{k^i_{1\to n-1\ 1\to p}}\tbinom{m^i_{1\ 1}}{k^i_{1\ 1}}\cdots\tbinom{m^i_{n-1\ p}}{k^i_{n-1\ p}}\gamma_{(m^{1\to\hat{i}\to q}_{1\to n-1},m^q_n+m^i_{1\to n-1}-k^i_{1\to n-1\ 1\to p})}(a^1_{1\to n-1},\cdots,\hat{a}^i_{1\to n-1},\notag
	\\&\qquad\qquad\cdots,{a}^q_{1\to n-1},{a^i_1}_{(\sum k^i_{1\ 1\to p})}{\cdots}{a^i_{n-1}}_{(\sum k^i_{n-1\ 1\to p})}a^q_n)\\
	=&(-1)^{i}\sum_{k^i_1,\cdots,k^i_{n-1}}\sum_{\sum k^i_{1\ 1\to p}=k^i_1,\cdots,\sum k^i_{n-1\ 1\to p}=k^i_{n-1}}\tbinom{m^i_{1\ 1}}{k^i_{1\ 1}}\cdots\tbinom{m^i_{1\ p}}{k^i_{1\ p}}\cdots\tbinom{m^i_{n-1\ 1}}{k^i_{n-1\ 1}}\cdots\tbinom{m^i_{n-1\ p}}{k^i_{n-1\ p}}\\&\gamma_{(m^{1\to\hat{i}\to q}_{1\to n-1},m^q_n+m^i_{1\to n-1}-k^i_{1\to n-1})}(a^1_{1\to n-1},\cdots,\hat{a}^i_{1\to n-1},\notag
	\cdots,{a}^q_{1\to n-1},{a^i_1}_{(k^i_1)}{\cdots}_{(k^i_{n-2})}{a^i_{n-1}}_{(k^i_{n-1})}a^q_n)\\
	=&\sum_{k^i_1,\cdots,k^i_{n-1}}(-1)^{i}\tbinom{m^i_1}{k^i_1}\cdots\tbinom{m^i_{n-1}}{k^i_{n-1}}\gamma_{(m^{1\to\hat{i}\to q}_{1\to n-1},m^q_n+m^i_{1\to n-1}-k^i_{1\to n-1})}(a^1_{1\to n-1},\cdots,\hat{a}^i_{1\to n-1},\cdots,\notag
	\\&\qquad\qquad{a}^q_{1\to n-1},{a^i_1}_{(k^i_1)}{\cdots}_{(k^i_{n-2})}{a^i_{n-1}}_{(k^i_{n-1})}a^q_n).
	\end{align*}
	
\end{proof}
As a result, we obtain an isomorphism on cohomology $\widetilde{H}^*(R,M)\simeq H^*((\mathscr{L}ie_1\mbox{ }R)_\_,M)$.
\begin{remark}
	From the proof of Proposition \ref{p5}, one can find there are more than one cohomology complex with the same $q$-cochains in the basic complex. For example, if we only reserve the first summand of (\ref{16}), $D$ is still a well-defined differential. These complexes of $n$-Lie conformal algebra $R$ are matched with corresponding complexes of $n$-Lie algebra $(\mathscr{L}ie_p\mbox{ }R)_\_$ introduced in \cite{ai1} in the same way to Theorem \ref{t1}.
\end{remark}

	\end{document}